\documentclass[a4paper,12pt]{article}
\usepackage[utf8]{inputenc}
\usepackage[T1]{fontenc}
\usepackage[a4paper,includefoot,margin=2.54cm]{geometry}
\usepackage{amssymb}
\usepackage{amsmath}
\usepackage{amsthm}
\newsavebox\mcAcontent
\savebox\mcAcontent{$\mathcal{A}$}
\newcommand\mcA{\usebox{\mcAcontent}}
\usepackage{newtxtext}\usepackage[smallerops]{newtxmath}
\usepackage{mathrsfs}
\usepackage{url}

\usepackage{graphicx}   
\usepackage{mathtools}
\usepackage[dvipsnames]{xcolor}

\renewcommand{\Re}{\mathrm{Re}\,}
\renewcommand{\Im}{\mathrm{Im}\,}
\renewcommand{\epsilon}{\varepsilon}
\renewcommand{\phi}{\varphi}
\renewcommand{\theta}{\vartheta}
\renewcommand{\rho}{\varrho}
\newcommand{\rr}{\mathbb R}  
\newcommand{\rp}{\mathbb R _+}
\renewcommand{\P}{P}
\newcommand{\E}{E}
\newcommand{\eu}{\mathrm{eu}}
\newcommand{\am}{\mathrm{am}}
\newcommand{\comp}{C}
\newcommand{\F}{\mathscr F}
\newcommand{\B}{B}\newcommand{\M}{M}

\newcommand{\cl}{\mathrm{cl}}
\newcommand{\nn}{\mathbb N}  
\newcommand{\cc}{\mathbb C}
\newcommand{\nno}{\phi}
\newcommand{\nnormal}{N}
\newcommand{\T}{\mathbf{T}}
\newcommand{\U}{\mathbf{U}}
\newcommand{\I}{\mathscr{I}}
\newcommand{\lsc}{\mathrm{lsc}}

\newcommand{\inner}{\mathop\textnormal{int}}
\newcommand{\ctheta}{\breve\theta}
\newcommand{\cx}{\breve x}

\newtheorem{theorem}{Theorem}[section]
\newtheorem{proposition}[theorem]{Proposition}
\newtheorem{lemma}[theorem]{Lemma}
\newtheorem{@definition}[theorem]{Definition}
\newenvironment{definition}{\begin{@definition}\rm}{\end{@definition}}
\newtheorem{@example}[theorem]{Example}
\newenvironment{example}{\begin{@example}\rm}{\end{@example}}

\numberwithin{equation}{section}

\newcommand{\represents}[1]{\stackrel{#1}{\longrightarrow}}

\begin{document}

\title{Are American options European after all?}
\author{S\"oren Christensen\footnote{Mathematisches Seminar, Christian-Albrechts-Universit\"at zu Kiel, Kiel, Germany, email: christensen@math.uni-kiel.de}\and
Jan Kallsen\footnote{Mathematisches Seminar, Christian-Albrechts-Universit\"at zu Kiel, Kiel, Germany, email: kallsen@math.uni-kiel.de}\and
Matthias Lenga\footnote{Philips Research Europe, Hamburg, Germany, email: MatthiasLenga@posteo.de}}
\date{}

\maketitle

\begin{abstract}
We call a given American option \emph{representable} if there exists a European claim which dominates the American payoff 
at any time and such that the values of the two options coincide in the continuation region of the American option.
This concept has interesting implications from a probabilistic, analytic, financial, and numeric point of view.
Relying on methods from \cite{jourdain.martini.01,jourdain.martini.02,christensen.14} and convex duality, we
make a first step towards verifying representability of American options.

\emph{Keywords:} optimal stopping, representable American option, embedded American option,
cheapest dominating European option, free boundary problem, duality

\emph{MSC (2010) classification:} 60G40, 91G20, 46A20
\end{abstract}

\section{Introduction}\label{s:SEC_intro}
This paper is concerned with reducing the valuation of American options to the simpler problem
of computing prices of European options whose payoff is not path dependent.
For ease of exposition we consider the standard risk-neutral Black-Scholes setting of a deterministic bond and a stock
whose price processes $B$ resp.\ $S=e^X$ evolve according to
\begin{align}\label{e:EQ_BS_market_from_Intro}
\begin{aligned}
dB_t &= r B_t dt, \quad  B_0 = 1,\\
dX_t &= \left( r - \frac{\sigma^2}{2} \right) dt + \sigma dW_t,
\end{aligned}
\end{align}
with parameters $r\geq 0$, $\sigma>0$ and a Wiener process $W$.
Relative to the probability measure $\P_x$, the return process $X$
is assumed to start in $X_0=x$ almost surely.
We denote the fair value of a European option with payoff $f(X_T)$
for a payoff function $f:\rr\to\rr_+$,
time to maturity $T\in\rp$ and initial logarithmic stock price $x$ as 
$v_{\eu,f}(T,x)$, i.e.\ 
\begin{equation}\label{e:veu}
v_{\eu,f}(T,x):= \E_x\bigl( e^{-rT} f(X_T)\bigr).
\end{equation}
Similarly, for an upper semi-continuous payoff function $g:\rr\to\rr_+$ satisfying the integrability condition
\begin{equation}\label{eq:integrability_condition}
\E_x\biggl( \sup_{t \in [0,T]} g(X_t)\biggr)< \infty,
\end{equation}
the fair value of an American claim with payoff process $Z = g(X)$,
time to maturity $T\in\rp$,
and initial stock price $x$ is written as $v_{\am,g}(T,x)$, i.e.\ 
\begin{equation}\label{e:vam}
v_{\am,g}(T,x):= \sup_{\tau \in \mathscr{T}_{[0,T]} }\E_x(e^{-r\tau} g(X_\tau)),
\end{equation}
where $\mathscr{T}_{[0,T]}$ 
denotes the set of $[0,T]$-valued stopping times. 
We write
\begin{equation}\label{e:cont}
C_T:= \bigl\{(\theta,x) \in [0,T]\times \rr: v_{\am,g}(\theta,x) > g(x) \bigr\}
\end{equation}
and 
\[C_T^\comp:=([0,T]\times \rr)\setminus C_T=\bigl\{(\theta,x) \in \rr_+\times \rr: v_{\am,g}(\theta,x) = g(x) \bigr\}\] 
for the continuation region and the stopping region of the American claim, respectively. 

Fix a time horizon $T$ and an initial log price $X_0=x_0$ such that $(T,x_0)$ is contained in $C_T$. 
For this introductory section let us assume that $C_T$ is a connected set.
We say that a European payoff function $f:\rr\to\rr_+$ \emph{represents} the American payoff function $g:\rr\to\rr_+$ 
if the value of $f$ dominates the value of $g$ everywhere and the two coincide in the continuation
region of the American claim, i.e.\
$v_{\eu,f}(\theta,x) \geq v_{\am,g}(\theta,x)$ 
for all $(\theta,x) \in [0,T] \times \rr$
and $v_{\am,g}(\theta,x) = v_{\eu,f}(\theta,x)$ holds for all $(\theta,x) \in C$.

The main question in this paper is the following:
given an American payoff function $g$, is there a European payoff function $f$ representing $g$?
In this case we call $g$ \emph{representable}.
If representability holds, this has several interesting consequences.
\begin{itemize}
 \item The American value function can be computed efficiently by means of linear programming, as is explained below.
 \item The early exercise boundary can be obtained numerically at low computational costs, see \cite[Section 3.4]{lenga.17}.
 \item  A buy-and-hold position in the European option with time-$T$ payoff $f(X_T)$ hedges the American claim perfectly.
 Put differently, the American option can be hedged statically with a portfolio of calls/puts 
 that does not cost more than the American claim itself. Here, \emph{portfolio} is to be understood in the limiting sense
 of e.g.\ \cite{nachman.88}.
 \item In the continuation region, the difference 
 $v_{\am,g}-v_{\eu,g}$ is the fair value of a European payoff with time-$T$ payoff $f(X_T)-g(X_T)$.
 Put differently, the early exercise premium of the American option can be interpreted as the price of a 
 European claim with a specific payoff profile.
 \item The Snell envelope corresponding to the American option allows for a Markovian-style decomposition, cf.\
 (\ref{e:pseudodoob}) below.
 \item Some analytical properties of the early exercise curve can be obtained easily.
 Indeed, it coincides with the boundary of the set 
 $\{(\theta,x) \in (0,T]\times \rr: v_{\eu,f}(\theta,x) = g(x) \}$. 
 This allows to derive smoothness of the early exercise curve from 
 the analyticity of $v_{\eu,f}$ and the implicit function theorem.
In the same vein, certain analyticity properties of the European payoff function $v_{\eu,f}$ transfer to the American payoff function $v_{\am,g}$.
 \item The solution of the free boundary problem associated to the American option can be extended
 to a solution of the Black-Scholes partial differential equation beyond the free boundary.
\end{itemize}

On top of representability of a given option one may ask 
how to obtain the representing European payoff, at least numerically.
Moreover, are possibly all American options representable?
Or, if this is not the case, do  representable options exist at all -- except for the obvious case where
 early exercise is suboptimal and hence
$g$ itself represents $g$?

The concept of representability is not studied here for the first time. 
It was considered in two seminal papers by Jourdain and Martini, which have not yet received the 
attention they deserve.
In \cite{jourdain.martini.01} it is shown that many European payoffs represent some American payoff, which is obtained in a natural way.
Indeed, given some European payoff function $f$, they define
an American payoff function $\am_{T}(f):\rr\to\rr_+$ as 
\begin{equation}\label{DEF_embedded_am_opt}
\am_{T}(f)(x) := \inf_{\theta \in [0,T]} v_{\eu, f}(\theta,x),
\end{equation}
from now on called the \emph{embedded American option (EAO)} associated with $f$.
If the infimum in (\ref{DEF_embedded_am_opt}) is attained in a connected curve,
$f$ represents its embedded American option $\am_{T}(f)$, cf.\ \cite[Theorem 5]{jourdain.martini.01}.
Jourdain and Martini provide an explicit example where this is the case. On the other hand, they show that
embedded American payoff functions satisfy certain analyticity properties, cf.\ \cite[Proposition 16]{jourdain.martini.01}.
From their results we conclude that representable options exist but that not all American payoff functions are representable.

In their follow-up article \cite{jourdain.martini.02} they study the American put option in detail.
They show that it cannot be represented by any of a seemingly general and reasonable candidate family of 
European claims. This suggests that this particular option may not be representable.
Summing up, Jourdain and Martini provide a way to obtain an American payoff function $g$ that is
represented by a given European claim $f$. 
Our question here is rather the converse: given $g$, is there a representing European claim $f$, and how can it be
obtained?

In order to tackle these problems, we make use of the approach in \cite{christensen.14}.
Fix an American payoff function
$g:\rr\to\rr_+$.
The key contribution of \cite{christensen.14} is the linear optimisation problem
\begin{equation}\label{e:CDEO_general_opt_problem}
\begin{aligned}
& \text{minimise}   & & v_{\eu,f}(T,x_0) \\
& \text{subject to} & & f:\rr\to\rr_+ \text{ measurable and}\\
& & & v_{\eu,f}(\theta,x) \geq g(x) \text{ for all } (\theta,x) \in [0,T] \times \rr.
\end{aligned}
\end{equation}
We call the minimiser $f$ of (\ref{e:CDEO_general_opt_problem})
\emph{cheapest dominating European option (CDEO)} of $g$ relative to $(T,x_0)$.
The linear problem (\ref{e:CDEO_general_opt_problem}) can be solved efficiently by numerical methods, cf.\ \cite{christensen.14} for details.
It is easy to see that the fair price of a CDEO $f$ provides an upper bound to the value of the given American claim $g$. 

However, in \cite{christensen.14} it remains open how large the gap between the two actually is.
While there is a priori no reason why the two should coincide, numerical studies in \cite{christensen.14} 
indicate that the difference seems to be small.
In the present paper, we use the CDEO 
as a candidate which may generate the desired American payoff $g$.
Indeed, if $g$ is representable at all, it must be represented by its CDEO.
This also answers the question how to obtain a representing European payoff function numerically if it exists at all.

It is important to distinguish the minimisation problem (\ref{e:CDEO_general_opt_problem}) and 
more generally the present study from
the well-known duality approaches put forward by 
\cite{rogers.02,davis.karatzas.94,haugh.kogan.04}. 
Consider again an American payoff function $g:\rr \to\rr_+$ leading to the discounted exercise process $\widehat Z_t:=e^{-rt}g(X_t)$.
From \cite{rogers.02} we know that
\begin{equation}\label{e:rogers}
 v_{\am, g}(T,x_0) = \inf\left\{\E_{x_0}\biggl( \sup_{t\in[0,T]}(  \widehat{Z}_t - M_t) \biggr): M \text{ martingale with }M_0=0\right\}.
\end{equation}
Indeed, the inequality $\leq$ is obvious because
\[\E_{x_0}( \widehat{Z}_\tau)=\E_{x_0}(  \widehat{Z}_\tau - M_\tau)
\leq \E_{x_0}\biggl( \sup_{t\in[0,T]}(  \widehat{Z}_t - M_t) \biggr)\]
for any $[0,T]$-valued stopping time $\tau$ and any martingale $M$ with $M_0=0$.
For the converse inequality consider 
the Doob-Meyer decomposition 
\begin{equation}\label{e:doob}
V=V_0+M^V-A^V
\end{equation}
of the Snell envelope $V$ of the discounted exercise process $\widehat Z$,
i.e.\ $M^V$ is a martingale and $A^V$ an increasing process with $M^V_0=0=A^V_0$.
Since 
\[\widehat Z_t-M^V_t\leq V_t-M^V_t= V_0-A^V_t\leq V_0=v_{\am, g}(T,x_0)\]
for any $t\in[0,T]$,
we conclude that the inequality $\geq$ holds in (\ref{e:rogers}) as well.

Similarly, observe that
\begin{multline}\label{e:rogers2}
 v_{\am, g}(T,x_0) 
 = \inf\bigl\{\E_{x_0}(Y) : Y\geq0 \text{ random variable with }\widehat Z_t\leq \E_{x_0}(Y|\F_t), t\in[0,T]\bigr\}.
\end{multline}
Again the inequality $\leq$ is obvious because any martingale dominating $\widehat Z$ is an upper bound
of the discounted American option price process. The converse inequality $\geq$ follows from choosing $Y=V_0+M^V_T$,
where $V$ and $M^V$ are defined as above.

The linear problem (\ref{e:CDEO_general_opt_problem}) can be rephrased as
\begin{equation}\label{e:soeren}
\inf\bigl\{\E_{x_0}(e^{-rT}f(X_T))  : f\!:\!\rr\!\to\!\rr_+ \text{ with }\widehat Z_t\leq \E_{x_0}(e^{-rT}f(X_T)|\F_t), t\in[0,T]\bigr\},
\end{equation}
which seems almost identical to the right-hand side of (\ref{e:rogers2}).
However, the dominating European payoff $Y$ in (\ref{e:rogers2}) may well be path dependent,
which is not the case in (\ref{e:soeren}).
And indeed, it is easy to see that the terminal value $V_0+M^V_T$ cannot typically written as a function
of $X_T$, e.g.\ in the case of an American put. Therefore, the identities
(\ref{e:rogers}) and (\ref{e:rogers2}) do not help in deciding whether the value of the CDEO in the sense of 
(\ref{e:CDEO_general_opt_problem}) coincides
with the price of the given American option $g$.

From a different perspective, one may note that the martingale 
in the Doob-Meyer decomposition (\ref{e:doob}) is not the only one that leads to
optimal choices in (\ref{e:rogers}) and (\ref{e:rogers2}).
In fact, we could replace $M^V$ by $\widetilde M$ in any decomposition of the form
\begin{equation}\label{e:pseudodoob}
 V=V_0+\widetilde M-\widetilde A
\end{equation}
with some martingale $\widetilde M$ and some nonnegative process $\widetilde A$ satisfying $\widetilde M_0=0=\widetilde A_0$.
Contrary to the unique decomposition (\ref{e:doob}) we do not require $\widetilde A$ to be increasing.
As noted above, (\ref{e:soeren}) coincides with the American option price (\ref{e:rogers2})
if we can choose $\widetilde M$ such that $V_0+\widetilde M_T=e^{-rT}f(X_T)$ for some deterministic function $f$.
In this case, the decomposition (\ref{e:pseudodoob}) is of \emph{Markovian style}  in the sense that
both $\widetilde M_t$ and $\widetilde A_t$ are functions of $t$ and $X_t$ at any time $t$.
Hence the issue of representability is linked to the existence of Markovian-style decompositions
(\ref{e:pseudodoob}) of the Snell envelope corresponding to the optimal stopping problem.

The present study serves different purposes.
In Section \ref{s:SEC_Representable_options} 
we establish the link between embedded American options from \cite{jourdain.martini.01}, cheapest dominating 
European options from \cite{christensen.14}, and representability. By providing an example, 
we show that representability may depend on the time horizon $T$, cf.\ Section \ref{su:subsection_am_opt_generated_by_put}.
The main contribution of this paper is contained in Section \ref{s:section_verification_theorem}.
Firstly, we establish the existence of CDEOs in a distributional sense
for sufficiently regular American payoff functions $g$.
Secondly and more importantly, we provide a sufficient criterion for representability of a given American claim.
The assumptions of this result depend on qualitative properties of the corresponding CDEO.
Numerical computations suggest that they are satisfied for the American put, cf.\ Section \ref{s:SEC_Example_am_put}.

Let us fix some notation that is used in the paper. $\Vert \mu  \Vert$ stands for the total variation of a signed measure $\mu$.
The set of signed measures of finite variation on a measurable space $(S,\mathscr S)$ is written as $\M(S)$. 
The vector spaces of real-valued continuous functions and continuous functions vanishing at infinity on $S$ are denoted by $C(S)$ and $C_0(S)$, respectively.
They are Banach spaces with respect to the norm $\Vert\cdot\Vert_\infty$ which generates the topology of uniform convergence $\mathscr{T}_{\mathrm{uc}}$.
By $\M^+(S), C^+(S), C_0^+(S)$ we denote the cones of nonnegative elements in the respective spaces.
The closure and the interior of a set $M$ in some topological space are denoted by $\cl\, M$ and $\inner M$.
We write $\partial M := \cl M \setminus \inner M$ for the boundary of the set.
$B_V(x,r) := \{ v \in V :\Vert v -x \Vert \leq r\}$ denotes the ball with radius $r$ around $x$ in a normed space $V$. If the space is obvious, we simply write
$B(x,r)$.
The Dirac measure in $x$ is denoted as $\delta_x$. 
Moreover we write $\phi(\mu,\sigma^2, \cdot)$ for the probability density function of the normal distribution $N(\mu,\sigma^2)$  with mean $\mu$ and variance $\sigma^2$.
The cumulative distribution function of $N(0,1)$ is denoted  as $\Phi$.
The gradient of a real- or complex-valued function $f$ is denoted as $Df$ and its partial derivatives with respect to its first, second, $d$th argument are written as $D_1f,D_2f,D_df$ etc.
The convex conjugate and the biconjugate in the sense of \cite{rockafellar.74} of a function $v$ are denoted by $v^*$ and $v^{**}$, respectively.

\section{Representable options}\label{s:SEC_Representable_options} 
In this section we derive some general results about embedded, cheapest dominating, and representable options.
For ease of exposition, we focus on the univariate Black-Scholes market \eqref{e:EQ_BS_market_from_Intro}.
Moreover,
we use the notation (\ref{e:veu}, \ref{e:vam}) from Section \ref{s:SEC_intro} for the fair values of European and American options.

\subsection{Embedded American and cheapest dominating European options}\label{s:EAO_CDEO}
Fix $T\in[0,\infty]$.  Let $f:\rr\to\rp$ denote a measurable European payoff function with
$v_{\eu, f}(\theta,x)<\infty$ for all $(\theta,x)\in[0,T]\times \rr$ with $\theta<\infty$
and $g:\rr\to\rp$ an upper semi-continuous American payoff function
satisfying \eqref{eq:integrability_condition}.
Let us recall the following key notions from the introduction.
\begin{definition}\label{d:DEF_EAO_CDEO}
\begin{enumerate}
 \item The \emph{embedded American option (EAO)} of $f$ up to $T$ is defined as the payoff function $\am_{T}(f):\rr\to\rr_+$ given by  
		\begin{equation}\label{e:EAO}
		\am_{T}(f)(x) := \inf\big\{v_{\eu, f}(\theta,x): \theta \in [0,T], \theta<\infty\big\},\quad x\in\rr.
		\end{equation}
 \item  We say that $f$ \emph{superreplicates} or \emph{dominates} $g$ up to $T$ 
 if $v_{\eu,f}(\theta,x) \geq g(x)$ holds for all finite $\theta \in [0, T]$ and $x\in \rr$.
 \item If $T<\infty$ and an initial logarithmic stock price $X_0= x_0$ is given, 
 we call a European payoff function $f^\star$ \emph{cheapest dominating European option (CDEO)} of $g$ relative to $(T,x_0)$
 if $f^\star$ superreplicates $g$ up to $T$ and $v_{\eu,f^\star}(T,x_0) \leq v_{\eu,f}(T,x_0)$ holds for all European payoff 
 functions $f$ dominating $g$ up to time $T$.
 The set of all such CDEOs is denoted as  $\eu_{T,x_0}(g)$.
 We write $\eu_{T,x_0}(g)=f^\star$ if there is a unique CDEO $f^\star$, i.e.\ if $\eu_{T,x_0}(g)=\{f^\star\}$. Here we identify functions which differ only on a set of zero Lebesgue measure.
\end{enumerate}
\end{definition}

We state some first results.
\begin{proposition}\label{p:PROPOSITION_1}
\begin{enumerate}
 \item  The set $\eu_{T,x_0}(g)$ is convex. 
 \item If $f$ superreplicates $g$ up to time $T$, we have
 \begin{equation}\label{e:glessf}
 g(x) \leq v_{\am,g}(\theta,x) \leq v_{\eu,f}(\theta,x)  
 \end{equation}
 for all finite $\theta\in[0,T]$ and all $x\in\rr$
 and in particular
 \begin{equation}\label{e:glessf2} 
  g \leq \am_{T}(f) \leq f.
 \end{equation}
\item 
$g\leq \am_{T}(\eu_{T,x_0}(g))$
in the sense that $g$ is dominated by any element of the right-hand side.  
\item $\am_{T}(f)(x)$ is decreasing in $T$.
\item $\am_{T}(f)(x)$ is  increasing in $f$.
\item If $f$ is upper semi-continuous, so is $x\mapsto\am_{T}(f)(x)$.
\end{enumerate}
\end{proposition}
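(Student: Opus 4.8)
The plan is to dispatch items (1), (3), (4), (5) as immediate structural observations, to prove (2) by an optional-sampling argument against the European price process, and to treat (6)---the only step requiring real work---by a convolution/dominated-convergence argument.

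For (1): if $f_0,f_1\in\eu_{T,x_0}(g)$ and $\lambda\in[0,1]$, set $f_\lambda:=\lambda f_1+(1-\lambda)f_0\geq 0$. By \eqref{e:veu} the map $f\mapsto v_{\eu,f}$ is linear, so $v_{\eu,f_\lambda}=\lambda v_{\eu,f_1}+(1-\lambda)v_{\eu,f_0}$; hence $f_\lambda$ still dominates $g$ up to $T$ and $v_{\eu,f_\lambda}(T,x_0)$ equals the common minimal value, so $f_\lambda\in\eu_{T,x_0}(g)$. For (4): enlarging $T$ only enlarges the set $\{\theta\in[0,T]:\theta<\infty\}$ over which the infimum in \eqref{e:EAO} runs, so $\am_T(f)(x)$ decreases in $T$. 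For (5): $f\leq\tilde f$ forces $v_{\eu,f}(\theta,x)\leq v_{\eu,\tilde f}(\theta,x)$ for every $\theta$ by \eqref{e:veu}, and taking infima over $\theta$ gives $\am_T(f)\leq\am_T(\tilde f)$.

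For (2), the left inequality in \eqref{e:glessf} is the choice $\tau\equiv 0$ in \eqref{e:vam}. For the right one, fix a finite $\theta\in[0,T]$ and consider the discounted price process $M_t:=e^{-rt}v_{\eu,f}(\theta-t,X_t)$, $t\in[0,\theta]$; because $X$ has independent, stationary increments, $M$ equals $\E_x(e^{-r\theta}f(X_\theta)\mid\F_t)$, hence is a nonnegative martingale closed by $M_\theta=e^{-r\theta}f(X_\theta)$ with $M_0=v_{\eu,f}(\theta,x)<\infty$. For any $\tau\in\mathscr T_{[0,\theta]}$, superreplication of $g$ by $f$ gives $e^{-r\tau}g(X_\tau)\leq e^{-r\tau}v_{\eu,f}(\theta-\tau,X_\tau)=M_\tau$ since $\theta-\tau\in[0,\theta]\subseteq[0,T]$, and optional sampling yields $\E_x(e^{-r\tau}g(X_\tau))\leq\E_x(M_\tau)=M_0=v_{\eu,f}(\theta,x)$; the supremum over $\tau$ proves \eqref{e:glessf}. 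Then \eqref{e:glessf2} follows by taking, on the one hand, the infimum over $\theta$ in $g(x)\leq v_{\eu,f}(\theta,x)$ and, on the other, the value $\theta=0$, for which $v_{\eu,f}(0,x)=f(x)$. Item (3) is now a corollary: every $f^\star\in\eu_{T,x_0}(g)$ dominates $g$ up to $T$ by the definition of a CDEO, so \eqref{e:glessf2} applied to $f^\star$ gives $g\leq\am_T(f^\star)$.

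It remains to prove (6), and here is where the work lies. A pointwise infimum of upper semi-continuous functions is upper semi-continuous, so it suffices to show that $x\mapsto v_{\eu,f}(\theta,x)$ is upper semi-continuous for every finite $\theta\in[0,T]$. The case $\theta=0$ is the hypothesis on $f$, since $v_{\eu,f}(0,\cdot)=f$. For $\theta>0$, $v_{\eu,f}(\theta,\cdot)$ is, up to the factor $e^{-r\theta}$, the convolution of $f$ with a Gaussian density of variance $\sigma^2\theta$, and I claim it is in fact continuous. Along a sequence $x_n\to x_0$ lying in a bounded interval, the integrands converge pointwise and form a dominated family: outside a compact set the supremum over $n$ of the translated Gaussian densities is bounded by a constant multiple of a single Gaussian density of the \emph{same} variance $\sigma^2\theta$ with a suitably shifted mean, while on the compact set $f$ is locally integrable---a crude exponential bound deduces this from $v_{\eu,f}(\theta,\cdot)<\infty$. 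Both pieces are integrable against $f$ precisely because $v_{\eu,f}(\theta,\cdot)<\infty$ at every point, which is one of the standing hypotheses of Section \ref{s:EAO_CDEO} and holds for each finite $\theta\in[0,T]$; dominated convergence then gives continuity, hence upper semi-continuity, and the infimum over $\theta$ inherits it. The one delicate point I anticipate is precisely this domination: one must observe that a bounded spatial translate of a Gaussian kernel is controlled by a mean-shift of the same kernel, so that there is no need to enlarge the variance---which would be fatal, since $v_{\eu,f}$ need not be finite at maturities beyond $T$.
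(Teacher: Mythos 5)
Your proof is correct and follows the same route as the paper's for all six items: linearity of the pricing map for (1), optional sampling against the discounted European value process for (2) with (3) as a corollary, monotonicity of the infimum in (4)--(5), and dominated convergence plus stability of upper semi-continuity under pointwise infima for (6).

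One small imprecision in (6): for $x$ ranging over a bounded interval $[a,b]$ and $y$ outside a compact set, $\sup_{x\in[a,b]}\phi(x+\widehat r\theta,\sigma^2\theta,y)$ is \emph{not} controlled by a constant multiple of a single mean-shifted Gaussian of the same variance --- the ratio of two such Gaussians with distinct means grows unboundedly in one tail. What does work, and what you clearly intend, is domination by the \emph{sum} $\phi(a+\widehat r\theta,\sigma^2\theta,y)+\phi(b+\widehat r\theta,\sigma^2\theta,y)$ (or the max of the two), each of which integrates against $f$ to the finite quantity $e^{r\theta}v_{\eu,f}(\theta,a)$ resp. $e^{r\theta}v_{\eu,f}(\theta,b)$. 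The essential observation you flag --- that one must not enlarge the variance, since finiteness of $v_{\eu,f}$ is only guaranteed for maturities up to $T$ --- is exactly right. You are also more careful than the paper in separating the $\theta=0$ case (where $v_{\eu,f}(0,\cdot)=f$ is usc by hypothesis and no integral representation applies) from $\theta>0$.
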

\begin{proof}
\begin{enumerate}
\item Choose $f_1, f_2 \in \eu_{T,x_0}(g)$ and note that for any $\lambda \in (0,1)$ the convex combination $f_\lambda := \lambda f_1 + (1-\lambda) f_2 $ 
superreplicates $g$ up to $T$. Moreover, we have $v_{\eu,f_\lambda}(T,x_0) = \lambda v_{\eu,f_1}(T ,x_0)  + (1-\lambda) v_{\eu,f_2}(T ,x_0)  = v_{\eu,f_1}(T ,x_0)$,
which implies that the payoff $f_\lambda$ is indeed contained in $\eu_{T,x_0}(g)$.  
\item Recall that the discounted value process 
$V^{(\theta)}=(e^{-rt} v_{\eu,f}(\theta-t, X_t))_{t\in[0,\theta]}$
of the European option with time-$\theta$ payoff $f(X_\theta)$
is a martingale. Indeed, applying the Markov property yields 
\[ e^{-rt} v_{\eu,f}(\theta-t, X_t) = e^{-r\theta} \E_{X_t} \!\left( f(X_{\theta - t}) \right) = e^{-r\theta} \E_{x} \!\left(  f(X_\theta) \middle\vert \F_{t}  \right)\]
for any $t \in [0, \theta]$.

Owing to the superreplication property and the optional sampling theorem, we have
\begin{align*}
v_{\am,g}(\theta,x) 
&= \sup_{\tau \in \mathscr{T}_{[0,\theta]} }\E_x\!\left( e^{-r\tau} g(X_\tau) \right)\\
&\leq \sup_{\tau \in \mathscr{T}_{[0,\theta]} }\E_x \!\left( e^{-r\tau}  v_{\eu,f}(\theta-\tau, X_\tau) \right)\\
&= v_{\eu,f}(\theta,x),
\end{align*}
which proves \eqref{e:glessf}.
Minimising both sides of this inequality with respect to $\theta$ yields \eqref{e:glessf2}.
\item This follows from the fact that any payoff in $\eu_{T,x_0}(g))$ superreplicates $g$ up to time $T$.
\item This is obvious.
\item This is obvious as well.
\item By dominated convergence, 
\[x\mapsto v_{\eu,f}(\theta, x)=e^{-r\theta}\int \phi\bigl(x+(r-\sigma^2/2)\theta,\sigma^2\theta,y\bigr)f(y)dy\] 
is upper semi-continuous in $x$ for 
finite $\theta\leq T$. Since the pointwise infimum of a family of upper semi-continuous functions is upper semi-continuous, the assertion follows.
\qedhere
\end{enumerate}
\end{proof}

Now we turn to the representability of an American claim as explained in Section \ref{s:SEC_intro}.
To this end, we fix $T\in(0,\infty)$ and assume that the continuation region $C_T$ in \eqref{e:cont} is nonempty.
Given any $(T_0,x_0)\in C_T$ we denote by $C_{T_0,x_0}$ the connected component of $C_{T_0} =C_T\cap([0,T_0]\times\rr)$ which contains $(T,x_0)$.

\begin{definition}
 We say that the European payoff function $f$ {\em represents} $g$ relative to $(T_0,x_0)\in C_T$ if
 $f$ superreplicates $g$ up to time $T_0$ and 
 $v_{\am,g}(\theta,x) = v_{\eu,f}(\theta,x)$
 holds for all $(\theta,x)\in C_{T_0,x_0}$.
 In this case we write
 \[ f \represents{T_0,x_0} g.\]
 We call $g$ {\em representable} relative to $(T_0,x_0)$ if there exists some $f$ representing it. 
\end{definition}
If an American payoff is representable, it is in fact represented by its CDEO:
\begin{proposition}\label{PROPOSITION_2}
Suppose that the American payoff function $g$ is continuous and satisfies the growth condition
$g(x) \leq C(1+\vert x \vert^k)$, $x\in\rr$ for some constants $C,k<\infty$. Let $(T_0,x_0)\in C_T$.
If $f \represents{T_0,x_0} g $, the following holds.
\begin{enumerate}
 \item For any $(\widetilde T,\widetilde x)\in C_{T_0,x_0}$
we have
 $f \represents{\widetilde T,\widetilde x} g $.
 \item The representing function is unique up to a Lebesgue-null set, i.e.\ $\widetilde{f} \represents{T_0,x_0} g$ implies 
  $\widetilde{f} = f$ Lebesgue-almost everywhere. 
 \item We have $f = \eu_{T_0,x_0}(g)$ Lebesgue-almost everywhere. 
 \item We have $g(x) = \am_{T_0}(f)(x)$ and hence
 \[g(x) = \am_{T_0}(\eu_{T_0,x_0}(g))(x)\]
 for all $x\in \cl\, \pi(C_{T_0,x_0})$,
  where 
  \[\pi(C_{T_0,x_0}):=\{x\in\rr: (\theta,x)\in C_{T_0,x_0}\mbox{ for some }\theta\in\rr_+\}\] 
  denotes the projection of 
  the set on its second coordinate. 
\item The set $C_{T_0,x_0}$  is a connected component of the continuation region $C'_{T_0}$ associated to the American value function $v_{\am,\am_{T_0}(f)}$.
We have $f \represents{T_0,x_0} \am_{T_0}(f)$ and therefore 
\[f= \eu_{T_0,x_0} (\am_{T_0} (f)).\]
\item Suppose that $\widetilde g\leq g$ is an upper semi-continuous American payoff function with   
  $\widetilde g(x) = g(x)$ for all $x \in \cl\,\pi(C_{T_0,x_0})$.  Then $f \represents{T_0,x_0} \widetilde g$ and
  $C_{T_0,x_0}$ is a connected component of the continuation region $\widetilde C_{T_0}$ associated to the American value function $v_{\am,\widetilde g}$.
\end{enumerate}
\end{proposition}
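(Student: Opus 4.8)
The plan is to treat the six items in order, leaning repeatedly on four facts.

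\emph{(T) Soft topology.} If $S\subseteq S'$ are subsets of $[0,\infty)\times\rr$, then any connected subset of $S'$ that meets a connected component $K$ of $S'$ and avoids $\partial K$ is contained in $K$; in particular a connected component of $S$ meeting a component of $S'$ lies in it. Moreover, since $\theta\mapsto v_{\am,g}(\theta,x)$ is nondecreasing and $v_{\am,g}$ is continuous, the $C_T$-slice $\{\theta:(\theta,x)\in C_T\}$ equals $(\theta_1(x),T]$ with $\theta_1(x):=\inf\{\theta:v_{\am,g}(\theta,x)>g(x)\}$, and (again by \emph{(T)}, moving along vertical segments) the $C_{T_0,x_0}$-slice equals $(\theta_1(x),T_0]$ for every $x\in\pi(C_{T_0,x_0})$.

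\emph{(M) Martingales.} From the proof of Proposition~\ref{p:PROPOSITION_1}(2), $e^{-r\cdot}v_{\eu,f}(T_0-\cdot,X_\cdot)$ is a martingale, while $e^{-r\cdot}v_{\am,g}(T_0-\cdot,X_\cdot)$ is the Snell envelope of the exercise process, a supermartingale that is a martingale up to $\tau^*:=\inf\{t:(T_0-t,X_t)\in C_T^\comp\}\wedge T_0$; and $\tau^*>0$ $\P_{x_0}$-a.s.\ because $(T_0,x_0)$ is interior to the continuation region.

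\emph{(R) Regularity.} In the Black--Scholes model $C_T$ is relatively open, $v_{\am,g}$ is continuous on $[0,T]\times\rr$ and, solving the Black--Scholes PDE on the continuation region, is real-analytic there; $v_{\eu,f}$ is real-analytic on $\{0<\theta<T\}\times\rr$ and $x\mapsto v_{\eu,f}(\theta,x)$ is lower semi-continuous for every finite $\theta$.

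\emph{(I) Injectivity.} If $v_{\eu,f_1}=v_{\eu,f_2}$ on a nonempty open subset of $\{0<\theta<T\}\times\rr$, then by analyticity they agree on the whole slab, and inverting a single Gaussian convolution (or letting $\theta\downarrow0$) gives $f_1=f_2$ Lebesgue-a.e.

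\medskip
\emph{Part 1} follows from \emph{(T)}: $(\widetilde T,\widetilde x)\in C_{T_0,x_0}$ forces $\widetilde T\le T_0$, so $f$ superreplicates $g$ up to $\widetilde T$, and $C_{\widetilde T,\widetilde x}$ is connected, sits in $C_{T_0}$, and meets $C_{T_0,x_0}$, hence lies in it, so $v_{\am,g}=v_{\eu,f}$ on it. \emph{Part 2}: $\widetilde f\represents{T_0,x_0}g$ gives $v_{\eu,\widetilde f}=v_{\am,g}=v_{\eu,f}$ on $C_{T_0,x_0}$, which contains a nonempty open set, so $\widetilde f=f$ a.e.\ by \emph{(I)}. \emph{Part 3}: $f$ is cheapest because $v_{\eu,f}(T_0,x_0)=v_{\am,g}(T_0,x_0)\le v_{\eu,h}(T_0,x_0)$ for every $h$ dominating $g$ up to $T_0$, by \eqref{e:glessf}; conversely, for $h\in\eu_{T_0,x_0}(g)$ one has $v_{\eu,h}(T_0,x_0)=v_{\am,g}(T_0,x_0)$, so by \emph{(M)} and \eqref{e:glessf} the process $\bigl(e^{-r(t\wedge\tau^*)}(v_{\eu,h}-v_{\am,g})(T_0-t\wedge\tau^*,\,X_{t\wedge\tau^*})\bigr)_{t\ge 0}$ is a nonnegative martingale starting at $0$, hence vanishes identically; a Fubini argument using $\tau^*>0$ and the strict positivity of the transition density of $X$ on open sets upgrades this to $v_{\eu,h}=v_{\am,g}$ Lebesgue-a.e.\ on a small open box around $(T_0,x_0)$ with $\theta<T_0$, hence everywhere on it by continuity, hence on all of $C_{T_0,x_0}$ by the analyticity in \emph{(R)}; thus $h\represents{T_0,x_0}g$, and $h=f$ a.e.\ by Part 2.

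\medskip
\emph{Part 4}: ``$\ge$'' is \eqref{e:glessf2}. For ``$\le$'' and $x\in\pi(C_{T_0,x_0})$, use \emph{(T)} together with $v_{\eu,f}=v_{\am,g}$ on the slice $(\theta_1(x),T_0]\times\{x\}$: since $v_{\am,g}(\cdot,x)$ is continuous, nondecreasing and equals $g(x)$ at $\theta_1(x)$, taking the infimum over the slice gives $\am_{T_0}(f)(x)\le\inf_{\theta\in(\theta_1(x),T_0]}v_{\am,g}(\theta,x)=g(x)$. For $x\in\cl\,\pi(C_{T_0,x_0})\setminus\pi(C_{T_0,x_0})$, approximate by $x_n\in\pi(C_{T_0,x_0})$; then $(T_0,x_n)\in C_{T_0,x_0}$, so $v_{\eu,f}(T_0,x_n)=v_{\am,g}(T_0,x_n)$, and passing to the limit (lower semi-continuity of $v_{\eu,f}(T_0,\cdot)$, continuity of $v_{\am,g}$, and \eqref{e:glessf}) yields $v_{\eu,f}(T_0,x)=v_{\am,g}(T_0,x)$; since $C_{T_0,x_0}$ is relatively closed in $C_{T_0}$ we have $(T_0,x)\notin C_T$, so $v_{\am,g}(T_0,x)=g(x)$ and hence $\am_{T_0}(f)(x)\le v_{\eu,f}(T_0,x)=g(x)$. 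The second displayed identity then follows from Part 3, as $\am_{T_0}$ depends on $f$ only through $v_{\eu,f}$.

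\medskip
\emph{Parts 5 and 6} share one scheme. Put $\widehat g:=\am_{T_0}(f)$, upper semi-continuous since it agrees a.e.\ with $\inf_{\theta\in(0,T_0]}v_{\eu,f}(\theta,\cdot)$. Then $g\le\widehat g\le f$ by \eqref{e:glessf2}, $f$ superreplicates $\widehat g$ up to $T_0$, and $\widehat g=g$ on $\cl\,\pi(C_{T_0,x_0})$ by Part 4. As $v_{\am,g}\le v_{\am,\widehat g}$ (because $g\le\widehat g$) and $v_{\am,\widehat g}\le v_{\eu,f}$ (Proposition~\ref{p:PROPOSITION_1}(2)), while $v_{\am,g}=v_{\eu,f}$ on $C_{T_0,x_0}$, we get $v_{\am,\widehat g}=v_{\eu,f}$ on $C_{T_0,x_0}$, where moreover $v_{\am,\widehat g}>\widehat g$ (using $\widehat g=g$ on $\pi(C_{T_0,x_0})$); thus $C_{T_0,x_0}\subseteq C'_{T_0}$. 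On $\partial C_{T_0,x_0}\cap\{\theta>0\}$, which lies in $C_T^\comp$ and projects into $\cl\,\pi(C_{T_0,x_0})$, one has $v_{\am,\widehat g}\le v_{\eu,f}=v_{\am,g}=g=\widehat g$, so these points lie in the stopping region of $\widehat g$; by \emph{(T)} the component of $C'_{T_0}$ through $(T_0,x_0)$ therefore equals $C_{T_0,x_0}$. This gives $f\represents{T_0,x_0}\widehat g$, and the argument of Part 3 applied to $\widehat g$ gives $f=\eu_{T_0,x_0}(\widehat g)$. \emph{Part 6} is identical once one notes $v_{\am,\widetilde g}=v_{\am,g}$ on $C_{T_0,x_0}$: ``$\le$'' holds because $\widetilde g\le g$, and ``$\ge$'' because the $g$-optimal stopping rule started in $C_{T_0,x_0}$ stops at a point whose space coordinate lies in $\cl\,\pi(C_{T_0,x_0})$, where $\widetilde g=g$, so it is admissible and at least as good for $\widetilde g$ as for $g$; the rest is as above with $\widetilde g$ in place of $\widehat g$.

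\medskip
\emph{Expected main obstacle.} The two propagation steps. In Part 3, the pathwise identity from \emph{(M)} has to be turned into an almost-everywhere, then an open-set, then a component-wide statement, the crucial inputs being positivity of the Brownian transition density and the analyticity in \emph{(R)} on the possibly irregularly shaped set $C_{T_0,x_0}$. In Part 4 one must control $\am_{T_0}(f)$ on the part of $\cl\,\pi(C_{T_0,x_0})$ not covered by $\pi(C_{T_0,x_0})$, which requires the semicontinuity/continuity of the value functions up to the parabolic boundary and the relative closedness of the component. All of this rests on the standard but delicate regularity facts \emph{(R)}--\emph{(I)} for Black--Scholes; the remaining arguments are soft topology and optional-sampling bookkeeping already present in the proof of Proposition~\ref{p:PROPOSITION_1}.
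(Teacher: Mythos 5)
Your proof is correct and follows essentially the same route as the paper's. The ingredients are identical — martingale/supermartingale properties of the discounted value processes, positivity of the Brownian transition density, analyticity of $v_{\eu,\cdot}$ and injectivity of the Gaussian convolution (equivalently Fourier inversion), and soft topology of the connected component — and they are deployed in the same order. The only genuine variation is in Part~3, where you argue directly that the nonnegative stopped difference martingale vanishes and then propagate by analyticity, while the paper frames the same computation as a contradiction from a hypothetical interior point of $N$; and in Part~4 you fix the approximating sequence at $\theta=T_0$ while the paper takes a general $(\theta_n,x_n)\to(\theta_b,x_b)$. Both are cosmetic. One small caveat you share with the paper: the final identity $f=\eu_{T_0,x_0}(\am_{T_0}(f))$ in Part~5 invokes Part~3 with $\widehat g=\am_{T_0}(f)$ in place of $g$, which tacitly requires $\widehat g$ to be continuous with polynomial growth; your remark that $\widehat g$ is merely upper semi-continuous leaves this slightly understated, but this is not a gap you introduced.
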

\begin{proof}
\begin{enumerate}
 \item 
 This is obvious because $C_{(\widetilde{T_0},\widetilde{x})}$ is a subset of $C_{T_0,x_0}$.
\item Assume that $f$ and $\widetilde{f}$ represent $g$ relative to $T_0,x_0$. Clearly, we have $v_{\eu,f}(\theta,x) =  v_{\eu,\widetilde{f}}(\theta,x) = v_{\am, g}(\theta,x)< \infty $ 
 for any $(\theta,x)\in C_{T_0,x_0}$. Lemma \ref{LEMMA_analyticity_sol_heat_eaqtion} implies that the value functions $v_{\eu,f}$ and $v_{\eu,\widetilde{f}}$ 
 have an analytic extension on some $\cc^2$-domain containing the set $(0,T_0) \times \rr$.
The set $C_{T_0,x_0}$ contains an open ball $B$. First, we apply the identity theorem to the variable $\theta$ which shows
that the mappings $v_{\eu,f}$ and $v_{\eu,\widetilde{f}}$ coincide on the open strip $(0,T_0)\times \pi(B)$. 
Then we apply the identity theorem to the variable $x$ which yields $v_{\eu,f}(\theta,x) =  v_{\eu,\widetilde{f}}(\theta,x) < \infty$  
for any $(\theta,x)\in  (0,T_0)\times \rr$. 
Consequently, it is easy to see that the functions
\begin{align*}
u(y) &:=  \nno\bigl(x_0 +\widehat{r} \theta_0 , \sigma^2 \theta_0 , y \bigr) f(y),\\
\widetilde{u}(y) &:=  \nno\bigl(x_0 +\widehat{r} \theta_0 , \sigma^2 \theta_0 , y \bigr) \widetilde{f}(y),
\end{align*}
are both integrable on $\rr$, where we set $\theta_0 :=  T_0/2$ and $\widehat{r}:= r - \sigma^2 /2$.
Lemma \ref{LEMMA_norm_and_quotient_of_normal_pdf}(2) yields 
\begin{align*}
v_{\eu,f}(\theta_0/2, x/2)  
&=   \int  \frac{\nno\!\left( x / 2 +  \widehat{r} \theta_0/2, \sigma^2 \theta_0/2, y \right)}{\nno\!\left(x_0 +\widehat{r} \theta_0 , \sigma^2 \theta_0 , y \right) } u(y) dy \\
&=  \sqrt{2}  \exp\!\left(  \frac{( x_0 - x/2  + \widehat{r} \theta_0/2 )^2}{\sigma^2 \theta_0}  \right)   \int \exp\!\left( -\frac{(y - x + x_0)^2}{2\sigma^2 \theta_0} \right)  u(y) dy
\end{align*}
for any $x \in \rr$. This equation remains valid after replacing $f$ and $u$ by $\widetilde{f}$ and $\widetilde{u}$, respectively.
The mappings $v_{\eu,f}$ and $v_{\eu,\widetilde{f}}$ coincide on $(0,T_0)\times \rr$ and consequently
\[\int \nno\bigl(  x_0, \sigma^2 \theta_0, x-y   \bigr) u(y) dy =  \int\nno\bigl( x_0, \sigma^2 \theta_0, x-y   \bigr)  \widetilde{u}(y) dy \]
holds  for any $x\in\rr$. We multiply both sides of this equation by $e^{i zx}, z\in \rr$ and integrate the $x$ variable over the real line.
After a few simplifications we obtain  
\[  \int e^{i zy} u(y) dy =  \int  e^{i zy}  \widetilde{u}(y) dy \]
for any $z\in\rr$.  The injectivity of the Fourier transform on $L_1(\rr)$ yields that $u,\widetilde{u}$ and therefore
$f,\widetilde{f}$ coincide up to a Lebesgue-null set.
\item  
The growth condition on $g$ implies \eqref{eq:integrability_condition}. 
Moreover, $v_{\am,g}$ is continuous on $[0,T]\times\rr$ by \cite[Theorem 4.1.1]{lamberton.09}.
Observe that $C^C_T$ is closed and
\begin{align}\label{e:optimal_stopping_time}
\tau_T:={}&\inf\{t\geq0:(T-t,X_t)\not\in C_T\}\wedge T\nonumber\\
=&\inf \left\{ t\geq 0 :v_{\am,g}(T-t,X_t) = g(X_t)  \right\}\wedge T
\end{align}
is an optimal stopping time for the stopping problem in (\ref{e:vam}), cf.\ \cite[Corollary 2.9]{peskir.shiryaev.06}.

By \eqref{e:glessf} the function $f$ is contained in $\eu_{T_0,x_0}(g)$.
It remains to be shown that this set is a singleton.
To this end choose a function $h \in \eu_{T_0,x_0}(g)$ and note that $v_{\eu,h}(T_0,x_0) = v_{\eu,f}(T_0,x_0) =  v_{\am, g}(T_0,x_0)< \infty$.
By Lemma \ref{LEMMA_analyticity_sol_heat_eaqtion} the mappings $v_{\eu,h}$ and $v_{\eu,f}$ are analytic on a $\cc^2$-domain containing the set $(0,T_0)\times\rr$.
Define 
\[N:=\bigl\{(\theta,x)\in C_{T_0,x_0}\cap((0,T_0)\times\rr):v_{\eu,h}(\theta,x)\not=v_{\eu,f}(\theta,x) \bigr\},\]
which is open because both $v_{\eu,h}$ and $v_{\eu,f}$ are continuous on $C_{T_0,x_0}$.
Moreover, let $\tau_{T_0}$ be the corresponding optimal stopping time as in \eqref{e:optimal_stopping_time}.

Assume by contradiction that there is an interior point $(\theta_0,\xi_0)$ of $N\subset C_{T_0,x_0}$,
we have 
\begin{equation}\label{e:inN}
 \{\theta_0\}\times[\xi_0-\epsilon,\xi_0+\epsilon]\subset N\subset C_{T_0,x_0}
\end{equation}
and $\P_{x_0 }(\tau_{T_0}>t, X_t \in [\xi_0-\epsilon,\xi_0+\epsilon]) > 0$
for some $\epsilon >0$ and sufficiently small $t<T_0-\theta_0$
because $C_{T_0,x_0}$ is connected and open in $[0,T_0]\times\rr$.
\eqref{e:inN} implies $[\theta_0,T_0]\times[\xi_0-\epsilon,\xi_0+\epsilon]\subset C_{T_0,x_0}$.
From the properties of Brownian motion it also follows that the probability of $X$ staying in the interval
$[\xi_0-\epsilon,\xi_0+\epsilon]$ from time $t$ till $T_0-\theta_0$ is strictly positive.
Hence 
\begin{equation}\label{e:PN}
\P_{x_0 }( (T_0-\tau_{T_0})\vee\theta_0, X_{(T_0-\theta_0)\wedge \tau_{T_0}} ) \in N) >0.       
\end{equation}
On the other hand, we have
\begin{align*}
 \MoveEqLeft{\E_{x_0}\!\left( e^{-r((T_0-\tau_{T_0})\vee\theta_0)}
 \left( v_{\eu,h} -  v_{\eu,f}\right)\bigl((T_0-\tau_{T_0})\vee\theta_0, X_{(T_0-\theta_0)\wedge \tau_{T_0}}\bigr)  \right)}\\
&= \E_{x_0}\!\left( e^{-r((T_0-\tau_{T_0})\vee\theta_0)}
 \left( v_{\eu,h} -  v_{\am,g}\right)\bigl((T_0-\tau_{T_0})\vee\theta_0, X_{(T_0-\theta_0)\wedge \tau_{T_0}}\bigr)  \right)\\
&=  v_{\eu,h}\!\left(T_0, x_0\right) - v_{\am,g}\!\left(T_0, x_0\right)= 0.
\end{align*}
The second equality follows from the fact that the discounted European value process as well as the optimally
stopped Snell envelope of the discounted exercise price 
process are martingales, see \cite[Theorem 2.4 and Remark 2.6]{peskir.shiryaev.06}.
Since it is nonnegative, we conclude that
\[\left( v_{\eu,h} -  v_{\eu,f}\right)\bigl((T_0-\tau_{T_0})\vee\theta_0, X_{(T_0-\theta_0)\wedge \tau_{T_0}}\bigr)=0\quad P_{x_0}\text{-almost surely}\]
in contradiction to \eqref{e:PN}. Hence $N$ is empty.

By the proof of the second assertion we conclude that $h$ and $f$ coincide up to a Lebesgue-null set.
\item Choose any $x \in \pi(C_{T_0,x_0})$ and a $\theta_C \in (0, T_0]$ such that $(\theta_C,x) \in C_{T_0,x_0}$.
Due to compactness, there is a largest $\theta_S \in [0, \theta_C)$ such that $(\theta_S,x)$ is contained in the stopping region.
In view of \cite[Theorem 4.1.1]{lamberton.09},  $v_{\am,g}$ is continuous and therefore
\[ g(x) \leq \am_{T_0}(f)(x) 
\leq \liminf_{\theta \downarrow \theta_S} v_{\eu,f}(\theta,x) 
= \liminf_{\theta \downarrow \theta_S} v_{\am,g}(\theta,x) 
= v_{\am,g}(\theta_S,x) = g(x).\]
This proves the assertion for $x \in \pi(C_{T_0,x_0})$. 
For any $x_b \in \partial \pi(C_{T_0,x_0})$ there is some $\theta_b \in (0,T_0]$ such that $(\theta_b,x_b)$ is  in the boundary of
the set $C_{T_0,x_0}$. For an approximating sequence 
$C_{T_0,x_0} \ni (\theta_n,x_n) \to (\theta_b,x_b)$ as $n \to \infty$ we have
\begin{equation*}
\begin{aligned}
g(x_b) = v_{\am,g}(\theta_b,x_b) =  \liminf_{n\to\infty} v_{\am,g}(\theta_n,x_n)= \liminf_{n\to\infty} v_{\eu,f}(\theta_n,x_n).
\end{aligned}
\end{equation*}
Applying Fatou's lemma we obtain
\[\liminf_{n\to\infty} v_{\eu,f}(\theta_n,x_n) \geq v_{\eu,f}(\theta_b,x_b) \geq \am_{T_0}(f)(x_b) \geq g(x_b),\]
which yields $g(x_b) = \am_{T_0}(f)(x_b)$.
\item The European payoff $f$ superreplicates $\am_{T_0}(f)$ up to time $T_0$. Owing to Proposition \ref{p:PROPOSITION_1}(2), we have $g(x) \leq \am_{T_0}(f)(x)$ and hence
\begin{equation}
\label{EQ_ungl_vam_veu}
v_{\am,g}(\theta,x) \leq v_{\am,\am_{T_0}(f)}(\theta,x)  \leq v_{\eu,f}(x)
\end{equation}
for any $(\theta,x) \in [0,T_0] \times \rr$. Moreover, equality in \eqref{EQ_ungl_vam_veu} holds on the set $C_{T_0,x_0}$ because the payoff
$f$ represents $g$ relative to $(T_0,x_0)$. For any $(\theta,x)\in C_{T_0,x_0}$ the fourth assertion warrants that $g(x) = \am_{T_0}(f)(x)$ and therefore
\[ \am_{T_0}(f)(x) = g(x) < v_{\am,g}(\theta,x) = v_{\am,\am_{T_0}(f)}(\theta,x).\]
This shows that $C_{T_0,x_0}$ is a connected subset of $C_{T_0}'$.
For any boundary point $(\theta,x) \in \partial C_{T_0,x_0}$ with $\theta>0$
we have $g(x) = v_{\am,g}(\theta,x) = v_{\eu,f}(\theta,x)$. In view of \eqref{EQ_ungl_vam_veu}, we obtain 
\[v_{\am,\am_{T_0}(f)}(\theta,x)  \leq v_{\eu,f}(x) = g(x) \leq \am_{T_0}(f)(x), \]
which shows that $(\theta,x)$ is located in the stopping region of the American payoff $\am_{T_0}(f)$.
Summing up, the set $C_{T_0,x_0}$ is indeed a connected component of
$C_{T_0}'$ and $\am_{T_0}(f)$ is represented by $f$ relative to $(T_0,x_0)$.
\item Choose any $(\theta,x)\in C_{T_0,x_0}$ and 
let $\tau_\theta$ be the optimal stopping time as in \eqref{e:optimal_stopping_time}.
Due to  $ X_{\tau_\theta} \in \cl\,\pi(C_{T_0,x_0})$ we have 
\[ v_{\am,g}(\theta,x) = \E_x(e^{-r\tau_\theta}g(X_{\tau_\theta}))
= \E_x(e^{-r\tau_\theta}\widetilde{g}(X_{\tau_\theta}))
\leq  v_{\am, \widetilde{g} }(\theta,x).\]
The reverse inequality follows immediately from the assumption $\widetilde{g} \leq g$.
Therefore
\[\widetilde{g}(x) = g(x) < v_{\am, g}(\theta,x) =  v_{\am, \widetilde{g} }(\theta,x).\]
This shows that $C_{T_0,x_0}$ is a connected subset of $\widetilde{C}_{T_0}$.

Now choose any boundary point $(\theta,x) \in \partial C_{T_0,x_0}$ and an approximating sequence $(\theta_n,x_n)_{n\in \nn}$ in $C_{T_0,x_0}$, i.e.\ 
$(\theta_n,x_n) \to (\theta,x)$ as $n \to \infty$.
We have $g(x) = \widetilde{g}(x)$. Since $ v_{\am,\widetilde g}(\theta_n,x_n) = v_{\am, g}(\theta_n,x_n)$ for any $n\in \nn$,
we conclude
\begin{equation*}
v_{\am,\widetilde g}(\theta,x) 
 = \lim_{n\to\infty} v_{\am,\widetilde g}(\theta_n,x_n)
 =  \lim_{n\to\infty} v_{\am, g}(\theta_n,x_n)
      = g(x) = \widetilde{g}(x).
\end{equation*}
Consequently $(\theta,x)$ is located in the stopping region of the American payoff $\widetilde{g}$.
Summing up,  $C_{T_0,x_0}$  is a connected component of the set $\widetilde{C}_{T_0}$ and 
$\widetilde g$ is represented by $f$ relative to $(T_0,x_0)$.\qedhere
\end{enumerate}
\end{proof}

A European payoff often -- but not always -- generates its embedded American option: 
\begin{proposition}\label{p:EAO}
Suppose that $f$ is continuous.
Let $T_0\in(0,T]$ and assume that there exists a continuous function $\ctheta:\rr\to[0,T_0]$ such that the 
the infimum in the definition of $\am_{T_0} (f)$, cf.\ (\ref{e:EAO}), is reached in $\ctheta(x)$ for any $x\in\rr$.
Then we have:
\begin{enumerate}
\item
$\am_{T_0} (f)$ is continuous.
\item 
$f \represents{T_0,x_0} \am_{T_0}(f)$
and hence 
\[f= \eu_{T_0,x_0} (\am_{T_0} (f))\]
for any $x_0\in\rr$ with $(T_0,x_0)\in C_T$.
\item $\ctheta(x)$ corresponds to the early exercise curve of $\am_{T_0}(f)$
in the sense that
\begin{equation}\label{e:taubla}
 \tau:=\inf\bigl\{t\geq0:T_0-t=\ctheta(X_t)\bigr\} \wedge T_0
\end{equation}
is an optimal stopping time for the stopping problem in the definition of $v_{\am,\am_{T_0}(f)}(T_0,x)$, cf.\ (\ref{e:vam}).
\item $g:=\am_{T_0}(f)(x)$ satisfies the concavity condition
\begin{equation}\label{e:concavity}
{\sigma^2\over2}g''(x)+\left(r-{\sigma^2\over2}\right)g'(x)-rg(x)\leq0
\end{equation}
on the set
\[G:=\{x\in\rr:0<\ctheta(x)<T_0  \text{ and }  g\text{ is twice differentiable in }x\}.\]
\end{enumerate}
\end{proposition}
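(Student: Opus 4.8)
The plan is to build everything on one elementary martingale identity. Write $g:=\am_{T_0}(f)$ and, for $\theta\in[0,T_0]$, set $\tau_\theta:=\inf\{t\ge 0:\theta-t=\ctheta(X_t)\}\wedge\theta$; this is a stopping time because $t\mapsto\theta-t-\ctheta(X_t)$ is continuous and adapted, and it reduces to the $\tau$ of \eqref{e:taubla} when $\theta=T_0$. As in the proof of Proposition~\ref{p:PROPOSITION_1}(2), the process $M_t:=e^{-rt}v_{\eu,f}(\theta-t,X_t)=e^{-r\theta}\E_x(f(X_\theta)\mid\F_t)$, $t\in[0,\theta]$, is a uniformly integrable martingale (closed by the integrable random variable $e^{-r\theta}f(X_\theta)$), so optional sampling gives $\E_x(e^{-r\tau_\theta}v_{\eu,f}(\theta-\tau_\theta,X_{\tau_\theta}))=v_{\eu,f}(\theta,x)$. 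The key lemma I would prove is: if $\theta\ge\ctheta(x)$ then $v_{\eu,f}(\theta-\tau_\theta,X_{\tau_\theta})=g(X_{\tau_\theta})$ $\P_x$-a.s., and hence, combining with the superreplication inequality $v_{\am,g}(\theta,x)\le v_{\eu,f}(\theta,x)$ of Proposition~\ref{p:PROPOSITION_1}(2),
\begin{equation*}
  v_{\eu,f}(\theta,x)=\E_x\bigl(e^{-r\tau_\theta}g(X_{\tau_\theta})\bigr)\le v_{\am,g}(\theta,x)\le v_{\eu,f}(\theta,x),
\end{equation*}
so $v_{\am,g}(\theta,x)=v_{\eu,f}(\theta,x)$ and $\tau_\theta$ is optimal. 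For the identification: on $\{\tau_\theta<\theta\}$ path-continuity gives $\theta-\tau_\theta=\ctheta(X_{\tau_\theta})$, whence $v_{\eu,f}(\theta-\tau_\theta,X_{\tau_\theta})=v_{\eu,f}(\ctheta(X_{\tau_\theta}),X_{\tau_\theta})=g(X_{\tau_\theta})$; on $\{\tau_\theta=\theta\}$ (where we may assume $\theta>\ctheta(x)$, the case $\theta=\ctheta(x)$ giving $\tau_\theta=0$ and $g(x)=v_{\eu,f}(\theta,x)$ at once) the continuous function $t\mapsto\theta-t-\ctheta(X_t)$ is positive at $t=0$ and never zero on $[0,\theta)$, hence positive there, so letting $t\uparrow\theta$ forces $-\ctheta(X_\theta)\ge 0$, i.e.\ $\ctheta(X_\theta)=0$ and $g(X_\theta)=v_{\eu,f}(0,X_\theta)=v_{\eu,f}(\theta-\tau_\theta,X_{\tau_\theta})$. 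Since $T_0\ge\ctheta(x)$ always, the special case $\theta=T_0$ is exactly assertion~(3).

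Assertion~(1) is the mild statement that $\am_{T_0}(f)(x)=v_{\eu,f}(\ctheta(x),x)$ is continuous; it follows by composing the continuous $\ctheta$ with $v_{\eu,f}$, which is continuous on $(0,T_0]\times\rr$ and, because $f$ is continuous and nonnegative, at $\theta=0$ as well. For assertion~(2), $f$ trivially superreplicates $g=\am_{T_0}(f)$ up to $T_0$ since $g(x)=\inf_{\theta'}v_{\eu,f}(\theta',x)\le v_{\eu,f}(\theta,x)$, so it remains to verify $v_{\am,g}=v_{\eu,f}$ on $C_{T_0,x_0}$. By the key lemma this holds on $D:=\{(\theta,x)\in[0,T_0]\times\rr:\theta\ge\ctheta(x)\}$, and every boundary point $(\ctheta(x),x)$ of $D$ lies in the stopping region of $g$, since there $v_{\am,g}(\ctheta(x),x)=v_{\eu,f}(\ctheta(x),x)=g(x)$ by the definition of $g$. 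I would then show $C_{T_0,x_0}\subseteq D$: as a connected component of the continuation region (relatively open since $g$ is continuous), $C_{T_0,x_0}$ is relatively open in $[0,T_0]\times\rr$ and path-connected, it contains $(T_0,x_0)$ with $T_0>\ctheta(x_0)$ (strict, because $(T_0,x_0)$ being a continuation point forces $v_{\eu,f}(T_0,x_0)>g(x_0)=\min_{\theta'}v_{\eu,f}(\theta',x_0)$), and a point of $C_{T_0,x_0}$ with $\theta<\ctheta(x)$ could be joined to $(T_0,x_0)$ by a path in $C_{T_0,x_0}$ which, by the intermediate value theorem applied to $s\mapsto\theta(s)-\ctheta(x(s))$, would meet $\{\theta=\ctheta(x)\}$, contradicting that such points are stopping points. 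Hence $f\represents{T_0,x_0}\am_{T_0}(f)$, and then $f=\eu_{T_0,x_0}(\am_{T_0}(f))$ follows from Proposition~\ref{PROPOSITION_2}(3) (or directly from $f\in\eu_{T_0,x_0}(g)$ together with the analyticity/Fourier-uniqueness argument used there).

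Assertion~(4) is an envelope argument. Fix $x_0\in G$, so $\theta_0:=\ctheta(x_0)\in(0,T_0)$ and $g$ is twice differentiable at $x_0$. The smooth function $h(x):=v_{\eu,f}(\theta_0,x)$ satisfies $h\ge g$ with $h(x_0)=g(x_0)$, so $g'(x_0)=h'(x_0)=\partial_x v_{\eu,f}(\theta_0,x_0)$, $g''(x_0)\le h''(x_0)=\partial_{xx}v_{\eu,f}(\theta_0,x_0)$ and $g(x_0)=v_{\eu,f}(\theta_0,x_0)$. Since $\sigma^2/2>0$ and the lower-order terms agree exactly,
\begin{equation*}
  \frac{\sigma^2}{2}g''(x_0)+\Bigl(r-\frac{\sigma^2}{2}\Bigr)g'(x_0)-rg(x_0)\le\frac{\sigma^2}{2}\partial_{xx}v_{\eu,f}(\theta_0,x_0)+\Bigl(r-\frac{\sigma^2}{2}\Bigr)\partial_x v_{\eu,f}(\theta_0,x_0)-rv_{\eu,f}(\theta_0,x_0),
\end{equation*}
and the right-hand side equals $\partial_\theta v_{\eu,f}(\theta_0,x_0)$, because $v_{\eu,f}$ solves the Black--Scholes equation $\partial_\theta v=\frac{\sigma^2}{2}\partial_{xx}v+(r-\frac{\sigma^2}{2})\partial_x v-rv$ on $(0,T_0)\times\rr$; finally $\partial_\theta v_{\eu,f}(\theta_0,x_0)=0$ since $\theta_0$ is an interior minimiser of the smooth map $\theta\mapsto v_{\eu,f}(\theta,x_0)$. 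This yields \eqref{e:concavity}, and notably uses neither differentiability nor uniqueness of $\ctheta$.

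The step I expect to be the main obstacle is assertion~(2): one must be careful that, absent a uniqueness assumption on the minimiser $\ctheta(x)$, the continuation region of $\am_{T_0}(f)$ need not equal $\{\theta>\ctheta(x)\}$ — there may well be stopping points above the curve — so only the inclusion $C_{T_0,x_0}\subseteq D$, obtained via connectedness and the intermediate value theorem, is available and has to suffice. The boundary bookkeeping in the key lemma (the cases $\ctheta=0$, where $g=f$, and $\theta=\ctheta(x)$; and the fact that for $\theta<\ctheta(x)$ the stopped European value can strictly exceed $g$, so the lemma's equality genuinely fails there) also needs some care.
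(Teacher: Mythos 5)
Your proposal is essentially correct and follows the same overall strategy as the paper: stop at the curve $\theta-t=\ctheta(X_t)$ (the paper uses the slightly larger set $M=\{(\theta,x):\am_{T_0}(f)(x)=v_{\eu,f}(\theta,x)\}$, but this makes no difference), apply optional sampling to the discounted European value martingale, and sandwich between $v_{\am,g}$ and $v_{\eu,f}$; for the concavity in (4) both proofs compare second derivatives with those of $v_{\eu,f}$ at the interior minimiser and invoke the Black--Scholes PDE together with the first-order condition $D_1v_{\eu,f}(\ctheta(x),x)=0$. Your version of the envelope argument (fixing $\theta=\ctheta(x_0)$ and minimising in $x$ only) is a mild cosmetic variant of the paper's (which uses vanishing of the full gradient of $\Psi=v_{\eu,f}-g$), and your explicit boundary case analysis of $\tau_\theta$ is sound.

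The one place where you genuinely deviate is the inclusion $C_{T_0,x_0}\subseteq\{\theta>\ctheta(x)\}$. You argue via path-connectedness and the intermediate value theorem, for which you need $C_{T_0,x_0}$ to be open; you justify openness with ``since $g$ is continuous,'' but that is insufficient: openness of $C_{T_0}=\{v_{\am,g}-g>0\}$ additionally requires $v_{\am,g}$ to be lower semi-continuous, which is not stated in the setting of this proposition (the paper's appeal to \cite[Theorem 4.1.1]{lamberton.09} in Proposition~\ref{PROPOSITION_2} assumes polynomial growth of $g$, not available here). The paper sidesteps this entirely: since $\theta\mapsto v_{\am,g}(\theta,x)$ is increasing, $(\theta,x)\in C_{T_0,x_0}$ implies $(\widetilde\theta,x)\in C_{T_0,x_0}$ for all $\widetilde\theta\in[\theta,T_0]$; and because $(\ctheta(x),x)$ is a stopping point (which you also establish), this forces $\theta>\ctheta(x)$ directly, with no topological input about the continuation set. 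I would recommend you replace your connectedness argument with this monotonicity argument, which is shorter and avoids the unjustified openness claim.
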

\begin{proof}
\begin{enumerate}
\item $v_{eu,f}$ is continuous on $(0,T_0)\times\rp$ by Lemma \ref{LEMMA_analyticity_sol_heat_eaqtion}. 
The integrability condition $v_{eu,f}(T,x)<\infty$, $x\in\rr$ and dominated convergence yield that continuity actually holds on $(0,T_0]\times\rp$.
Since $f$ is uniformly integrable relative to $P_x^{X_{\theta}}$ for $(\theta,x)\in[0,T_0]\times[x-\epsilon,x+\epsilon]$
and since $P_x^{X_{\theta}}\to\delta_{x_0}$ weakly for $(\theta,x)\to(0,x_0)$,
the function $v_{eu,f}$ is in fact continuous on $[0,T_0]\times\rr$.
Since $\ctheta$ is continuous, we have that 
$x\mapsto\am_{T_0}(f)(x)=v_{eu,f}(\ctheta(x),x)$ is continuous as well.
\item
$f$ superreplicates the payoff $ \am_{T_0}(f)$ up to $T_0$ by definition.
Since $\theta\mapsto v_{am,g}(\theta,x)$ is increasing,
$(\theta,x)\in C_{T_0,x_0}$ implies $(\widetilde\theta,x)\in C_{T_0,x_0}$ for any $\widetilde\theta\geq\theta$.
Now $\am_{T_0}(f)(x)=v_{eu,am_{T_0}(f)}(\ctheta(x),x)\geq v_{am,am_{T_0}(f)}(\ctheta(x),x)$
implies $(\ctheta(x),x)\notin C_{T_0,x_0}$ and therefore
$\theta>\ctheta(x)$ 
for any $(\theta,x)\in C_{T_0,x_0}$.
Set $M:=\bigl\{(\theta,x)\in[0,T_0]\times\rr: \am_{T_0}(f)(x)=v_{\eu,f}(\theta,x)\bigr\}$.
Since $\ctheta$ is continuous, this implies that 
$P_x((\theta-\tau_{\theta,M},X_{\tau_{\theta,M}})\in M)=1$
for any $(\theta,x)\in C_{T_0,x_0}$ and the stopping time $\tau_{\theta,M}:=\inf\{t\in\rp:(\theta-t,X_t)\in M\}$.
Since the discounted European value process is a martingale, we obtain
\begin{align*}
 v_{am,am_{T_0}(f)}(\theta,x)
 &\geq E_x\bigl(e^{-r\tau_{\theta,M}}g(X_{\tau_{\theta,M}})\bigr)\\
 &=E_x\bigl(e^{-r\tau_{\theta,M}}v_{\eu,f}(\theta-\tau_{\theta,M},X_{\tau_{\theta,M}})\bigr)\\
&=v_{\eu,f}(\theta,x)
\end{align*}
by optional sampling.
The reverse inequality is \eqref{e:glessf} from Proposition \ref{p:PROPOSITION_1}.
\item
This follows now from
\begin{align*}
v_{am,am_{T_0}(f)}(T_0,x)
&\geq E_x\bigl(e^{-r\tau}\am_{T_0}(f)(X_{\tau})\bigr)\\
&=E_x\bigl(e^{-r\tau}v_{\eu,f}(T_0-\tau,X_{\tau})\bigr)\\
&=v_{\eu,f}(T_0,x)\\
&\geq v_{am,am_{T_0}(f)}(T_0,x).
\end{align*}
\item The mapping $\Psi : v_{\eu,f} - g $ is twice differentiable on the set $(0,T_0)\times G$.
If we define the operator  $ \mathscr A  := \Bigl( r -\frac{\sigma^2}{2} \Bigr) D_2  + \frac{\sigma^2}{2} D_{22} - r   $,
It\=o's formula and the martingale property of $(e^{-rt}v_{\eu,f}(T-t,X_t))_{t\in[0,T_0]}$ yield
that $(\mathscr{A}-D_1)v_{\eu,f}=0$ on $(0,T_0)\times\rr$ and hence
\begin{align}\label{EQ_A_applied_to_g}
   \mathscr{A} g 
   &= \mathscr{A} v_{\eu,f} -  \mathscr{A} \Psi \nonumber \\
  &= D_1 v_{\eu,f} - \left( r - \frac{\sigma^2}{2} \right) D_2 \Psi
     - \frac{\sigma^2}{2} D_{22} \Psi - r \Psi \nonumber\\
  &= c^\top D \Psi - \frac{\sigma^2}{2} D_{22} \Psi - r \Psi 
\end{align} 
where $c:= (1,  \frac{\sigma^2}{2} -r )$ and $g$ is interpreted as a mapping
on $(0,T_0)\times\rr$ via $g(t,x):=g(x)$. Now choose any $x \in G$. 
By definition we have $\Psi(\ctheta(x), x) = 0$. Due to the fact that $\Psi$ only assumes nonnegative values,
the first order condition $D\Psi(\ctheta(x), x)  = 0$ and the second order
condition $ D_{22} \Psi(\ctheta(x), x)\geq 0$ hold.
From \eqref{EQ_A_applied_to_g} we obtain
 \begin{align*}
  (\mathscr{A} g)(x) =- \frac{\sigma^2}{2}D_{22} \Psi(\ctheta(x), x) \leq 0,
 \end{align*}
which concludes the proof.
\qedhere
\end{enumerate}
\end{proof}

\subsection{Examples}
We start with a simple explicit example of a representable American option.
\begin{example} Consider the market of Section \ref{s:SEC_intro} with
interest rate $r=0$ and volatility $\sigma=\sqrt{2}$.
We study the European payoff
\[f(x) = 3 e^{x/2} +  e^{3x/2}.\]
Since 
\[\E_x (S_t^\alpha) = \exp\bigl(\alpha x+(\alpha^2 - \alpha)t\bigr)\]
for $\alpha \in \rr$, $x > 0$, 
its value function equals
\[v_{\eu,f}(\theta,x) =3\exp\biggl({1\over2}x-{1\over4}\theta\biggr)+\exp\biggl({3\over2}x+{3\over4}\theta\biggr).\]
We conclude that 
the embedded American option and the associated early exercise curve are given by
\begin{align*}
\am_{\infty}(f)(x) &= 4 e^{3x/4} 1_{(-\infty,0)}(x) + f(x) 1_{\rp}(x),\\
\ctheta(x) &= \mathrm{argmin}_{\theta \in \rp } v_{\eu,f}(\theta,x) = -x 1_{(-\infty,0)}(x).
\end{align*}
More specifically, $\tau$ in \eqref{e:taubla} is optimal for the stopping problem \eqref{e:vam}
for $g=\am_{\infty}(f)$ and time horizon $T_0$.
Indeed, $\tau$ is optimal for 
\[\am_{T_0}(f)(x)=
    \begin{cases}
						3 e^{x/2-T_0/4} + e^{3x/2-3T_0/4} \ & \text{ if } x \leq-T_0,\\
						\am_{\infty}(f)(x) & \text{ otherwise.}
    \end{cases}\] 
    by Proposition \ref{p:EAO}.
Since $\am_{\infty}(f)\leq\am_{T_0}(f)$, it follows easily that it is optimal for $\am_{\infty}(f)$ as well.
\end{example}

The following example shows that the embedded American option $\am_{T}(f)$ of $f$ may be representable without necessarily being represented by $f$ itself.
On top, we observe that the early exercise curve can have jumps.
\begin{example}\label{Example1}\index{representability (function-type)!examples!locally but not globally rep., discontinuous exercise boundary}
Consider the Black-Scholes market of Section \ref{s:SEC_intro} with interest rate $r=1$ and volatility $\sigma = \sqrt{2}$.
The European value function associated to the payoff $f:=1_{[0,1]}$ is given by
\[v_{\eu,f}(\theta,x) = e^{-\theta} \left( \Phi\!\left( \frac{1-x}{\sqrt{2\theta}} \right) - \Phi\!\left( -\frac{x}{\sqrt{2\theta}} \right)  \right).\]
An elementary calculation yields
\begin{align}\label{e:less0}
D_1 v_{\eu,f}(\theta,x) 
&= - e^{-\theta} (2\theta)^{-3/2}  \left( \varphi\!\left( \frac{1-x}{\sqrt{2\theta}} \right)(1-x) + \varphi\!\left( \frac{x}{\sqrt{2\theta}}  \right) x\right)
- v_{\eu,f}(\theta,x)< 0 
\end{align}
for any $(\theta,x) \in (0,\infty)\times [0,1]$. 
Moreover, we have
\begin{equation}\label{e:EQ_Example1_cases}
\lim_{\theta \downarrow 0} v_{\eu,f}(\theta,x)  = \begin{cases}
						\frac{1}{2}<f(x) & \text{ if } x \in \{0,1\},\\
						f(x) & \text{ otherwise.}
                                              \end{cases}
\end{equation}

Fix some time horizon $T\in(0,\infty)$.
In view of (\ref{e:less0}, \ref{e:EQ_Example1_cases}), the embedded American option is given by
\[g(x):= \am_{T}(f)(x) = v_{\eu,f}(T,x) 1_{[0,1]}(x).\]
The infimum in \eqref{e:EAO} is attained at the unique point
\[ \ctheta(x) = T  1_{[0,1]}(x),\quad x\in\rr.\]
This shows that neither the embedded American option nor the associated curve $x\mapsto\ctheta(x)$ of unique minima need to be continuous if 
the underlying European payoff is discontinuous. 
The reader may compare this result to the statements of Proposition \ref{p:EAO} and \cite[Remark 4]{jourdain.martini.01}.

Let $C_T$ denote the continuation region as \eqref{e:cont}. Since 
$g(x)=0<v_{\am,g}(\theta ,x)$
for $(\theta,x)\in(0,T]\times(\rr\setminus[0,1])$, it is evident that $(0,T]\times(\rr\setminus[0,1])\subset C_T$.
For $(\theta,x)\in[0,T]\times[0,1]$ we have 
\[g(x)\leq v_{\am,g}(\theta ,x)\leq v_{\am,g}(T ,x) \leq v_{\eu,f}(T ,x)=g(x)\]
because the American value function $v_{\am,g}(\theta ,x)$ is increasing in $\theta$.
Consequently, the continuation region is of the form $C_T=(0,T]\times(\rr\setminus[0,1])$.
At the end of this example we prove that the stopping time 
$\tau_\theta:=\inf\{t\geq0:X_t\in[0,1]\}\wedge\theta$  
is optimal for the stopping problem \eqref{e:vam} with time horizon $\theta\in[0,T]$.

Beforehand we show that the embedded American payoff $g$ is \emph{not} represented by its generating European claim $f$. 
To this end choose any $(\theta ,x)\in[0,T]\times\rp$ from the continuation region $C_T$.
Since 
$ D_1 v_{\eu,f}(\theta,x)<0$ 
on the set $(0,\infty) \times [0,1]$
and by the optional sampling theorem applied to the discounted European option price process,
we obtain
\begin{align*}
 v_{\am,g}(\theta,x) 
 &= \E_x\!\left(  g(X_{\tau_\theta})    e^{-r\tau_\theta}  \right)\\
 &= \E_x\!\left( 1_{[0,1]}(X_{\tau_\theta}) v_{\eu,f}(T,X_{\tau_\theta})  e^{-r\tau_\theta}  \right)\\
 &<  \E_x\!\left(  v_{\eu,f}(\theta -\tau_\theta,X_{\tau_\theta})  e^{-r\tau_\theta}  \right)\\
 &= v_{\eu,f}(\theta ,x)
\end{align*}
for any  $(\theta ,x) \in C$ with $\theta\leq T$.
Therefore the payoff $g$ is indeed not represented by $f$.

Nonetheless, there exist European payoff functions which represent $g$ on the connected components 
$C_{T,-1}= (0,T] \times (-\infty,0)$ and  $C_{T,2}= (0,T] \times (1, \infty)$ of the continuation region. 
We verify that 
\[h(x) := 2 g(0)\cosh(x)1_{\rr_+}(x)\] 
represents $g$ on the left connected component $C_{T,-1}$.
By symmetry, one can show that the same holds for 
$\widetilde{h}(x) := 2 g(1)\cosh(x-1)1_{(-\infty,1]}(x)$ on the right connected component $C_{T,2}$.

The European value function associated to $h$ is given by
\[ v_{\eu,h}(\theta,x) =  2 g(0)e^{-\theta}H(\theta,x), \]
where $H(\theta,x):= \E_x(\cosh(X_\theta)1_{\rr_+}(X_\theta))$.
Since $P_x^{X_\theta}= N(x,2\theta)$, a straightforward calculation yields 
\[H(\theta,x)={1\over2}\left(e^{\theta-x}\Phi\biggl({x\over\sqrt{2\theta}}-\sqrt{2\theta}\biggr)+e^{\theta+x}\Phi\biggl({x\over\sqrt{2\theta}}+\sqrt{2\theta}\biggr)\right)\]
for $\theta\in(0,T]$. In particular, we have
\begin{equation}\label{e:innull}
  v_{\eu,h}(\theta,0) =  g(0). 
\end{equation}

Let us verify that $h$ superreplicates the American payoff $g$ up to time $T$.
Since
\begin{align*}
g(x)
&=e^{-T} \left( \Phi\!\left( \frac{1-x}{\sqrt{2T}} \right) -1+ \Phi\!\left( \frac{x}{\sqrt{2T}} \right)  \right)\\
&\leq e^{-T} \left( \Phi\!\left( \frac{1}{\sqrt{2T}} \right) -1+ \Phi\!\left( \frac{1}{\sqrt{2T}} \right)  \right)\\
&= 2g(0)\\ 
&\leq h(x) = v_{\eu,h}(0,x)
\end{align*}
for any $x\in[0,1]$, it suffices to verify $v_{\eu,h} \geq g$ on the set $(0,T]\times [0,1]$.
In view of
\begin{align*}
D_1  v_{\eu,h}(\theta,x)
=- g(0){2xe^{-\theta}\over(2\theta)^{3/2}}\phi\biggl({x\over\sqrt{2\theta}}\biggr)<0 
\end{align*}
for any $(\theta,x)\in(0,T]\times [0,1]$,
we only need to show that
$v_{\eu,h}(T,x) - g(x)=v_{\eu,h-f}(T,x)$ is nonnegative for any $x \in [0,1]$.
We have
\[v_{\eu,h-f}(T,x)=E_x((h-f)(X_T)|X_T\geq0)P_x(X_T\geq0)\]
because $h-f$ vanishes on $(-\infty,0)$.
Since $h-f$ is increasing, Lemma \ref{l:stochasticdominance} yields 
\[{v_{\eu,h-f}(T,0)\over P_0(X_T\geq0)}\leq {v_{\eu,h-f}(T,x)\over P_x(X_T\geq0)}\] 
for any $x\in[0,1]$.
Using \eqref{e:innull} we conclude
\[0=v_{\eu,h}(T,0)-v_{\eu,f}(T,0)= v_{\eu,h-f}(T,0),\quad x\in[0,1]\] 
and hence $0\leq v_{\eu,h-f}(T,x)$
as desired.

We already observed that the functions $v_{\eu,h}$ and $g$ coincide on the stopping boundary associated to $C_{T,-1}$, i.e.\ 
$v_{\eu,h}(0,x) = 0 = g(x)$ for any $x<0$ and $v_{\eu,h}(\theta,0) = g(0)$ for any $\theta \in [0,T]$.
Consequently, optional sampling yields
\begin{align*}
 v_{\am,g}(\theta,x) 
 &= \E_x\bigl( g(X_{\tau_\theta}) e^{-r\tau_\theta}\bigr)\\
 &= \E_x\bigl( v_{\eu,h}(\theta -\tau_\theta,X_{\tau_\theta}) e^{-r\tau_\theta}\bigr)\\
 &= v_{\eu,h}(\theta ,x)
\end{align*}
for any $(\theta,x)\in C_{T,-1}$.
In particular, we observe that $\tau_\theta$ is an optimal stopping time for the stopping problem \eqref{e:vam} with time horizon $\theta$.
Altogether, this shows that the American payoff $g$ is represented by $h$ on the left connected component $C_{T,-1}$.
\end{example}

\subsection{The American option embedded into the European put}\label{su:subsection_am_opt_generated_by_put}
\begin{figure}
\centering
\includegraphics[width=0.85\textwidth]{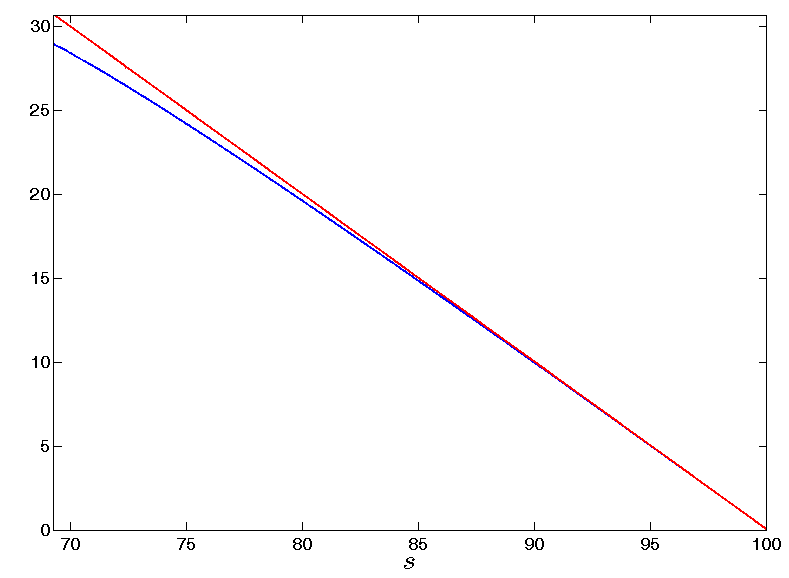}
\caption{The EAO (blue) associated to the European put with strike $K=100$ (red) in the Black-Scholes market with $T=1$, $r=0.06$, $\sigma=0.4$, and stock price $s=e^x$}\label{f:euput1}
\end{figure}
\begin{figure}
\centering
\includegraphics[width=0.85\textwidth]{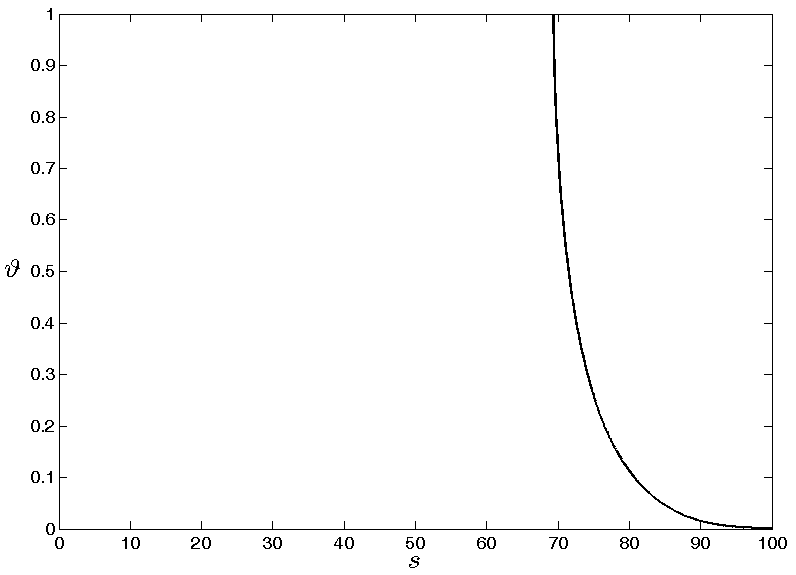}
\caption{The curve $\ctheta(x)$ associated to the European put with strike $K=100$ in the Black-Scholes market with $T=1$, $r=0.06$, $\sigma=0.4$, and stock price $s=e^x$}\label{f:euput2}
\end{figure}
The embedded American option of the European put has some interesting properties.
It is representable, but only for sufficiently small time horizons.
\begin{lemma}
By $f(x):=(K-e^x)^+$ we denote the European put option with some strike price $K>0$.
There are positive finite time horizons $T_1<T_2$ such that
\begin{enumerate}
 \item $\am_{T_1}(f)$ is represented by $f$ relative to $(T_1,x)$ for all $x\in[\log K,\infty)$,
 \item If $r>0$, there are no $T\in\rp$, $x\in[\log K,\infty)$ such that $\am_T(f)$ is representable relative to $(T_2,x)$.
 In particular, neither $\am_{T_1}(f)$ nor $\am_{T_2}(f)$ are represented by $f$ relative to $(T_2,\log K)$.
\end{enumerate}
\end{lemma}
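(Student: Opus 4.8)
The plan is to handle the two assertions separately, using Proposition~\ref{p:EAO} for the first and a structural study of the embedded options $\am_T(f)$ for the second. Throughout put $v(\theta,x):=v_{\eu,f}(\theta,x)$ for the Black--Scholes put price and recall from Proposition~\ref{p:PROPOSITION_1} that $\am_T(f)\le f$ and $\am_T(f)(x)=0$ for $x\ge\log K$.

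For assertion~(1) I would analyse the one-variable maps $\theta\mapsto v(\theta,x)$ via the Greek $\partial_\theta v(\theta,x)=\tfrac{e^x\sigma}{2\sqrt{\theta}}\varphi(d_1)-rKe^{-r\theta}\Phi(-d_2)$, with $d_1=\tfrac{x-\log K+(r+\sigma^2/2)\theta}{\sigma\sqrt{\theta}}$ and $d_2=d_1-\sigma\sqrt{\theta}$. This gives $\partial_\theta v(0+,x)=-rK<0$ for $x<\log K$, $\partial_\theta v(\theta,\log K)\to+\infty$ as $\theta\downarrow 0$, and $\theta\mapsto v(\theta,x)$ strictly decreasing on $\rp$ once $x$ is sufficiently negative. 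Hence, for fixed $T_0$, the minimiser of $\theta\mapsto v(\theta,x)$ over $[0,T_0]$ is $0$ for $x\ge\log K$, equals $T_0$ for $x$ small, and for $x$ just below $\log K$ is an interior local minimum $\ctheta(x)$ that tends to $0$ as $x\uparrow\log K$. If $T_1$ is taken small enough that this interior local minimum, wherever present, lies strictly inside $(0,T_1)$ and is never undercut by the value at $\theta=T_1$, then $x\mapsto\ctheta(x)$ is continuous on $\rr$, $[0,T_1]$-valued, and uniquely attained. Proposition~\ref{p:EAO}(1)--(2) then yields continuity of $\am_{T_1}(f)$ and $f\represents{T_1,x}\am_{T_1}(f)$ for every $x$ with $(T_1,x)$ in the continuation region of $\am_{T_1}(f)$; and $\am_{T_1}(f)(x)=0<v_{\am,\am_{T_1}(f)}(T_1,x)$ for all $x\ge\log K$, because $X$ reaches $\{\,\cdot<\log K\,\}$ --- where $\am_{T_1}(f)>0$ --- with positive probability. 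This proves~(1).

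For~(2) I would first record properties of $g:=\am_T(f)$ valid for every finite $T$: $g$ is continuous, nonincreasing (a pointwise infimum of the nonincreasing maps $v(\theta,\cdot)$), vanishes on $[\log K,\infty)$, has a convex kink at $\log K$ with one-sided slopes $-K$ and $0$, and satisfies $\mathscr{A}g\le 0$ on $(-\infty,\log K)$ in the distributional sense: on the set $\{0<\ctheta_T<T\}$ this is Proposition~\ref{p:EAO}(4); on $\{\ctheta_T=T\}$ one has $g=v(T,\cdot)$ and $\mathscr{A}g=\partial_\theta v(T,\cdot)\le 0$; and any interior kink of $g$ is concave and so contributes a nonpositive atom. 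Consequently the continuation region of $g$ for any horizon is $\{(\theta,x):x>\ell_T(\theta)\}$ with $\ell_T(\theta)<\log K$ and $\ell_T(0+)=\log K$, so the whole of $(0,T_2]\times[\log K,\infty)$ lies in one connected component. The ``in particular'' claim for $\am_{T_1}(f)$ is then immediate: deep in the money $\am_{T_1}(f)=v(T_1,\cdot)>v(T_2,\cdot)$ for any $T_2>T_1$, so $f$ fails to dominate $\am_{T_1}(f)$ up to time $T_2$ and cannot represent it relative to any $(T_2,x)$.

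For the general non-representability I would argue by contradiction: suppose $h\represents{T_2,x_0}\am_T(f)=g$ for some finite $T$ and some $x_0\ge\log K$. By Proposition~\ref{PROPOSITION_2}(2)--(5), $h$ must equal the CDEO $\eu_{T_2,x_0}(g)$, satisfy $h\ge 0$, $h=0$ on $[\log K,\infty)$ and $v_{\eu,h}\ge g$ up to $T_2$, represent its own embedded option, and $v_{\eu,h}=v_{\am,g}$ on the component $\mathcal{C}=\{(\theta,x):0<\theta\le T_2,\ x>\ell_T(\theta)\}$, across whose free boundary $v_{\eu,h}$ --- a solution of the Black--Scholes equation --- extends real-analytically by smooth fit. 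The plan is to show these requirements collide once $T_2$ is large: on $\mathcal{C}$ one has $v_{\eu,h}=v_{\am,g}\le\sup g=Ke^{-rT}$, whereas the domination requirement $v_{\eu,h}(\theta,\cdot)\ge g$ for all $\theta\le T_2$, together with $g(x)\to Ke^{-rT}$ as $x\to-\infty$, forces the tail of $h$ --- hence of $v_{\eu,h}(\theta,\cdot)$ for $\theta$ up to $T_2$ --- to be of order $Ke^{-rT}e^{rT_2}$ as $x\to-\infty$, and propagating this growth back along the free boundary into $\mathcal{C}$ violates the bound $Ke^{-rT}$ when $T_2$ is chosen sufficiently large. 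This is precisely where $r>0$ is used: for $r=0$ the put value is increasing in $\theta$, so $\am_T(f)=f$ and the time-horizon dichotomy disappears. Taking $T=T_1$ and $T=T_2$ then yields the stated ``in particular''. \emph{The hard part is this last step} --- making the incompatibility quantitative, in particular covering the regime $T\ge T_2$ (where the tail estimate alone does not bite) by a separate argument, so that non-representability holds uniformly in the horizon $T$ defining $g=\am_T(f)$; everything else reduces to Propositions~\ref{p:PROPOSITION_1}, \ref{PROPOSITION_2} and~\ref{p:EAO} together with elementary Black--Scholes Greek estimates.
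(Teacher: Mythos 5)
Your treatment of assertion~(1) is essentially the paper's: you analyse the sign structure of $D_1 v_{\eu,f}$ via the put Greeks, deduce that for small enough $T_1$ the minimising maturity $\ctheta(x)$ over $[0,T_1]$ jumps from $0$ on $[\log K,\infty)$ through interior values on an intermediate interval to $T_1$ for small $x$, and feed the resulting continuous curve into Proposition~\ref{p:EAO}. The paper does the same, only with more care in exhibiting a threshold $K_T$, verifying second-order conditions, and invoking Theorem~\ref{THEOREM_Implicit_Function} for analyticity of $\ctheta$. That part is fine.

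The proof of assertion~(2) has a genuine gap, and you acknowledge it yourself. Your route for the main claim (non-representability by \emph{any} European payoff) is a tail-growth contradiction: domination of $g = \am_T(f)$ over $[0,T_2]$ forces the putative representing payoff $h$ to have a large tail as $x\to-\infty$, which is then supposed to clash with the bound $v_{\eu,h}=v_{\am,g}$ on the continuation region. You concede that ``the hard part is this last step --- making the incompatibility quantitative, in particular covering the regime $T\geq T_2$.'' That step is not an omission one can wave at: as $T\to\infty$, $\am_T(f)\to 0$ pointwise, so the deep-in-the-money gap you want to exploit shrinks, and the proposed quantitative conflict is exactly what is missing. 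As written, the argument does not establish the lemma.

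The paper avoids tail estimates altogether and instead \emph{reduces assertion~(2) to assertion~(1)} through the rigidity of representations proved in Proposition~\ref{PROPOSITION_2}. Concretely: fix $x_0<\log K$ and pick $T_2$ with $v_{\eu,f}(T_2,x_0)<\am_T(f)(x_0)$ (possible since put prices vanish as $\theta\to\infty$). Suppose some $\widetilde f$ represents $\am_T(f)$ relative to $(T_2,x)$ with $x\geq\log K$. Observe that $\am_T(f)$ and $\am_{T_1}(f)$ coincide on $[x_0,\infty)$ for $x_0$ close enough to $\log K$, so Proposition~\ref{PROPOSITION_2}(6) upgrades assertion~(1) to: $f$ represents $\am_T(f)$ relative to $(T_1,x)$. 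Proposition~\ref{PROPOSITION_2}(1) restricts $\widetilde f$ down to $(T_1,x)$ inside the same connected component, and Proposition~\ref{PROPOSITION_2}(2) (uniqueness up to a Lebesgue-null set) then forces $\widetilde f=f$ a.e. This yields the clean contradiction $\am_T(f)(x_0)\leq v_{\eu,\widetilde f}(T_2,x_0)=v_{\eu,f}(T_2,x_0)<\am_T(f)(x_0)$. Your ``in particular'' observation (that $f$ itself cannot dominate $\am_{T_1}(f)$ up to $T_2$ because $v(T_1,\cdot)>v(T_2,\cdot)$ deep in the money when $r>0$) is a correct and quick argument for the special case of $f$, but it does not address the general claim and is not what does the work in the paper. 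You should replace your tail-growth step by the reduction through Proposition~\ref{PROPOSITION_2}(1), (2) and (6).
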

\begin{proof}
\begin{enumerate}
\item Owing to the Black-Scholes formula, the value function of the European put is given by
\begin{equation}\label{EQ_Black_scholes_formula}
v_{\eu, f}(\theta, x) =   e^{-r \theta} K\Phi(-d_2(\theta,e^x)) -  s \Phi(-d_1(\theta,e^x))
\end{equation}
with
\begin{equation*}
\begin{aligned}
d_1(\theta,s) &:= \frac{\log(s/K) + (r  + \sigma^2/2)\theta}{\sigma\sqrt{\theta}}, \\
d_2(\theta,s) &:= \frac{\log(s/K) + (r  - \sigma^2/2)\theta}{\sigma\sqrt{\theta}}, 
\end{aligned}
\end{equation*}
where $s=e^x$ denotes the spot price of the underlying and $\theta \in \rr_+$ the maturity of the option.
We show that for sufficiently small terminal time $T$, there exists a continuous function $\ctheta(x):\rr\to[0,T]$ with
$\am_{T}(f)(x) = v_{\eu, f}(\ctheta(x),x)$ for any $x\in\rr$. 
Proposition \ref{p:EAO} then warrants that the American payoff $\am_{T}(f)$ is represented by $f$ relative to any $x\in\rr$ with $(T,x)\in C_T$.
We recall the following well-known partial derivatives of $v_{\eu, f}$:
\begin{align*}
D_1v_{\eu, f} (\theta,x)&=   \frac{e^x \sigma}{2 \sqrt{\theta}} \varphi(d_1(\theta,e^x)) - r K e^{-r \theta}\Phi(-d_2(\theta,e^x)), \\
e^{-x}D_2v_{\eu, f}(\theta,x) &=   - \Phi(-d_1(\theta,e^x)), \\
e^{-x}D_{12}v_{\eu, f}(\theta,x) &=   \frac{ \left( r +  \sigma^2/2 \right) \theta - \log\!\left( e^x/K \right) }{2 \theta^{3/2} \sigma}  \varphi(d_1(\theta,e^x)).
\end{align*}
Consequently, for any $(\theta,x)\in \rr_+ \times \rr$  we have 
\begin{equation}\label{EQ_dxTheta_positive}
D_{12}v_{\eu, f}(\theta,x) > 0
\end{equation}
if and only if $e^x < K \exp\!\left( (r + \sigma^2 /2)\theta \right)$.
Moreover, one easily verifies that the following properties are satisfied for any $T>0$:
\begin{align}
&\liminf_{x\downarrow -\infty} \sup_{\theta\in [0,T]} D_1v_{\eu, f}(\theta,x) < 0 \label{EQ_Theta_neg_at_small_price},\\
&\lim_{\theta \downarrow 0} D_1v_{\eu, f}(\theta,\log K) =  \infty \label{EQ_Theta_large_at_strike},  \\
&\lim_{\theta\downarrow 0} D_1v_{\eu, f}(\theta,x) =  -r K,\quad x \in (-\infty, \log K)  . \label{EQ_Theta_neg_at_beginning} 
\end{align}
By \eqref{EQ_Theta_large_at_strike} there is some constant $T_{\max}>0$ such that $D_1v_{\eu, f}(\theta,\log K) > 0$ for any $\theta\in (0,T_{\max})$.
Let $T\in(0,T_{\max})$.
Property \eqref{EQ_Theta_neg_at_small_price} warrants that $\liminf_{x\downarrow-\infty} D_1v_{\eu, f}(T,x) < 0$.
Due to \eqref{EQ_dxTheta_positive} and the intermediate value theorem, there exists a unique $K_T \in (0,K)$
such that $D_1v_{\eu, f}(T,\log K_T)=0$, $D_1v_{\eu, f}(T,x) < 0$ for $x\in (0,\log K_T)$, and $D_1v_{\eu, f}(T,x) > 0$ for $x\in (\log K_T,\log K]$. 
Taking \eqref{EQ_Theta_neg_at_beginning} into account, we conclude that
\[m(x):=\min_{\theta \in [0,T]} v_{\eu, f}(\theta,x) < v_{\eu, f}(0,x) \wedge v_{\eu, f}(T,x),\quad x\in (\log K_T,\log K).\]
Put differently, the nonempty compact set 
\[M_x := \{ \theta \in [0,T] : v_{\eu, f}(\theta,x) = m(x)\}\] 
is contained in the open interval $(0,T)$ for any $x \in (\log K_T,\log K)$. We write \[\ctheta(x):= \max M_x\] for the largest value of the set $M_x$.
For any $x \in (\log K_T,\log K)$ we have $D_1v_{\eu, f}(\ctheta(x),x)=0$. By decreasing the bound $T_{\max}$ we can always achieve that 
$D_{11}v_{\eu, f}(\ctheta(x),x)>0$ for any $x \in (\log K_T,\log K)$. This can be verified by analysing the asymptotic behaviour of the 
derivative $D_{11}v_{\eu, f}$ as $\theta\to 0$. The calculation is elementary but somewhat lengthy and therefore omitted. 
Theorem \ref{THEOREM_Implicit_Function} yields that the mapping $x \mapsto \ctheta(x)$ is analytic on some open complex domain containing the interval $(\log K_T,\log K)$.
Moreover, owing to \eqref{EQ_dxTheta_positive} we have
\begin{equation}\label{EQ_derivative_of_curve_ctheta}
\ctheta'(x) = - \frac{  D_{12}v_{\eu, f}(\ctheta(x),x)}{D_{11}v_{\eu, f}(\ctheta(x),x)} < 0  
\end{equation}
for any $x \in (\log K_T,\log K)$, which implies that the limits $\lim_{x \downarrow \log K_T} \ctheta(x)$ and  $\lim_{x \uparrow \log K} \ctheta(x)$ exist.
Note that the mapping $[0,T] \ni \theta \mapsto v_{\eu, f}(\theta,x)$ attains its unique minimum at $\theta=0$ for any $x\geq \log K$.

A simple calculation shows that $\ctheta(\log K):=0$ extends the curve $\ctheta$ continuously to $x=\log K$.
Indeed, assuming $v_1:= \lim_{x \uparrow \log K} \ctheta(x) > 0$ yields $\ctheta(x) \in (v_1,T]$ for any $x \in (\log K_T,\log K)$. The mapping 
$D_1v_{\eu, f}$ is continuous on $(0,\infty)\times\rr$. Thus we obtain the contradiction
$ 0<  D_1v_{\eu, f}(v_1,\log K) = \lim_{x \uparrow \log K} D_1v_{\eu, f}(\ctheta(x),x) = 0$.

By possibly decreasing $T_{\max}$ further, we can achieve that $\ctheta(\log K_T) := T$ extends the curve continuously into $x=\log K_T$.

A similar argument as above shows that for any $x\in (0,\log K_T)$ the minimum of the mapping $[0,T] \ni \theta \mapsto v_{\eu, f}(\theta,x)$ is attained at $\theta = T$.
Indeed, by \eqref{EQ_Theta_neg_at_beginning} no minimum can be located at $\theta = 0$.
Now assume that for some $x \in (0,\log K_T)$ a minimum is attained at some maturity $\rho \in (0,T)$. 
Denoting by ${\ctheta}^{-1}$ the inverse function of $\ctheta|_{(\log K_T,\log K)}$,
property \eqref{EQ_dxTheta_positive} yields
$0 = D_1v_{\eu, f}(\rho,x) < D_1v_{\eu, f}(\rho, {\ctheta}^{-1}(\rho)) = 0$
and hence a contradiction.

Altogether we have found the desired function $\ctheta:\rr\to[0,T]$.

\item 
With regard to the Black-Scholes formula \eqref{EQ_Black_scholes_formula}, it is apparent that $\lim_{\theta\to\infty} v_{\eu,f}(\theta,x) = 0$ for any $x\in\rr$.
For fixed $x_0<\log K$ choose $T_2$ large enough with $v_{\eu,f}(T_2,x_0)<\am_T(f)(x_0)$.
Let $T\in\rp$ be arbitrary and $T_1\leq T$ as in the first assertion.

Assume by contradiction that $\am_T(f)$ is represented by some $\widetilde f$ relative to $(T_2,x)$ with some $x\geq \log K$.

For sufficiently large $x_0<\log K$ we have
$\am_T(f)=\am_{T_1}(f)$ on $[x_0,\infty)$.
Indeed, $\am_{T_1}(f)(x)\to 0$ as $x\to 0$ and 
$x\mapsto \inf_{\theta\in[T,T_1]}v_{\eu,f}(\theta,x)$ has a positive lower bound on the compact interval $[\log K-1,\log K]$ because
$v_{\eu,f}$ is continuous and strictly positive on $[T_1,T]\times[\log K-1,\log K]$.

After possibly decreasing $T_1$ we can apply Proposition \ref{PROPOSITION_2}(6) and obtain that $f$ represents $\am_T(f)$ 
relative to $(T_1,x)$.
Proposition \ref{PROPOSITION_2}(1,2) yields that the mappings $f$ and $\widetilde{f}$ coincide up to a Lebesgue-null set. 
Hence we obtain the contradiction
\[\am_T(f)(x_0) \leq v_{\eu,\widetilde{f}}(T_2,x_0) = v_{\eu,f}(T_2,x_0)< \am_T(f)(x_0).\qedhere\]
\end{enumerate}
\end{proof}

The embedded American option of the European put and the curve $\ctheta$ in the proof of the previous lemma are illustrated in Figures \ref{f:euput1}, \ref{f:euput2}.

\section{Existence of the CDEO and representability}\label{s:section_verification_theorem}
The aim of this section  is to establish the existence of cheapest dominating European options and, more importantly, to verify that that a given American option 
is represented by its CDEO.
For ease of exposition we focus on payoffs of a particular form. 

\subsection{Main results}
We consider the basic model of (\ref{e:EQ_BS_market_from_Intro}) 
with initial logarithmic stock price $X_0=x_0$ and fixed time horizon $T$.
We are primarily interested in the American put but for the theorems below it satisfies to assume a certain more general structure.
Specifically, we consider payoffs of the form
\begin{align}\label{e:TYPE_ONE_PAYOFF}
 g(x) = 1_{(-\infty,K]}(x) \phi(x),\quad x\in\rr
\end{align}
with $K\in\rr$ and an analytic function $\phi:U\to\cc$ on some domain $U\subset\cc$ 
such that
\begin{enumerate}
 \item $\phi(x)\in(0,\infty)$ for $x\in(-\infty,K)$,
 \item $\phi(K) = 0$,
 \item the growth condition
 \begin{equation}\label{EQ_Growth_condition_payoff}
 \lim_{\rr\ni x\to -\infty} e^{ (2r/\sigma^2)x}\phi(x) = 0
 \end{equation}
 holds,
 \item the necessary concavity condition
 \begin{align}\label{eq:function_c}
 c(x):=g''(x)-  \frac{2r}{\sigma^2}(g(x)-g'(x))- g'(x) \leq 0,\quad x\in(-\infty,K)
 \end{align}
 from \eqref{e:concavity} in Proposition \ref{p:EAO} holds.
\end{enumerate}
These assumptions are satisfied and in fact motivated by the payoff $g(x)=\left( e^K - e^x \right)^+$, 
which corresponds to the American put.

Our first goal is to  show that the cheapest dominating European option of $g$ relative to $(T,x_0)$ exists
in a suitably generalised sense. 
If $f:\rr\to\rp$ denotes a European payoff function, we have
\begin{equation}\label{eq:Def_Veu_f_without_normalisation}
v_{\eu, f}(\theta, x)=
e^{-r\theta}\int\nno\bigl(x+ \widehat r\theta,\sigma^2\theta,y\bigr) f(y)d y,
\end{equation}
where 
$\widehat r:=r-{\sigma^2\over2}$.
Put differently, we obtain
\begin{equation}\label{e:mupreis}
v_{\eu, f}(\theta, x)
=e^{-r\theta}\int\frac{ \nno\!\left(x +\widehat r \theta, \sigma^2 \theta, y \right)}{ \nno\!\left(x_0 +\widehat r T, \sigma^2 T, y \right)}\mu(dy)
\end{equation}
for the measure $\mu$ on $\rr$ with density
$f$ relative to $\nnormal(x_0 +\widehat r T, \sigma^2 T)$.

In the European valuation problem, the payoff function $f$ is only needed for defining the pricing function $v_{\eu, f}$.
In view of (\ref{e:mupreis}) we can and do therefore extend the notion of a payoff ``function'' to include all 
$\mu\in\M^+ (\rr)$, where $\M^+(\rr)$ denotes the set of measures on $\rr$. 
In line with (\ref{e:mupreis}), we define the pricing operator $v_{\eu, \mu}:\rp\times\rr\to[0,\infty]$ by
\begin{equation}\label{e:mu_preis_op}
 v_{\eu, \mu}(\theta, x)
:=e^{-r\theta}\int \frac{ \nno\!\left(x +\widehat r \theta, \sigma^2 \theta, y \right)}{ \nno\!\left(x_0 +\widehat r T, \sigma^2 T, y \right)}\mu(dy), 
\quad (\theta,x)\in (0,\infty)\times\rr
\end{equation}
and
\begin{equation}\label{e:mu_preis_op_rand}
 v_{\eu, \mu}(0, x):=\liminf_{(\theta,y) \to (0,x)}v_{\eu, \mu}(\theta, y),\quad x\in\rr.
\end{equation}
In terms of our generalised domain, the linear problem (\ref{e:CDEO_general_opt_problem}) now reads as
\begin{equation}
\label{PROGRAM_primal_general}
\begin{aligned}
& \text{minimise}   & &  v_{\eu, \mu}(T, x_0)  \\
& \text{subject to} & &   \mu \in \M^+(\rr), \\
&&& v_{\eu, \mu}(\theta,x)\geq g(x)\text{ for any } (\theta,x)\in [0,T]\times\rr.
\end{aligned}
\end{equation}
In line with Definition \ref{d:DEF_EAO_CDEO}(3), 
a minimiser $\mu^*$ is called \emph{cheapest dominating European option (CDEO)} of $g$ relative to
$(T,x_0)$. 
Our first main result establishes its existence.
\begin{theorem}\label{t:Theorem_summary0} The optimal value of programme \eqref{PROGRAM_primal_general} 
is obtained by some $\mu^* \in \M^+(-\infty,K]$, i.e.\ some measure on $\rr$ which is concentrated on $ (-\infty,K]$.
In particular, a CDEO of $g$ relative to $(T,x_0)$ exists in the present generalised sense.
\end{theorem}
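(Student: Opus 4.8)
The plan is to exhibit the CDEO as a limit of a minimizing sequence, using a weak-* compactness argument on measures together with lower semicontinuity of the objective. First I would verify that programme \eqref{PROGRAM_primal_general} is feasible: the payoff $g$ is bounded (it is supported on $(-\infty,K]$, continuous there with $\phi(K)=0$, and the growth condition \eqref{EQ_Growth_condition_payoff} controls its behaviour at $-\infty$, so in fact $g$ is bounded), hence a sufficiently large multiple of a Gaussian density — equivalently a constant payoff $f\equiv c$ — dominates $g$ up to time $T$; this shows the optimal value $V^\star$ is finite. Take a minimizing sequence $(\mu_n)$ of admissible measures with $v_{\eu,\mu_n}(T,x_0)\to V^\star$. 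Evaluating the constraint $v_{\eu,\mu_n}(T,x)\ge g(x)$ and using \eqref{e:mu_preis_op}, the total mass $\mu_n(\rr)$ is controlled: indeed $v_{\eu,\mu_n}(T,x_0)$ is, up to the fixed factor $e^{-rT}$ and the Gaussian weights, an integral of $\mu_n$ against a strictly positive bounded-below-on-compacts function, so $\sup_n\mu_n(\rr)<\infty$ and $(\mu_n)$ is tight once we rule out mass escaping to infinity.

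The key step is the passage to the limit. After extracting a subsequence, $\mu_n$ converges vaguely (weak-* as elements of the space of finite signed measures, paired against $C_0$) to some $\mu^\star\in\M^+(\rr)$; mass may a priori be lost at $\pm\infty$, but this only decreases $v_{\eu,\mu^\star}(\theta,x)$, which is what we want for the objective and must be handled carefully for the constraints. For each fixed $(\theta,x)\in(0,T]\times\rr$ the integrand $y\mapsto e^{-r\theta}\nno(x+\widehat r\theta,\sigma^2\theta,y)/\nno(x_0+\widehat rT,\sigma^2T,y)$ is continuous; as $\theta<T$ the Gaussian in the numerator has smaller variance than the one in the denominator, so this ratio \emph{decays} to $0$ as $|y|\to\infty$, i.e.\ it lies in $C_0(\rr)$. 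Hence $v_{\eu,\mu_n}(\theta,x)\to v_{\eu,\mu^\star}(\theta,x)$ for every such $(\theta,x)$, giving $v_{\eu,\mu^\star}(\theta,x)\ge g(x)$ for $(\theta,x)\in(0,T]\times\rr$. For $\theta=T$ the ratio is constant $\equiv e^{-rT}$, which is bounded but not in $C_0$, so I would instead use that $v_{\eu,\mu_n}(T,x_0)=e^{-rT}\mu_n(\rr)$ together with lower semicontinuity of total mass under vague convergence, $\mu^\star(\rr)\le\liminf\mu_n(\rr)$, to get $v_{\eu,\mu^\star}(T,x_0)\le V^\star$; and the constraint at $\theta=T$ follows from the $\theta<T$ constraints by letting $\theta\uparrow T$ and invoking the definition \eqref{e:mu_preis_op_rand}-style liminf / monotonicity, or simply because $v_{\eu,\mu^\star}(T,\cdot)$ dominates $g$ by continuity of $v_{\eu,\mu^\star}$ in $\theta$ on $(0,T]$ and Fatou. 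Thus $\mu^\star$ is admissible and attains $V^\star$, so it is a CDEO.

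Finally I would show $\mu^\star$ can be taken concentrated on $(-\infty,K]$. Given any CDEO $\mu^\star$, let $\mu^\star=\mu^\star|_{(-\infty,K]}+\mu^\star|_{(K,\infty)}$ and set $\widehat\mu:=\mu^\star|_{(-\infty,K]}$. Dropping the part on $(K,\infty)$ only decreases $v_{\eu,\cdot}(T,x_0)$, so it suffices to check $\widehat\mu$ still dominates $g$. Since $g$ vanishes on $(K,\infty)$ this is automatic there; on $(-\infty,K]$ one uses the concavity condition \eqref{eq:function_c} — which says exactly that $g$ is a subsolution of the Black–Scholes operator on $(-\infty,K)$ — to argue that the contribution of mass placed to the right of $K$ is not needed. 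Concretely, for $x\le K$ the harmonic-type minorant of $g$ built from the discounted European value of a measure supported on $(-\infty,K]$ already dominates $g$: this is essentially the content of Proposition \ref{p:EAO}(4) read in reverse, and I expect the cleanest route is to show that replacing any atom of $\mu^\star$ at a point $y>K$ by a suitable atom at $K$ (or at the reflected/projected point) does not destroy the domination constraint, thanks to \eqref{eq:function_c}, while not increasing the cost. \textbf{The main obstacle} is precisely this last reduction to $(-\infty,K]$ together with the boundary subtleties at $\theta=0$ and $\theta=T$ in the passage to the limit: controlling the possible loss of mass at infinity under vague convergence so that the constraints — which must hold for \emph{all} $(\theta,x)$, including the degenerate endpoints where the pricing kernel leaves $C_0(\rr)$ — are preserved in the limit. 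I would handle $\theta=0$ via \eqref{e:mu_preis_op_rand} (the constraint there is implied by the $\theta>0$ constraints since $g$ is upper semicontinuous and the liminf is taken) and $\theta=T$ as indicated above.
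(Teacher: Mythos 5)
Your overall strategy --- direct minimization via a minimizing sequence, vague compactness from the mass bound $\|\mu_n\|=e^{rT}v_{\eu,\mu_n}(T,x_0)$, and membership of the pricing kernel in $C_0(\rr)$ for $\theta<T$ --- is genuinely different from the paper's. The paper establishes Theorem~\ref{t:Theorem_summary0} as a by-product of a full duality framework: it regularizes to the compact-constraint problems~\ref{PROGRAM_P_eps}, proves strong duality and attainment there, and passes to the limit $\epsilon\to0$; this heavier machinery is chosen because dual attainment and complementary slackness are needed later for Theorem~\ref{t:THEOREM_Hauptsatz}. For the primal existence statement alone, your direct method is a sensible and more economical route, and the observation that the kernel decays at $\pm\infty$ precisely when $\theta<T$ is exactly the right reason vague convergence preserves those constraints. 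However, two steps contain genuine gaps.

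First, the localisation to $\M^+(-\infty,K]$ --- the substantive claim of the theorem --- rests on the wrong mechanism. You propose to \emph{drop} $\mu^\star|_{(K,\infty)}$ and justify preservation of the constraint via the concavity condition~\eqref{eq:function_c}. Dropping mass lowers $v_{\eu,\cdot}(\theta,x)$ everywhere, so it can easily break the domination constraint for $x$ just below $K$; and~\eqref{eq:function_c} is a property of $g$ alone, not of the pricing kernel, and in fact plays no role in the paper's proof of Theorem~\ref{t:Theorem_summary0}. What actually works is the paper's map $s$: \emph{relocate} all mass in $(K,\infty)$ to the point $K$. By Lemma~\ref{LEMMA_norm_and_quotient_of_normal_pdf}(2)--(3), for $x<K<x_0$ and $\theta\in(0,T]$ the kernel $y\mapsto\kappa(\theta,x,y)$ is a Gaussian bump centred at $A(\theta,x)<x<K$, hence strictly decreasing on $[K,\infty)$; moving mass from $y>K$ to $K$ therefore \emph{increases} $v_{\eu,\cdot}(\theta,x)$ for all $x<K$, while $\|\mu\|$ --- and so the cost $e^{-rT}\|\mu\|=v_{\eu,\mu}(T,x_0)$ --- is unchanged. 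If you apply this map to the minimizing sequence before passing to the limit, the limit is automatically supported in $(-\infty,K]$, which is vaguely closed.

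Second, your handling of the constraint at $\theta=T$, $x\neq x_0$ is incorrect. The ratio $\nno(x+\widehat r T,\sigma^2 T,\cdot)/\nno(x_0+\widehat r T,\sigma^2 T,\cdot)$ is constant only when $x=x_0$; for general $x$ it equals $\exp(y(x-x_0)/\sigma^2 T)$ up to a constant (Lemma~\ref{LEMMA_norm_and_quotient_of_normal_pdf}(3)) and is therefore unbounded, not in $C_0(\rr)$. Your Fatou gloss points the wrong way: Fatou yields $v_{\eu,\mu^\star}(T,x)\le\liminf_{\theta\uparrow T}v_{\eu,\mu^\star}(\theta,x)$, which is useless for deducing $v_{\eu,\mu^\star}(T,x)\ge g(x)$. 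What closes the gap is the Chapman--Kolmogorov convolution identity $v_{\eu,\mu}(T,x)=\E_x\bigl(e^{-r\epsilon}v_{\eu,\mu}(T-\epsilon,X_\epsilon)\bigr)\ge\E_x\bigl(e^{-r\epsilon}g(X_\epsilon)\bigr)\to g(x)$ as $\epsilon\downarrow0$, using continuity and boundedness of $g$; this shows the $\theta=T$ constraint is implied by those at $\theta<T$ for any admissible $\mu$.
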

The proof is to be found in Section \ref{Section_Proof_Thm1}.
We now turn to the question whether the CDEO actually generates the American claim $g$ under consideration.

\begin{theorem}\label{t:THEOREM_Hauptsatz} Let $\mu^*$ denote an optimal measure from Theorem \ref{t:Theorem_summary0}. 
Suppose that the following assumptions are satisfied for some constant $\delta >0$:
\begin{enumerate}
\item There exists some  $x_1 \in \rr$ such that  $v_{\eu, \mu^*}(T+2\delta, x_1)<\infty$. 
\item For any $\theta\in (0,T+\delta)$ the function $x \mapsto v_{\eu, \mu^*}(\theta, x)-g(x)$ 
assumes its unique minimum within the interval $(-\infty,K]$ at some point $\cx(\theta) \in (-\infty,K)$. 
Moreover, we have $\liminf_{\theta\to 0} \cx(\theta) = K =: x(0)$.
\item The well-defined quantity
\begin{equation}\label{EQ_function_H_of_verification_theorem}
H(\theta,x):= \frac{2}{\sigma^2}D_1 v_{\eu, \mu^*}(\theta, x) 
+ \frac{2r}{\sigma^2}(v_{\eu, \mu^*}(\theta, x)-g(x))-c(x)
\end{equation}
is strictly positive on the set 
$\{(\theta, \cx(\theta)) : \theta\in(0,T]\}$. 
\item We have $\liminf_{\theta \to 0} v_{\eu, \mu^*}(\theta, K) <\infty$. 
\end{enumerate}
Define $C_{T,x_0}$ as in Section \ref{s:EAO_CDEO} and set
\begin{equation}\label{EQ_set_C_tilde_from_verification_theorem}
\widetilde C_{T,x_0}:= \{(\theta,x) \in(0,T] \times \rr: \cx(\theta) < x \}.
\end{equation}
Then the following statements hold.
\begin{enumerate}
 \item The function $\theta \mapsto \cx(\theta)$ 
 is strictly increasing and it can be extended to some analytic function on a complex domain containing $(0,T]$.
 \item We have $v_{\eu, \mu^*}(\theta,\cx(\theta))  = g(\cx(\theta))$ for any $\theta \in [0,T]$. 
 \item The CDEO $\mu^*$ is the unique measure that \emph{represents} $g$ relative to $(T,x_0)$
 in the sense that $v_{\am, g}(\theta,x)\leq v_{\eu, \mu^*}(\theta,x)$ for any  $(\theta,x)\in[0,T]\times\rr$
 and equality holds on $C_{T,x_0}$. 
 \item The payoff $g$ coincides on $\cl\,\pi( \widetilde C_{T,x_0} ) = [\min_{\theta\in (0,T]} \cx(\theta),\infty)$ 
 with the \emph{embedded American option} of $\mu^*$ up to $T$ 
 in the sense that 
 \[g(x)=\inf_{\theta\in[0,T]} v_{\eu, \mu^*}(\theta,x)=:\am_T(\mu^*)(x) ,\quad  x \in \pi( \widetilde C_{T,x_0}).\]
 \item $\widetilde C_{T,x_0} = C_{T,x_0}$, which can be interpreted in the sense that $\cx$ parametrises the optimal stopping boundary.
 \item The stopping time 
  \begin{equation}\label{EQ_optimal_stopping_time_from_Hauptsatz}
   \tau_\theta :=\inf\{t\in[0,\theta] :  X_t \leq \cx(\theta-t) \} \wedge \theta
  \end{equation}
  is optimal in \eqref{e:vam},  i.e.\ $v_{\am,g}(\theta,x) = \E_x(e^{-r \tau_\theta} g(X_{\tau_\theta}))$ holds for any  $(\theta,x)\in [0,T] \times \rr$.
\end{enumerate}
\end{theorem}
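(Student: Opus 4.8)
The plan is to follow the pattern already established in the proof of Proposition~\ref{p:EAO} and Proposition~\ref{PROPOSITION_2}, but now working with the CDEO $\mu^*$ supplied by Theorem~\ref{t:Theorem_summary0} and exploiting assumptions (1)--(4) to build the early exercise curve from the function $\cx$. First I would observe that assumption~(1), together with the explicit Gaussian kernel in \eqref{e:mu_preis_op} and dominated convergence, guarantees that $v_{\eu,\mu^*}$ is finite and real-analytic on a complex neighbourhood of $(0,T+\delta)\times\rr$; this is where Lemma~\ref{LEMMA_analyticity_sol_heat_eaqtion}-type arguments enter and it legitimises all the differentiations below, in particular the definition of $H$ in \eqref{EQ_function_H_of_verification_theorem}. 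Then, for the first assertion, I would set $\Psi(\theta,x):=v_{\eu,\mu^*}(\theta,x)-g(x)$, note that the interior unique minimiser $\cx(\theta)$ from assumption~(2) satisfies the first order condition $D_2\Psi(\theta,\cx(\theta))=0$, and differentiate this identity in $\theta$. Using that $v_{\eu,\mu^*}$ solves the Black--Scholes PDE $(\mathscr A-D_1)v_{\eu,\mu^*}=0$ (exactly as in \eqref{EQ_A_applied_to_g}), one rewrites $D_1\Psi$ in terms of second $x$-derivatives and the function $c$, and a short computation identifies $\cx'(\theta)$ with a ratio whose numerator is essentially $H(\theta,\cx(\theta))>0$ (assumption~(3)) and whose denominator is $D_{22}\Psi(\theta,\cx(\theta))\ge 0$; strict positivity of the latter has to be argued from the strictness of the minimum, after which $\cx'>0$ follows and the implicit function theorem (Theorem~\ref{THEOREM_Implicit_Function}) upgrades $\cx$ to an analytic curve. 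Assertion~(2) is then the statement that the value of the minimum, $\Psi(\theta,\cx(\theta))$, is identically zero: the inequality $\ge0$ is immediate since $v_{\eu,\mu^*}$ dominates $g$, and the reverse is obtained by a boundary argument at $\theta\to0$ using assumptions (2) and (4) ($\cx(\theta)\to K$, $g(K)=0$, and $\liminf v_{\eu,\mu^*}(\theta,K)<\infty$ forces the limit to be $g(K)=0$, and monotonicity/analyticity propagate it to all $\theta$).

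For assertions~(3)--(6) I would run the verification-theorem mechanics. Define $\widetilde C_{T,x_0}$ by \eqref{EQ_set_C_tilde_from_verification_theorem} and the candidate stopping time $\tau_\theta$ by \eqref{EQ_optimal_stopping_time_from_Hauptsatz}. The key martingale input is that $(e^{-rt}v_{\eu,\mu^*}(\theta-t,X_t))_{t\in[0,\theta]}$ is a martingale (Markov property, as in Proposition~\ref{p:PROPOSITION_1}(2)); optional sampling at $\tau_\theta$ together with assertion~(2) — which says $v_{\eu,\mu^*}$ and $g$ agree along the curve $\{(\theta,\cx(\theta))\}$ — yields
\[
v_{\eu,\mu^*}(\theta,x)=\E_x\!\bigl(e^{-r\tau_\theta}v_{\eu,\mu^*}(\theta-\tau_\theta,X_{\tau_\theta})\bigr)=\E_x\!\bigl(e^{-r\tau_\theta}g(X_{\tau_\theta})\bigr)\le v_{\am,g}(\theta,x)
\]
for $(\theta,x)\in\widetilde C_{T,x_0}$ (here one uses that $X_{\tau_\theta}$ lands on $\cl\,\pi(\widetilde C_{T,x_0})$, so $g$ may be applied); since also $v_{\am,g}\le v_{\eu,\mu^*}$ by \eqref{e:glessf} (because $\mu^*$ superreplicates $g$), equality holds on $\widetilde C_{T,x_0}$. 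To identify $\widetilde C_{T,x_0}$ with the continuation region $C_{T,x_0}$ (assertion~(5)) I would show both inclusions: on $\widetilde C_{T,x_0}$ we have $\Psi(\theta,x)>0$ since $x>\cx(\theta)$ is not the (unique) minimiser and $\Psi(\theta,\cx(\theta))=0$, i.e.\ $v_{\am,g}(\theta,x)=v_{\eu,\mu^*}(\theta,x)>g(x)$, so $\widetilde C_{T,x_0}\subset C_T$, and connectedness plus containing $(T,x_0)$ (which needs $x_0>\cx(T)$, i.e.\ $(T,x_0)\in C_T$) gives $\widetilde C_{T,x_0}\subset C_{T,x_0}$; for the reverse, on the complement $\{x\le\cx(\theta)\}$ one argues, using $\Psi\ge0$, the first/second order conditions at $\cx(\theta)$, and the sign of $\mathscr A g$ via $c\le0$ (assumption~\eqref{eq:function_c}), that $v_{\am,g}(\theta,x)=g(x)$, so those points are in the stopping region. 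Assertion~(6) is then just the optimality of $\tau_\theta$, which is exactly the computation above once $\widetilde C_{T,x_0}=C_{T,x_0}$ is known, and assertion~(3)'s uniqueness follows from Proposition~\ref{PROPOSITION_2}(2,3) (the representing payoff is unique a.e.\ and equals the CDEO). Finally assertion~(4) follows from Proposition~\ref{PROPOSITION_2}(4): $g(x)=\am_T(\mu^*)(x)$ on $\cl\,\pi(C_{T,x_0})$, and $\pi(\widetilde C_{T,x_0})=(\min_{\theta\in(0,T]}\cx(\theta),\infty)$ by monotonicity of $\cx$.

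\textbf{Main obstacle.}
The delicate point is not the martingale bookkeeping — that is a direct transcription of arguments already in the paper — but rather two analytic lemmas that glue everything together: first, upgrading ``$D_{22}\Psi(\theta,\cx(\theta))\ge0$'' to a strict inequality (or more precisely, showing that the sign of $\cx'$ is governed purely by $H>0$), which requires a careful local analysis of the unique interior minimiser and may force a reduction of $\delta$ or an appeal to the real-analyticity to rule out degenerate flat minima; and second, the boundary behaviour as $\theta\downarrow0$, where one must reconcile $\cx(\theta)\to K$, the possible blow-up of $v_{\eu,\mu^*}$ near the boundary of its domain of finiteness, and the requirement $g(\cx(\theta))\to0$ — this is precisely what assumption~(4) is designed to control, and making the limiting argument rigorous (that $\Psi(0,K)=0$ and hence, by the monotone/analytic structure, $\Psi\equiv0$ along the curve) is the technical heart of the proof. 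Everything else — finiteness and analyticity of $v_{\eu,\mu^*}$, the Black--Scholes PDE it satisfies, optional sampling, and the connectedness arguments — is routine given the machinery already assembled in Sections~\ref{s:SEC_Representable_options} and the cited references.
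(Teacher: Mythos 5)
Your martingale/optional sampling mechanics for assertions (3), (5), (6) match the paper's approach, and your reading of what assumptions (1)--(4) are for is largely correct. But there are two substantive problems, and the first is fatal.

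\textbf{Proof of assertion (2) cannot be done by boundary propagation.} You propose to show $\Psi(\theta,\cx(\theta))=0$ by combining $\Psi\ge 0$, the boundary behaviour $\cx(\theta)\to K$ with $g(K)=0$ and $\liminf_\theta v_{\eu,\mu^*}(\theta,K)<\infty$, and ``monotonicity/analyticity propagate it to all $\theta$.'' This does not work: an analytic nonnegative function on $(0,T]$ that tends to zero as $\theta\downarrow 0$ need not vanish (consider $\theta\mapsto\theta^2$). The actual proof uses the Lagrange duality machinery built in Section~\ref{SubSection_DUALITY}: assumption (2) guarantees $V(T,x)>\widetilde g(x)$ for $x<K$, which via Lemma~\ref{Theorem_summary}(3) yields a dual maximiser $\lambda_0$ and the complementary slackness conditions \eqref{EQ_CS_lambda}--\eqref{EQ_CS_mu}. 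Complementary slackness forces $\lambda_0$ to be concentrated on the curve $\{(t,\cx(T-t))\}$, and a Gaussian-tail argument using \eqref{EQ_CS_mu} rules out $\lambda_0$ being supported away from $t=T$, producing a sequence $\theta_n\uparrow T$ with $v_{\eu,\mu^*}(\theta_n,\cx(\theta_n))=g(\cx(\theta_n))$. Only then does the identity theorem give the equality on all of $(0,T]$. Without this dual input, there is no way to produce even a single interior $\theta$ at which equality holds, so your proposal has a genuine gap exactly where you yourself locate the technical heart.

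\textbf{Curve analyticity and monotonicity are decoupled in the paper, and the role of $H$ is different from what you describe.} The PDE identity \eqref{EQ_express_Psi_xx_in_therms_of_H} reads $D_{22}\Psi = H + (1-2r/\sigma^2)D_2\Psi$; evaluated at a first-order critical point it gives $D_{22}\Psi(\theta,\cx(\theta)) = H(\theta,\cx(\theta))>0$ directly from assumption (3). So strict convexity at the minimiser is immediate and enables Theorem~\ref{THEOREM_Implicit_Function} — there is no need to ``argue it from strictness of the minimum'' or shrink $\delta$. On the other hand, $\cx'(\theta)=-D_{12}\Psi/D_{22}\Psi$ has numerator $-D_{12}\Psi$, not $H$, and the paper does not compute $\cx'$ at all; strict monotonicity is established separately (Step 8) by a contradiction argument using the geometry of the continuation region and the fact, from assertion (2), that $g(\cx(\theta))=v_{\am,g}(\theta,\cx(\theta))$. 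Your formula-based route to monotonicity is not supported by the assumptions. Similarly, your proposed argument for the inclusion $C_{T,x_0}\subset\widetilde C_{T,x_0}$ via a sign argument on $\{x\le\cx(\theta)\}$ is replaced in the paper by a cleaner connectedness argument: once (2) is known, $C_{T,x_0}$ decomposes into the two open pieces on either side of the curve, and connectedness plus nonemptiness of the $\widetilde C$ piece kills the other. Finally, uniqueness in assertion (3) is obtained from the injectivity of $\T$ (Lemma~\ref{LEMMA_injectivity_of_operator_T}), not from Proposition~\ref{PROPOSITION_2}, though this is a minor variant.
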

The proof of this theorem is to be found in Section \ref{section_european_representation}.

What are the strengths and weaknesses of this result?
The assumptions above concern certain qualitative properties of 
the cheapest dominating European option. 
On the negative side, this means that Theorem \ref{t:THEOREM_Hauptsatz} does not warrant 
representability unless one can prove that these properties hold 
for the CDEO of the specific claim under consideration. 
This is complicated by the fact that this CDEO is typically not known explicitly.
However, numerical approximations are obtained quite easily as it is explained in \cite[Chapter 3]{lenga.17}.
While such approximations cannot tell whether the CDEO represents the American claim or
just provides a relatively close upper bound, they provide  evidence
whether the \emph{qualitative} properties needed for Theorem \ref{t:THEOREM_Hauptsatz} hold true.
As an illustration, we study the prime example of the American put in Section \ref{s:SEC_Example_am_put}.

\subsection{Proof of Theorem \ref{t:Theorem_summary0}}\label{Section_Proof_Thm1}
First we verify that in programme \eqref{PROGRAM_primal_general} it suffices to consider measures $\mu \in \M^+(-\infty,K]$.
To this end we define by
\begin{align*}
&\M^+(\rr) \ni \mu \mapsto s(\mu) := \nu_1 + \nu_2\\
&d\nu_1 := 1_{(-\infty,K]} d\mu  \\
&d\nu_2 := \mu((K,\infty))  d\delta_{K}
\end{align*}
the mapping which relocates any mass in $(K,\infty)$ to $K$.
Here $\delta_{K}$ denotes the Dirac measure at $K$. 
One easily verifies that $s$ maps onto the cone $\M^+(-\infty,K]$ and preserves the total variation, i.e.\ $\Vert s(\mu) \Vert = \Vert \mu \Vert$.
Let $\mu \in \M^+(\rr)$ be admissible in programme \eqref{PROGRAM_primal_general}.
We have 
\[v_{\eu, \mu}(T, x_0) = e^{-rT} \Vert \mu \Vert = e^{-rT} \Vert s(\mu) \Vert = v_{\eu, s(\mu)}(T, x_0).\]
By Lemma \ref{LEMMA_norm_and_quotient_of_normal_pdf}(2) there is some $c(\theta,x) > 0$ such that
\begin{align*}
\frac{ \nno\!\left(x +\widehat{r} \theta, \sigma^2 \theta, y \right)}{ \nno\!\left(x_0 +\widehat{r} T, \sigma^2 T, y \right)}  &=  c(\theta,x) \exp\!\left( - \frac{(y - A(\theta,x) )^2}{2 B(\theta)}    \right)
\end{align*}
 for any $(\theta,x)\in (0,T) \times (-\infty,K)$, where $A(\theta,x) := x_0 + (x - x_0)T / (T-\theta)$ and $B(\theta) := \sigma^2 T\theta / (T-\theta)$. 
Recalling that  $x < K < x_0$, we obtain $A(\theta,x) = x_0 + (x - x_0)T/(T-\theta) < K$ and therefore 
\begin{align*}
v_{\eu, s(\mu)}(\theta, x)  
&= v_{\eu, \nu_1}(\theta, x) + \mu((K,\infty))  v_{\eu, \delta_K}(\theta, x) \\
&= v_{\eu, \nu_1}(\theta, x) + e^{-r\theta} \int_{(K,\infty)}c(\theta,x) \exp\!\left( - \frac{(K - A(\theta,x) )^2}{2 B(\theta)}    \right)    \mu(dy)\\
&\geq  v_{\eu, \nu_1}(\theta, x) + e^{-r\theta}  \int_{(K,\infty)}c(\theta,x) \exp\!\left( - \frac{(y - A(\theta,x) )^2}{2 B(\theta)}    \right)    \mu(dy)\\
&= v_{\eu, \nu_1}(\theta, x) + v_{\eu, \mu - \nu_1}(\theta, x) \\
&= v_{\eu, \mu}(\theta, x) \geq g(x)
\end{align*}
for any $(\theta,x)\in (0,T) \times (-\infty,K)$.
Along the same lines we can apply Lemma \ref{LEMMA_norm_and_quotient_of_normal_pdf}(3) in order to obtain
\begin{align*}
v_{\eu, s(\mu)}(T, x)  
&= v_{\eu, \nu_1}(T, x) + \mu((K,\infty))  v_{\eu, \delta_K}(T, x) \\
&\geq  v_{\eu, \nu_1}(T, x) + v_{\eu, \mu - \nu_1}(T, x) \\
&= v_{\eu, \mu}(T, x) \geq g(x)
\end{align*}
for any $x < K$.
Summing up, the calculations above imply that the inequality $v_{\eu, s(\mu)} \geq g$ holds on the set $(0,T] \times (-\infty,K)$.
Since $g$ is assumed to vanish on $[K,\infty)$, the measure $s(\mu)$ is admissible in programme \eqref{PROGRAM_primal_general}.
Hence, it suffices to consider measures $\mu \in \M^+(-\infty,K]$ in \eqref{PROGRAM_primal_general}.

\subsubsection{Transformation to $r=0$}\label{SubSubSection_Transformation}
Now we transform our model \eqref{e:EQ_BS_market_from_Intro} to a market with constant bond price process, following the approach explained in \cite{lerche.urusov.07}.
To this end, let
\begin{equation}\label{e:EQ_BS_market_from_Intro2}
\begin{aligned}
 \widetilde B_t &= 1,\\
 d\widetilde X_t &= \widetilde{r}  dt + \sigma dW_t 
\end{aligned}
\end{equation}
with $\widetilde{r} := - r - \sigma^2/2 < 0$  and
\begin{equation}\label{EQ_DEFINITION_of_tilde_g}
\widetilde g(x):=e^{(2r/\sigma^2)x}g(x),
\end{equation}
where $W$ denotes a Wiener process and $\widetilde X_0=x$ holds $\P_x$-almost surely.
The growth condition \eqref{EQ_Growth_condition_payoff} warrants that $\widetilde{g}$ is a continuous function vanishing at infinity.
Invoking a measure change with density process 
$( \exp( -(2r/\sigma^2)(X_t - X_0) - rt))_{t\in[0,T]},$
it is easy to see that
\[\E_x\!\left(e^{-r\tau}g(X_\tau) \right)
=e^{-(2r/\sigma^2)x}  \E_x\bigl( e^{(2r/\sigma^2) \widetilde X_\tau} g(\widetilde X_\tau) \bigr)
=e^{-(2r/\sigma^2)x}  \E_x\bigl(\widetilde g(\widetilde X_\tau) \bigr) \]
for any stopping time $\tau\leq T$.
Likewise, we have
$\E_x(e^{-r\theta}f(X_\theta))
=e^{-(2r/\sigma^2)x}\E_x( \widetilde f(\widetilde X_\theta)) $
for any European payoff function $f:\rr\to\rr_+$
and any $\theta\in[0,T]$ where $\smash{\widetilde f(x)}:=e^{(2r/\sigma^2)x}f(x).$
Some simple algebraic manipulations yield that
\begin{align}\label{EQ_TRANSFORM_BEIBEL_LERCHE}
 v_{\eu, \mu}(\theta, x)
&=e^{-r\theta}\int   \frac{ \nno\!\left(x +\widehat{r} \theta, \sigma^2 \theta, y \right)}{ \nno\!\left(x_0 +\widehat{r} T, \sigma^2 T, y \right)}    \mu(dy)\\
&= e^{(-2r/\sigma^2)x}\int  \frac{ \nno\!\left(x +\widetilde{r} \theta, \sigma^2 \theta, y \right)}{ \nno\!\left(x_0 +\widetilde{r} T, \sigma^2 T, y \right)}\nonumber  
e^{(2r / \sigma^2)x_0 - rT}   \mu(dy).
\end{align}
Consequently, the linear programme \eqref{PROGRAM_primal_general} is, up to renormalising the target functional, equivalent to 
\begin{equation}
\label{PROGRAM_P_0}
\begin{aligned}
& \text{minimise}   & & \Vert \mu \Vert  \\
& \text{subject to} & &   \mu \in \M^+(-\infty,K],\\
&&& \widetilde v_{\eu, \mu}(\theta,x)\geq \widetilde g(x)\text{ for any } (\theta,x)\in [0,T]\times(-\infty,K),
\end{aligned}
\end{equation}
where
\begin{align*}
 & \widetilde v_{\eu, \mu}(\theta, x) := \int  \frac{ \nno\!\left(x +\widetilde{r} \theta, \sigma^2 \theta, y \right)}{ \nno\!\left(x_0 +\widetilde{r} T, \sigma^2 T, y \right)}   \mu(dy), 
 & (\theta,x)\in (0,\infty)\times\rr, \\
 & \widetilde v_{\eu, \mu}(0, x) 	:=\liminf_{\substack{(\theta,y) \to (0,x)\\(\theta,y) \in (0,\infty)\times \rr} } \widetilde v_{\eu, \mu}(\theta, x), & x\in\rr.
\end{align*}
More specifically, any admissible measure $\mu$ in  \eqref{PROGRAM_primal_general}
corresponds to the admissible measure $e^{(2r / \sigma^2)x_0 - rT}\mu$ for the programme \eqref{PROGRAM_P_0}.
Note that $ \widetilde v_{\eu, \mu}(T, x_0) = \Vert \mu \Vert < \infty$ for any $\mu \in \M^+(\rr)$.

\subsubsection{Duality}\label{SubSection_DUALITY}
We define the set $\Omega := (0,T)\times (-\infty,K] $
and linear operators
\begin{align*}
 & \T: \M (-\infty,K] \to C( \Omega ) & \T\mu (t,x) := \int \kappa(t,x,y)    \mu(dy), \\
 & \T':  \M(\Omega) \to \B((-\infty,K],\rr) & \T' \lambda (y)  := \int \kappa(t,x,y)   \lambda(d(t,x))
\end{align*}
with the integral kernel
\begin{equation}\label{EQ_integrand_representation}
\begin{aligned}
 \kappa(t,x,y) &:=\frac{ \nno\!\left(x +\widetilde{r} (T-t), \sigma^2 (T-t), y \right)}{ \nno\!\left(x_0 +\widetilde{r} T, \sigma^2 T, y \right)}, \\
 &= \sqrt{\frac{T}{T-t}} \exp\!\left( - \frac{(y - A(x,t) )^2}{2 B(t)}    \right) \exp\!\left(  \frac{(x - x_0 - \widetilde{r} t)^2}{2 \sigma^2 t} \right)\\
 &=  \frac{T}{t} \frac{\nno(A(x,t),B(t),y)}{\nno(x_0 + \widetilde{r} t,\sigma^2 t,x)},
 \end{aligned}
\end{equation}
where $\B((-\infty,K],\rr)$ denotes the set of measurable functions from $(-\infty,K]$ to $\rr$.
and $A(t,x) := x_0 + (x-x_0)T/t$, $B(t):= \sigma^2 T (T-t) /t$, cf.\ Lemma \ref{LEMMA_norm_and_quotient_of_normal_pdf}(2).
Taking the specific structure of the integral kernel $\kappa$ into account, we can show that
for any measure $\mu\in\M (-\infty,K]$ the mapping $\Omega \ni (t,x)\mapsto \T\mu(t,x)$ is analytic on the open $\cc^2$-domain
\[ G:= \left\{ \theta \in \cc :  \sqrt{ (\Re \theta - T/2)^2 + (\Im \theta)^2 } <  T/2 \right\} \times \cc.\]
This is a special case of step 1 from Section \ref{section_european_representation} below, where a proof can be found.
In particular, the range of the operator $\T$ is indeed contained in $C( \Omega )$.

\begin{lemma}\label{LEMMA_injectivity_of_operator_T}
 If $\T\mu=0$ on some open subset of $\Omega$ then $\mu = 0$. In particular, the operator $\T$ is injective.
\end{lemma}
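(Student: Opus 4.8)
The plan is to exploit that, for each fixed time parameter, the integral kernel $\kappa(t,x,\cdot)$ is a strictly positive constant times a Gaussian density with fixed variance and with a mean that runs through a whole interval as the space variable $x$ varies; the statement then reduces to injectivity of the Gauss--Weierstrass transform of a finite signed measure. Concretely, I would first fix one $t_0\in(0,T)$ and a nonempty open interval $I\subset(-\infty,K]$ with $\{t_0\}\times I$ contained in the open subset of $\Omega$ on which $\T\mu$ is assumed to vanish. Using the second line of \eqref{EQ_integrand_representation} I would write
\[
 \kappa(t_0,x,y)=c(x)\exp\!\left(-\frac{(y-A(t_0,x))^2}{2B}\right),
\]
where $B:=B(t_0)=\sigma^2T(T-t_0)/t_0>0$, the factor $c(x)=\sqrt{T/(T-t_0)}\exp\bigl((x-x_0-\widetilde r t_0)^2/(2\sigma^2 t_0)\bigr)$ is strictly positive, and $x\mapsto A(t_0,x)=x_0+(x-x_0)T/t_0$ is an affine bijection of $\rr$. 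Setting $F(a):=\int_{(-\infty,K]}\exp(-(y-a)^2/(2B))\,\mu(dy)$, the hypothesis $0=\T\mu(t_0,x)=c(x)F(A(t_0,x))$ for $x\in I$ then says that $F$ vanishes on the nonempty open interval $A(t_0,I)\subset\rr$.

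Second, I would argue that $F$ extends to an entire function of $a\in\cc$. For $a=\alpha+i\beta$ one has $|\exp(-(y-a)^2/(2B))|=e^{\beta^2/(2B)}e^{-(y-\alpha)^2/(2B)}\le e^{\beta^2/(2B)}$ for every $y\le K$, so the integrand is bounded --- uniformly for $a$ in any compact subset of $\cc$ --- by a $\Vert\mu\Vert$-integrable function; since $a\mapsto\exp(-(y-a)^2/(2B))$ is entire for each fixed $y$, Morera's theorem (or differentiation under the integral sign) makes $F$ entire. By the identity theorem, the vanishing of $F$ on the interval $A(t_0,I)$ upgrades to $F\equiv0$ on all of $\cc$.

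Third, I would deduce $\mu=0$. Factoring $\exp(-(y-a)^2/(2B))=e^{-a^2/(2B)}e^{-y^2/(2B)}e^{ay/B}$ and introducing the finite signed measure $\nu$ on $(-\infty,K]$ defined by $d\nu(y):=e^{-y^2/(2B)}\,\mu(dy)$, the identity $F\equiv0$ becomes $\int e^{ay/B}\,\nu(dy)=0$ for every $a\in\cc$. Taking $a=iB\xi$ with $\xi\in\rr$ shows that the Fourier transform of $\nu$ vanishes identically, hence $\nu=0$ by uniqueness of Fourier transforms of finite signed measures, and therefore $\mu=0$ because the Gaussian weight is strictly positive. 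Injectivity of $\T$ is then immediate: if $\T\mu=0$ on $\Omega$ it vanishes on a nonempty open subset, so $\mu=0$, and linearity of $\T$ finishes the argument.

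I do not expect a genuine obstacle; the content is exactly the injectivity of the Gauss--Weierstrass transform of a finite signed measure, recovered through the explicit Gaussian form of $\kappa$. The only points that need a little care are the domination estimate that makes $F$ entire --- which works precisely because $\mu$ has finite total variation and is carried by the half-line $(-\infty,K]$, on which the real Gaussian is bounded --- and the bookkeeping that one starts only with vanishing on an open set, handled by the identity theorem. One could alternatively first invoke the analyticity of $(t,x)\mapsto\T\mu(t,x)$ on the $\cc^2$-domain $G$ stated just before the lemma to conclude $\T\mu\equiv0$ on all of $\Omega$, but this does not shorten the essential Fourier-uniqueness step, so I would proceed directly as above.
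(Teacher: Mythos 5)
Your proof is correct, and it takes a mildly but genuinely different path than the paper. The paper first invokes the two-variable analyticity of $(t,x)\mapsto\T\mu(t,x)$ on the $\cc^2$-domain $G$ (established separately) together with the identity theorem to upgrade vanishing from an open subset to all of $(0,T)\times\rr$, then specializes to $t=T/2$ — where the kernel algebra collapses to the clean identity $A(x+x_0/2,T/2)=2x$, $B(T/2)=\sigma^2T$ — and finally applies Fourier uniqueness to the Hahn--Jordan parts $\mu^\pm$. You instead fix a single time slice $t_0$ in the given open set from the start, factor out the strictly positive prefactor $c(x)$, and treat the resulting one-variable map $a\mapsto F(a)$ as an entire function of the complexified shift parameter, globalizing via the one-dimensional identity theorem before the Fourier step. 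The two routes are conceptually parallel (analytic continuation plus Fourier uniqueness), but yours is more self-contained in that it does not lean on the $\cc^2$-analyticity of $\T\mu$; the paper's is shorter given that that analyticity is already on record. One small remark: your parenthetical that the domination bound ``works precisely because $\mu$ is carried by the half-line $(-\infty,K]$'' is not actually needed — the real Gaussian $e^{-(y-\alpha)^2/(2B)}$ is $\le1$ on all of $\rr$, so finiteness of $\Vert\mu\Vert$ alone suffices; the half-line support plays no role in making $F$ entire. Also, the paper's choice $t=T/2$ is not essential — as your argument shows, any $t_0\in(0,T)$ for which the slice intersects the open set works, since $x\mapsto A(t_0,x)$ is an affine bijection for every such $t_0$.
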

\begin{proof}
Let $\mu \in \M (-\infty,K]$ be a measure such that $\T\mu$ vanishes on some open subset of $\Omega$.
Denote by $\mu = \mu^+ - \mu^-$ the Hahn-Jordan decomposition of $\mu$. 
By analyticity and the identity theorem we conclude that $\T\mu = 0$ on $(0,T)\times \rr$.
Since $A(x + x_0/2, T/2) = 2x $ and $B(T/2)=\sigma^2 T$, we obtain from \eqref{EQ_integrand_representation} that
\[ \nno\!\left(\frac{x_0 +  \widetilde{r}T}{2} ,\sigma^2 \frac{T}{2} , x \right)  \T \mu^\pm \!\left(\frac{T}{2}, x + \frac{x_0}{2}\right)   
= 2 \int\nno\bigl(2x , \sigma^2 T  , y \bigr)   d\mu^\pm(y).\]
$\T\mu^+ = \T\mu^-$ implies
\[\int  \nno\bigl(y , \sigma^2 T  , x \bigr)   d\mu^+(y) = \int  \nno\bigl(y, \sigma^2 T , x \bigr)   d\mu^-(y) ,\quad x\in \rr.\]
Multiplying both sides with $e^{i zx}$ and integrating in $x$ yields
\[\int \exp\!\left( i yz - \frac{\sigma^2 T}{2} z^2 \right)    d\mu^+(y) = \int  \exp\!\left( i yz -  \frac{\sigma^2 T}{2} z^2 \right)  d\mu^-(y)\]
for all $z\in \rr$. Since the Fourier transform is injective, we conclude that the orthogonal measures $\mu^-$ and $\mu^+$ coincide.
This implies $\mu=0$ as desired.
\end{proof}

After these preliminary remarks we return to our optimisation problem. The convex programme \eqref{PROGRAM_P_0} can be rephrased in functional analytic terms as
\begin{equation*}
\tag{$P_0$} \label{PROGRAM_P_0_functional_analytic}
\begin{aligned}
& \text{minimise}   & & \Vert \mu \Vert \\
& \text{subject to} & &  \T\mu  - \widetilde{g} \in C^+(\Omega), \\
&                   & &  \mu \in \M^+(-\infty,K].
\end{aligned}
\end{equation*}
The requirement that the European value function dominates the payoff is expressed by the conic constraint.
To this primal minimisation problem we associate the Lagrange dual
\begin{equation}
\tag{$D_0$}\label{PROGRAM_D_0_functional_analytic}
\begin{aligned}
& \text{maximise}   & & \langle \widetilde{g}, \lambda \rangle\\
& \text{subject to} & & \T'\lambda(y) \leq 1  \quad\forall y \in (-\infty,K],\\
&                   & & \lambda \in \M^+(\Omega),
\end{aligned}
\end{equation}
where
$\langle \widetilde{g}, \lambda \rangle:=\int_\Omega \widetilde{g}(x) \lambda(d(t,x)).$

This dual problem allows for a probabilistic or physical interpretation.
To this end, suppose that particles move in space-time $\Omega\subset\rr_+\times\rr$, 
where the first coordinate of $(t,x)$ stands for time and the second for
the location at this time. In the space
coordinate $x$ the particles are assumed to follow a Brownian motion with drift rate $\widetilde{r}$ and diffusion coefficient $\sigma^2$.
Let us inject particles of total mass $\lambda(\Omega)$ into $\Omega$, distributed according to $\lambda$,
i.e.\ mass $\lambda(A)$ is assigned to any measurable subset $A$ of $\Omega$.
Where in  $\rr$ are the particles  to be found at the final time $T$?
Since they follow Brownian motion, they are distributed according to the 
Lebesgue density 
\[y\mapsto \int  \nno\bigl(x +  \widetilde{r} (T-t), \sigma^2  (T-t), y \bigr)  \lambda(d(t,x)).\]
On the other hand, the constraint 
$\int  \kappa(t,x,y)  \lambda(d(t,x)) \leq 1 $
can be rephrased as 
\begin{equation}\label{e:intuition}
\int   \nno\bigl(x + \widetilde{r}  (T-t), \sigma^2  (T-t), y \bigr) \lambda(d(t,x)) \leq 
\nno\bigl(x_0 +  \widetilde{r} T, \sigma^2 T, y \bigr).
\end{equation}
The right-hand side is the 
probability density function at time $T$ of a Brownian motion
started in $x_0$ at time $0$.
Put differently, the constraint (\ref{e:intuition}) means that
we consider only laws $\lambda$ on space-time $\Omega$
such that the resulting final distribution on
$\rr$ is dominated by the Gaussian law stemming from
a Brownian motion started in $x_0$ at time $0$.
If equality holds in (\ref{e:intuition}), the distribution of particles at time $T$ is the same as for a Brownian motion 
with drift rate $\widetilde{r}$, diffusion coefficient $\sigma^2$, and starting in $x_0$ at time 0. 

Regarding the primal problem \ref{PROGRAM_P_0_functional_analytic} and its formal dual \ref{PROGRAM_D_0_functional_analytic}, we may wonder whether weak or even strong duality holds, if
optimisers exist and if they are linked by some
complementary slackness condition. The following first main result shows that this is indeed the case, at least if 	
the CDEO payoff strictly dominates the American payoff function $\widetilde{g}$ at all $x<K$.

\begin{lemma}\label{Theorem_summary1} 
\begin{enumerate}
\item The optimal value of \ref{PROGRAM_P_0_functional_analytic} is obtained by some $\mu_0\in \M^+(-\infty,K]$ 
and it coincides with the optimal value of \ref{PROGRAM_D_0_functional_analytic}. 
The measure $\mu_0$ puts mass on every open subset of $(-\infty, K)$.
\item 
If $\widetilde{v}_{\eu, \mu_0}(0, x) > \widetilde g(x)$ for any $x \in (-\infty,K)$, 
the optimal value of \ref{PROGRAM_D_0_functional_analytic} is obtained by some measure $\lambda_0 \in \M^+(\Omega)$. 
In this case the following complementary slackness conditions are satisfied:
\begin{eqnarray}\label{e:EQ_CS_lambda}
\T\mu_0(\theta,x) &=& \widetilde{g}(x)  \quad \lambda_0\text{-a.e.\ on }\Omega,\\
\T'\lambda_0(x)&=&1   \quad \quad \mu_0\text{-a.e.\ on }(-\infty,K]\label{e:EQ_CS_mu}.
\end{eqnarray}
\end{enumerate}
\end{lemma}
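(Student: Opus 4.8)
The plan is to read $(P_0)$--$(D_0)$ as a conic linear program and to proceed through weak duality, solvability of the primal together with the support property of $\mu_0$, then the equality of the optimal values, and finally --- under the extra hypothesis --- solvability of the dual with complementary slackness. Weak duality is immediate: for any primal--feasible $\mu$ and dual--feasible $\lambda$, Tonelli's theorem (all integrands nonnegative) gives $\langle\widetilde g,\lambda\rangle\le\int_\Omega\T\mu\,d\lambda=\int_{(-\infty,K]}\T'\lambda\,d\mu\le\Vert\mu\Vert$, so $\sup(D_0)\le\inf(P_0)$. The program $(P_0)$ is feasible since $\widetilde g$ is bounded by \eqref{EQ_Growth_condition_payoff}: a large enough multiple of $N(x_0+\widetilde r T,\sigma^2 T)$, with its mass on $(K,\infty)$ relocated to $K$ as in Section \ref{Section_Proof_Thm1}, dominates $\widetilde g$ and has finite total variation. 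For solvability I would take a minimising sequence $(\mu_n)$ in $\M^+(-\infty,K]$; its total variations are bounded, so by Banach--Alaoglu a subsequence converges weak-$*$, viewed as functionals on $C_0(-\infty,K]$, to some $\mu_0\in\M^+(-\infty,K]$. Since $y\mapsto\kappa(t,x,y)$ lies in $C_0(-\infty,K]$ for each fixed $(t,x)$, cf.\ \eqref{EQ_integrand_representation}, one gets $\T\mu_0(t,x)=\lim_n\T\mu_n(t,x)\ge\widetilde g(x)$ on $\Omega$, so $\mu_0$ is admissible, while $\Vert\mu_0\Vert\le\liminf_n\Vert\mu_n\Vert$ by weak-$*$ lower semicontinuity of the norm; hence $\mu_0$ is optimal. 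Finally, if $\mu_0$ vanished on a nonempty open $I\subset(-\infty,K)$, pick $x_*\in I$; a uniform Gaussian tail estimate then gives $\widetilde v_{\eu,\mu_0}(\theta,x_*)\to0$ as $\theta\downarrow0$, contradicting $\widetilde v_{\eu,\mu_0}(\theta,x_*)\ge\widetilde g(x_*)>0$ for small $\theta$; so $\mu_0$ puts positive mass on every nonempty open subset of $(-\infty,K)$, which proves (1) apart from the equality of values.

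For the latter I would invoke Fenchel--Rockafellar duality (equivalently, a Hahn--Banach separation) for the convex problem of minimising $\mu\mapsto\Vert\mu\Vert+\iota_{C^+(\Omega)}(\T\mu-\widetilde g)$ over $\M^+(-\infty,K]$: the norm is continuous everywhere and $(P_0)$ is feasible, so the required constraint qualification holds and $\inf(P_0)=\sup(D_0)$. The one genuine subtlety here is the choice of topological vector spaces: $\Omega$ is not compact, $\T$ fails to be bounded from $\M(-\infty,K]$ into the space of bounded continuous functions on $\Omega$ (the kernel $\kappa(t,x,y)$ is unbounded in $(t,x)$ as $y\to-\infty$), and $\T\mu$ need not extend continuously up to the face $\theta=0$. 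One circumvents this either by imposing the constraint on an exhausting family of compact subdomains $\Omega_n\uparrow\Omega$ and passing to the limit, or by replacing the supremum norm by a suitably weighted one under which $\T$ becomes continuous with range in a space whose dual consists of countably additive measures on $\Omega$; with either choice the separation argument goes through over $\M^+(\Omega)$, completing (1).

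For (2), assume $\widetilde v_{\eu,\mu_0}(0,x)>\widetilde g(x)$ for all $x<K$ and take a maximising sequence $(\lambda_n)$ in $\M^+(\Omega)$, so $\T'\lambda_n\le1$ on $(-\infty,K]$ and $\langle\widetilde g,\lambda_n\rangle\to V:=\sup(D_0)=\Vert\mu_0\Vert$. Inserting the optimal $\mu_0$ into the weak-duality chain gives $\langle\widetilde g,\lambda_n\rangle\le\int_\Omega\T\mu_0\,d\lambda_n=\int_{(-\infty,K]}\T'\lambda_n\,d\mu_0\le V$, and since the outer terms both tend to $V$ we get $\int_\Omega(\T\mu_0-\widetilde g)\,d\lambda_n\to0$ with nonnegative integrand --- an asymptotic form of \eqref{e:EQ_CS_lambda}. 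The heart of the matter is to upgrade this to tightness of $(\lambda_n)$ in $\M^+(\Omega)$. Loss of mass towards the face $\theta=0$ is ruled out exactly by the hypothesis: on each strip $\{(\theta,x):x\le K-\varepsilon\}$ the continuous function $\T\mu_0-\widetilde g$ has a strictly positive liminf at $\theta=0$, so the asymptotic complementary slackness above forbids mass there, while mass with $x$ near $K$ contributes asymptotically nothing to $\langle\widetilde g,\lambda_n\rangle$ since $\widetilde g(K)=0$. Loss of mass towards the face $\theta=T$ and towards $x\to-\infty$ is ruled out by evaluating $\T'\lambda_n\le1$ at $y\to-\infty$, where $\kappa$ grows fast enough that only exponentially small admissible mass can accumulate near those parts of $\partial\Omega$. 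Once tightness is in hand, pass to a weak-$*$ convergent subsequence $\lambda_n\to\lambda_0\in\M^+(\Omega)$; Fatou gives $\T'\lambda_0\le1$ on $(-\infty,K]$, and tightness together with $\widetilde g\in C_0(\rr)$ gives $\langle\widetilde g,\lambda_0\rangle=\lim_n\langle\widetilde g,\lambda_n\rangle=V$, so $\lambda_0$ is dual-optimal. For the optimal pair $(\mu_0,\lambda_0)$ all inequalities in the weak-duality chain become equalities, which forces $\T\mu_0=\widetilde g$ $\lambda_0$-a.e.\ on $\Omega$ and $\T'\lambda_0=1$ $\mu_0$-a.e.\ on $(-\infty,K]$, i.e.\ \eqref{e:EQ_CS_lambda}--\eqref{e:EQ_CS_mu}. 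The main obstacle is precisely this tightness analysis for the dual maximising sequence: each possible escape route --- the two faces of $\Omega$ and $x\to-\infty$ --- must be closed off by a separate quantitative estimate on the Gaussian kernel $\kappa$, and it is only for the face $\theta=0$ that the hypothesis $\widetilde v_{\eu,\mu_0}(0,\cdot)>\widetilde g$ is used and is indispensable; the function-space bookkeeping behind (1) is a secondary nuisance.
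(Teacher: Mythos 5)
Your plan has the right skeleton — weak duality by Tonelli, primal attainment via Alaoglu, the Gaussian-decay argument for full support of $\mu_0$, and under the extra hypothesis a tight maximising sequence for the dual with the face $\theta=0$ closed off by the liminf hypothesis — and you correctly locate both the main technical obstruction (non-compactness of $\Omega$, so the sup-norm dual of $C(\Omega)$ is not $\M(\Omega)$) and the fact that the hypothesis is needed precisely at the face $\theta=0$ (equivalently $t=T$ in the paper's internal time variable on $\Omega$). But there is a genuine gap at the heart of part (1): strong duality. You write that ``the required constraint qualification holds and $\inf(P_0)=\sup(D_0)$'', then immediately identify why a direct Fenchel--Rockafellar separation argument on the pair $(\M(-\infty,K],\,C(\Omega))$ does not apply, and you offer two possible repairs (exhausting compact subdomains, or weighted norms) without carrying either one out. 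This is not a bookkeeping nuisance to be deferred; it is the content of the statement. The paper's resolution is a two-stage argument that cannot be replaced by a one-line invocation: first prove $p_\epsilon=d_\epsilon$ for each compact truncation $\Omega_\epsilon$ by showing the associated dual value function $v$ on $C_0(-\infty,K]$ is convex and finite and continuous at $0$ (Lemma~\ref{LEMMA_dual_value_function_properties}, plus the local boundedness criterion), then deduce $p_0=d_0$ by the squeeze $d_0\le p_0\le\Vert\mu_0\Vert\le\liminf_n\Vert\mu_{\epsilon_n}\Vert=\liminf_n p_{\epsilon_n}=\liminf_n d_{\epsilon_n}\le d_0$, using that every $(D_\epsilon)$-feasible measure is $(D_0)$-feasible. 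Your separate minimising-sequence argument gives primal attainment but does not give this inequality chain; without actually performing the $\epsilon$-limit (or rigorously setting up a weighted-norm duality with the right predual), the equality of optimal values is unproved.

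There is also a smaller imprecision in your tightness argument for part (2). You claim that escape of dual mass to $x\to-\infty$ is ruled out ``by evaluating $\T'\lambda_n\le1$ at $y\to-\infty$, where $\kappa$ grows fast enough.'' But for $t$ bounded away from $0$, the kernel $\kappa(t,x,y)$ is a ratio of Gaussians whose numerator has \emph{smaller} variance than the denominator, so for fixed $(t,x)$ it \emph{decays} as $y\to-\infty$; the blow-up as $y\to-\infty$ is only present at $t=0$ (cf.\ Lemma~\ref{LEMMA_norm_and_quotient_of_normal_pdf}(3)), which is why it controls the face $M_1$ but not escape to $x\to-\infty$ at interior $t$. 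The correct estimate (as in the paper) is obtained not by pointwise evaluation but by integrating the constraint \eqref{e:intuition} over $y\in(-\infty,C)$, which produces a uniform upper bound $\delta/2$ on the left-hand side while the integrand is bounded below by a quantity close to $1$ whenever $x<-1/\epsilon_n$; this yields the contradiction. A single pointwise evaluation of $\T'\lambda_n$ cannot see the diagonal coupling between very negative $x$ and the matching very negative $y$ where the kernel peaks.
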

In view of the discussion from Subsection \ref{SubSubSection_Transformation} this theorem can be easily
restated in terms of the quantities $v_{\eu,\mu}$ and $g$ associated to the  
programme \eqref{PROGRAM_primal_general}. We immediately obtain Theorem \ref{t:Theorem_summary0} from the first assertion of Lemma \ref{Theorem_summary1}.

\subsubsection{Proof of Lemma \ref{Theorem_summary1}}\label{SUBSECTION_Proof_Theorem_summary1}
For any $\epsilon \in (0,T)$ we define the set
\[ \Omega_\epsilon := [\epsilon,T-\epsilon] \times [-1/\epsilon, K]\]
and the following linear operator:
\begin{align*}
  \T^*:  \M(\Omega_\epsilon) \to C_0(-\infty,K] &\qquad\qquad\qquad \T^* \lambda (y)  := \int \kappa(t,x,y)   \lambda(d(t,x))
\end{align*}
The range of the operator $\T^*$ is contained in $C_0(-\infty,K]$ due to  
Lebesgue's dominated convergence theorem, by \eqref{EQ_integrand_representation} and
the compactness of the set $\Omega_\epsilon$. 
On the Cartesian products $C(\Omega_\epsilon) \times \M(\Omega_\epsilon)$ and 
$C_0(-\infty,K] \times \M (-\infty,K]$ we consider the algebraic pairing
\begin{equation}
 \label{EQ_Pairing_on_cartesian_products}
  \langle f , \nu \rangle \mapsto \int f d\nu.
\end{equation}
This mapping is finitely valued, bilinear and separates points.
We endow $C(\Omega_\epsilon), \M(\Omega_\epsilon)$ and $\M (-\infty,K]$ with the weak topologies $\sigma(C,\M), \sigma(\M,C)$ and $\sigma(\M,C_0)$ 
induced by \eqref{EQ_Pairing_on_cartesian_products}.
The function space $C_0 (-\infty,K]$ is endowed with the topology of uniform convergence $\mathscr{T}_{\mathrm{uc}}$. 
This turns all four spaces into locally convex Hausdorff spaces. 
Moreover, each space of measures is the continuous dual of the associated function space and vice versa, cf.\ \cite[Theorem 6.19]{rudin.87}.
Fubini's theorem yields that for all measures $\mu \in \M (-\infty,K]$ and $\lambda \in \M(\Omega_\epsilon)$ we have
\begin{equation}\label{EQ_Adjoint_equation_of_operator_T}
 \langle \T\mu , \lambda \rangle = \langle \mu, \T^* \lambda \rangle.
\end{equation}
By \cite[Lemma 5.17]{lenga.17} we find that
the operator $\T$ is $\sigma(\M,C_0)$-$\sigma(C,\M)$ continuous and $\T^*$ is $\sigma(\M,C)$-$\sigma(C_0,\M)$ continuous.

We want to find a measure $\mu_0 \in \M^+(-\infty,K]$ which solves the linear programme \ref{PROGRAM_P_0_functional_analytic} from Subsection \ref{SubSection_DUALITY}.
Our strategy is to approximate this optimisation problem by the following sequence of linear programmes with milder constraints
\begin{equation}
\tag{$P_\epsilon$} \label{PROGRAM_P_eps}
\begin{aligned}
& \text{minimise}   & & \Vert \mu \Vert \\
& \text{subject to} & & (\left.\T\mu - \widetilde{g}\right)|_{\Omega_\epsilon} \in C^+(\Omega_\epsilon), \\
&                   & &  \mu \in \M^+(-\infty,K].
\end{aligned}
\end{equation}
The solution to \ref{PROGRAM_P_0_functional_analytic} will be obtained by compactness from the family of \ref{PROGRAM_P_eps}-extremal elements. 
For each $\epsilon\in (0,T/2)$, the Lagrange dual problem of \ref{PROGRAM_P_eps} is given by
\begin{equation}
\tag{$D_\epsilon$}\label{PROGRAM_D_eps}
\begin{aligned} 
& \text{maximise}   & & \langle \widetilde{g}, \lambda \rangle\\
& \text{subject to} & & 1 -\T^*\lambda \in C^+(-\infty,K],\\
&                   & & \lambda \in \M^+(\Omega_\epsilon).
\end{aligned}
\end{equation}
The optimal values of \ref{PROGRAM_P_eps} and \ref{PROGRAM_D_eps} are denoted by $p_\epsilon$ and $d_\epsilon$, respectively.
By construction we have that \emph{weak duality} $0 \leq d_\epsilon \leq p_\epsilon$  holds. Indeed, in view of the adjointness relation
\eqref{EQ_Adjoint_equation_of_operator_T} we obtain
\begin{equation}\label{EQ_Weak_duality_Deps_Peps}
 0 \leq \langle \widetilde{g}, \lambda \rangle \leq  \langle \T\mu, \lambda \rangle = \langle \mu, \T^*\lambda \rangle 
\leq \langle \mu, 1 \rangle = \Vert \mu\Vert
\end{equation}
for any primal admissible $\mu \in \M^+(-\infty,K]$ and any dual admissible $\lambda \in \M^+(\Omega_\epsilon)$.
Next, we verify \emph{primal and dual attainment}.
The nonnegative measure $\widetilde{\mu}$ with Lebesgue density
\begin{align*}
y \mapsto   2 \Vert \widetilde{g} \Vert_\infty   \nno\!\left(x_0 +\widetilde{r} T, \sigma^2 T, y \right) 1_{(-\infty,K)}(y)
\end{align*}
is \ref{PROGRAM_P_eps}-admissible because for any $(t,x) \in  \Omega_\epsilon$ we have
\begin{equation}\label{EQ_Masse_mu_epsilon_abschaetzen}
\begin{aligned}
\T  \widetilde{\mu} (t,x)
&= 2 \Vert \widetilde{g} \Vert_\infty  \int_{-\infty}^K \nno\bigl(x +\widetilde{r} (T-t), \sigma^2 (T-t), y \bigr)dy \\ 
&= 2 \Vert \widetilde{g} \Vert_\infty   \Phi\!\left( \frac{K-x}{\sigma\sqrt{T-t}} - \widetilde{r} \frac{\sqrt{T-t}}{\sigma} \right) \\ 
&\geq 2 \Vert \widetilde{g} \Vert_\infty   \Phi\!\left( 0 \right) = \Vert \widetilde{g} \Vert_\infty.
\end{aligned}
\end{equation}
The total mass of the measure $\widetilde{\mu}$ is bounded by the constant $2\Vert \widetilde{g} \Vert_\infty $. 
Therefore solving the minimisation problem \ref{PROGRAM_P_eps} is equivalent to minimising the total variation norm over the $\sigma(\M,C_0)$-compact set
\begin{equation}\label{EQ_SET_C_eps_p}
C^\epsilon_p  :=   \T^{-1}\!\left(  \widetilde{g} + C^+(\Omega_\epsilon)  \right)  \cap   \M^+(-\infty,K]  \cap  B_{\M(\rr)}(0,2\Vert \widetilde{g}\Vert_\infty).
\end{equation}
The $\sigma(\M,C_0)$-compactness of $C^\epsilon_p$ is established as follows.
First we note that the set $\widetilde{g} + C^+(\Omega_\epsilon)$ is homeomorphic to the 
$\sigma(C,\M)$-closed cone 
\[C^+(\Omega_\epsilon) = \bigcap_{\lambda \in \M^+(\Omega_\epsilon)} \{ f \in C(\Omega_\epsilon): \langle f, \lambda \rangle \geq 0  \}\] 
and that the continuity properties of the operator $\T$ warrant the $\sigma(\M,C_0)$-closedness of the preimage $\T^{-1}\!\left( \widetilde{g} + C^+(\Omega_\epsilon)\right)$.
Secondly, we observe that the cone
\[\M^+(-\infty,K] =  \bigcap_{f \in C_0^+(-\infty,K]} \{ \mu \in \M(-\infty,K] : \langle f, \mu \rangle \geq 0\}\]
is $\sigma(\M,C_0)$-closed as well and that $B_{\M(\rr)}(0,2\Vert \widetilde{g}\Vert_\infty)$ is a $\sigma(\M,C_0)$-compact set due to Alaoglu's theorem, cf.\ \cite[Theorem 23.5]{meise.vogt.97}.
The target functional  $\mu \mapsto \Vert \mu \Vert$ is lower semi-continuous with respect
to the topology $\sigma(\M,C_0)$ and therefore its minimal value $p_\epsilon$ is attained
by some measure $\mu_\epsilon \in C^\epsilon_p$.

Next, we prove the attainment of the \ref{PROGRAM_D_eps}-optimal value.
For any measure $\lambda \in \M(\Omega_\epsilon)$ and $y \in (-\infty,K]$ we define 
$\U \lambda (y) := \nno\!\left(x_0 + \widetilde{r} T, \sigma^2 T, y \right) \T^*\lambda(y)$.
Obviously $\U$ is a $\sigma(\M,C)$-$\sigma(C_0,\M)$-continuous, linear operator from $\M(\Omega_\epsilon)$ into the space $C_0 (-\infty,K]$.
The inequality constraint of the programme\ref{PROGRAM_D_eps} is equivalent to
$\U\lambda(y) \leq   \nno\bigl(x_0 + \widetilde{r} T, \sigma^2 T, y \bigr)$
for all $y \in (-\infty,K]$. Integrating this inequality over the interval $ (-\infty,K]$ yields
\[    \int \int_{-\infty}^K   \nno\bigl(x + \widetilde{r} (T-t), \sigma^2 (T-t), y \bigr)    dy   \lambda(d(t,x)) 
\leq  \int_{-\infty}^K  \nno\bigl(x_0 + \widetilde{r} T, \sigma^2 T, y \bigr)   dy \leq 1.\]
A calculation similar to \eqref{EQ_Masse_mu_epsilon_abschaetzen} yields
\[  \int_{-\infty}^K  \nno\bigl(x + \widetilde{r} (T-t), \sigma^2 (T-t), y \bigr)  dy 
\geq \Phi\!\left(   0 \right) = \frac{1}{2} \]
for any $(t,x)\in \Omega_\epsilon$ and consequently any \ref{PROGRAM_D_eps}-admissible measure $\lambda$  satisfies $\Vert \lambda \Vert \leq 2$.
Solving the maximisation problem \ref{PROGRAM_D_eps} is therefore equivalent to maximising the
$\sigma(\M,C)$-continuous mapping $\lambda \mapsto \langle \widetilde{g} , \lambda \rangle$ over the set
\[C^\epsilon_d  :=   \U^{-1}\!\left(   \nno\bigl(x_0 + \widetilde{r} T, \sigma^2 T, \cdot \bigr)  - C_0^+(-\infty,K]  \right)  \cap  \M^+(\Omega_\epsilon)  
\cap  B_{\M(\Omega_\epsilon)}(0,2). \]
One easily modifies the arguments following \eqref{EQ_SET_C_eps_p} in order to 
verify that $C^\epsilon_d$ is a $\sigma(\M,C)$-compact subset of $\M(\Omega_\epsilon)$.
Hence the target functional of the Lagrange dual \ref{PROGRAM_D_eps} attains its maximal value $d_\epsilon$ at some measure $\lambda_\epsilon \in C^\epsilon_d$.

In order to prove \emph{strong duality} $d_\epsilon = p_\epsilon$, we use some well-established techniques from convex optimisation. 
We refer the reader to \cite{rockafellar.74} for a well-written introduction to conjugate duality and optimisation on paired spaces. 
A short summary for our needs can be found in \cite[Section 5.4]{lenga.17}.
The Lagrange function $K :  \M(-\infty,K] \times \M(\Omega_\epsilon) \to [-\infty, \infty]$ associated to the \ref{PROGRAM_P_eps}-\ref{PROGRAM_D_eps}-duality is defined by
\begin{equation}\label{EQ_Lagrange_function}
  K(\mu, \lambda ) :=  \Vert \mu \Vert + \langle \widetilde{g}, \lambda \rangle - \langle \T \mu , \lambda \rangle + \I_{\M^+(-\infty,K]}(\mu) - \I_{\M^+(\Omega_\epsilon)}(\lambda),
\end{equation}
where \[\I_M(x) :=\begin{cases} 0 &\text{if } x \in M, \\ \infty &\text{if } x \notin M \end{cases}\] for any set $M$.
For later reference we provide the following explicit calculations:
\begin{align}
\sup_{\lambda \in \M(\Omega_\epsilon)} \inf_{\mu \in \M(-\infty,K]}  K( \mu, \lambda )  
&= \sup_{\lambda \in \M^+(\Omega_\epsilon)} \inf_{\mu \in \M^+(-\infty,K]}   \left(  \Vert\mu\Vert  + \langle \widetilde{g} - \T\mu   , \lambda \rangle  \right) \label{EQ_MINIMAX_1} \\
&= \sup_{\lambda \in \M^+(\Omega_\epsilon)} \left( \langle \widetilde{g}  , \lambda \rangle +  \inf_{\mu \in \M^+(-\infty,K]}   \langle  1 - \T^* \lambda    , \mu \rangle  \right) \nonumber \\
&= \sup_{\substack{\lambda \in \M^+(\Omega_\epsilon) \\  \T^* \lambda \leq 1 }} \langle \widetilde{g}, \lambda \rangle = d_\epsilon,  \\\nonumber\\
\inf_{\mu \in \M(-\infty,K]} \sup_{\lambda \in \M(\Omega_\epsilon)}  K( \mu, \lambda )  
&=  \inf_{\mu \in \M^+(-\infty,K]} \sup_{\lambda \in \M^+(\Omega_\epsilon)}  \left(  \Vert\mu\Vert  + \langle \widetilde{g} - \T\mu   , \lambda \rangle  \right) \label{EQ_MINIMAX_2}  \\
&=  \inf_{\mu \in \M^+(-\infty,K]} \left(  \Vert\mu\Vert   +  \sup_{\lambda \in \M^+(\Omega_\epsilon)}   \langle  \widetilde{g} - \T\mu, \lambda\rangle  \right) \nonumber  \\
&= \inf_{\substack{ \mu \in \M^+(-\infty,K] \\  \T\mu \geq \widetilde{g} } }  \Vert \mu \Vert =  p_\epsilon. \nonumber 
\end{align}
One easily verifies that the mapping  $\M(-\infty,K] \ni \mu \mapsto K_\lambda(\mu) := K(\mu,\lambda)$ 
is closed in the sense of \cite[Section 3]{rockafellar.74} and convex for any $\lambda \in \M(\Omega_\epsilon)$.
\begin{lemma}\label{LEMMA_dual_value_function_properties}
The \emph{dual value function} $v: C_0(-\infty,K] \mapsto (-\infty,\infty]$,
\[v(f) := \inf_{\lambda \in \M(\Omega_\epsilon)} K_\lambda^*(f) \]
is convex and we have $v(0) = -d_\epsilon \geq v^{**}(0)= -p_\epsilon$. 
Here $K_\lambda^*$ denotes the convex conjugate of the mapping $K_\lambda$. 
\end{lemma}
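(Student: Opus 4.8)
I would recognise $v$ as the \emph{dual value function} attached to the Lagrangian duality between $(P_\epsilon)$ and $(D_\epsilon)$ in the sense of conjugate duality, cf.\ \cite{rockafellar.74}, and then read off all three assertions from the minimax identities \eqref{EQ_MINIMAX_1} and \eqref{EQ_MINIMAX_2}, together with the convexity and closedness of each $K_\lambda$ established just above. A useful auxiliary computation is the explicit form of $K_\lambda^*$: for $\lambda\in\M^+(\Omega_\epsilon)$ one has $K_\lambda(\mu)=\langle\widetilde g,\lambda\rangle+\int(1-\T^*\lambda)\,d\mu$ on $\M^+(-\infty,K]$ (and $+\infty$ otherwise), so that $K_\lambda^*(f)=-\langle\widetilde g,\lambda\rangle$ when $\T^*\lambda\le 1-f$ on $(-\infty,K]$ and $K_\lambda^*(f)=+\infty$ otherwise, while $K_\lambda\equiv-\infty$ and $K_\lambda^*\equiv+\infty$ for $\lambda\notin\M^+(\Omega_\epsilon)$; hence $v(f)=\inf\{-\langle\widetilde g,\lambda\rangle:\lambda\in\M^+(\Omega_\epsilon),\ \T^*\lambda\le 1-f\}$.

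\textbf{Convexity.} A pointwise infimum of convex functions need not be convex, so convexity of each $K_\lambda^*$ is not enough. Instead I would use $K_\lambda^*(f)=\sup_{\mu\in\M(-\infty,K]}\bigl(\langle f,\mu\rangle-K(\mu,\lambda)\bigr)$ and note that, for each fixed $\mu$, the map $(\lambda,f)\mapsto\langle f,\mu\rangle-K(\mu,\lambda)$ is jointly convex on $\M(\Omega_\epsilon)\times C_0(-\infty,K]$: by \eqref{EQ_Lagrange_function} it is affine in $f$, and in $\lambda$ it is an affine functional (involving $\langle\T\mu-\widetilde g,\lambda\rangle$) plus the convex indicator $\I_{\M^+(\Omega_\epsilon)}$. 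Hence $(\lambda,f)\mapsto K_\lambda^*(f)$, being a supremum of jointly convex functions, is jointly convex, and $v(f)=\inf_\lambda K_\lambda^*(f)$ is convex as the partial infimum of a jointly convex function. Properness of $v$ follows from the explicit form above: $v(0)=-d_\epsilon$ is finite since $0\le d_\epsilon\le p_\epsilon\le 2\Vert\widetilde g\Vert_\infty$, and on the admissible set $\{\T^*\lambda\le 1-f\}$ the bound $0\le\T^*\lambda\le 1+\Vert f\Vert_\infty$ forces $\Vert\lambda\Vert$ to be bounded by an integration argument as in the proof of dual attainment, so $-\langle\widetilde g,\lambda\rangle$ is bounded below and $v>-\infty$ everywhere.

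\textbf{The two identities.} For $v(0)$ one computes directly $v(0)=\inf_{\lambda}K_\lambda^*(0)=\inf_{\lambda}\sup_{\mu}(-K(\mu,\lambda))=-\sup_{\lambda}\inf_{\mu}K(\mu,\lambda)=-d_\epsilon$ by \eqref{EQ_MINIMAX_1}. For $v^{**}(0)$, interchange the two suprema: $v^*(\mu)=\sup_f(\langle f,\mu\rangle-\inf_\lambda K_\lambda^*(f))=\sup_\lambda\sup_f(\langle f,\mu\rangle-K_\lambda^*(f))=\sup_\lambda K_\lambda^{**}(\mu)$. Since each $K_\lambda$ is closed and convex, the Fenchel--Moreau theorem on the paired locally convex Hausdorff spaces $\M(-\infty,K]$ and $C_0(-\infty,K]$ gives $K_\lambda^{**}=K_\lambda$, whence $v^*(\mu)=\sup_\lambda K(\mu,\lambda)$ and therefore $v^{**}(0)=\sup_\mu(-v^*(\mu))=-\inf_\mu\sup_\lambda K(\mu,\lambda)=-p_\epsilon$ by \eqref{EQ_MINIMAX_2}. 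The middle inequality $v(0)\ge v^{**}(0)$ is then automatic, because $v^{**}\le v$ always holds in this dual pairing; equivalently it restates the weak duality $d_\epsilon\le p_\epsilon$ already recorded in \eqref{EQ_Weak_duality_Deps_Peps}.

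\textbf{Expected main obstacle.} The only delicate point is the identity $v^{**}(0)=-p_\epsilon$, specifically justifying $K_\lambda^{**}=K_\lambda$ for \emph{every} $\lambda\in\M(\Omega_\epsilon)$ -- including those $\lambda\in\M^+(\Omega_\epsilon)$ for which $K_\lambda$ is unbounded below. The resolution is that such $K_\lambda$ is nevertheless proper: because $1-\T^*\lambda$ is bounded (as $\T^*\lambda\in C_0(-\infty,K]$), the integral $K_\lambda(\mu)=\langle\widetilde g,\lambda\rangle+\int(1-\T^*\lambda)\,d\mu$ is finite for every finite $\mu\in\M^+(-\infty,K]$, so $K_\lambda$ never takes the value $-\infty$; and for $\lambda\notin\M^+(\Omega_\epsilon)$ it is identically $-\infty$. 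In both cases $K_\lambda$ is closed convex, so Fenchel--Moreau applies as stated. The interchange of suprema is unconditional, and the remaining steps are routine bookkeeping with the adjointness relation \eqref{EQ_Adjoint_equation_of_operator_T}.
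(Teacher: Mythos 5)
Your proposal is correct and follows essentially the same route as the paper: both compute $v^*(\mu)=\sup_\lambda K_\lambda^{**}(\mu)=\sup_\lambda K(\mu,\lambda)$ via Fenchel--Moreau (using that each $K_\lambda$ is closed and convex), then read off $v(0)=-d_\epsilon$ and $v^{**}(0)=-p_\epsilon$ from the minimax identities \eqref{EQ_MINIMAX_1} and \eqref{EQ_MINIMAX_2}, with the middle inequality supplied by $v^{**}\le v$. The only cosmetic differences are that you establish convexity as the partial infimum of a jointly convex function (the paper unpacks the same concavity-of-$K$-in-$\lambda$ argument by hand in \eqref{EQ_convexity_of_dual_value_function}) and you show properness directly from the explicit form of $K_\lambda^*$ and the mass bound on feasible $\lambda$, whereas the paper derives it by contradiction from $p_\epsilon<\infty$.
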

\begin{proof}
By  Fenchel's inequality  and \eqref{EQ_MINIMAX_1} we have
\[v^{**}(0) \leq v(0) = \inf_{\lambda \in \M(\Omega_\epsilon)} K_\lambda^*(0) 
= -\sup_{\lambda \in \M(\Omega_\epsilon)} \inf_{\mu \in \M(-\infty,K]}  K( \mu, \lambda )   
= -d_\epsilon.\]
The conjugate $v^* : \M(-\infty,K] \mapsto [-\infty, \infty]$ of the function $v$ is given by
\begin{align*}
v^*(\mu) &:= \sup_{f \in C_0(-\infty,K]} \left( \langle f , \mu   \rangle  - v(f)  \right)\\
&= \sup_{\lambda \in \M(\Omega_\epsilon)} \sup_{f \in C_0(-\infty,K]} \left( \langle f , \mu   \rangle  -  K_\lambda^*(f) \right)\\
&= \sup_{\lambda \in \M(\Omega_\epsilon)}  K_\lambda^{**}(\mu)\\
&= \sup_{\lambda \in \M(\Omega_\epsilon)}  K(\mu, \lambda).
\end{align*}
The last equality follows from the Fenchel-Moreau theorem because the mapping $ K_\lambda$ is closed and convex, cf.\ \cite[Theorem 5]{rockafellar.74}.
Hence the biconjugate of the dual value function is given by 
\begin{align}
v^{**}(f) &:=  \sup_{\mu \in \M(-\infty,K]} \left( \langle f , \mu   \rangle  -v^*(\mu)  \right) \nonumber\\
&=  \sup_{\mu \in \M(-\infty,K]}\inf_{\lambda \in \M(\Omega_\epsilon)} \left( \langle f , \mu   \rangle  -  K(\mu, \lambda)  \right).\label{EQ_biconjugate_of_dual_value_function}
\end{align}
Owing to \eqref{EQ_MINIMAX_2} we obtain $v^{**}(0)= -p_\epsilon$. 

Next, we show that the mapping $v$ does not assume the value $-\infty$. 
Suppose to the contrary that there exists some $f \in  C_0(-\infty,K]$ with $v(f) = -\infty$. 
Fenchel's inequality yields $v^{**} \leq v$
and hence $v^{**}(f) = -\infty$. Equation \eqref{EQ_biconjugate_of_dual_value_function} now implies that
$\sup_{\lambda \in \M(\Omega_\epsilon)}  K(\mu,\lambda) = \infty$
for any measure $\mu \in \M(-\infty,K]$ and therefore $p_\epsilon = \infty$. 
This is impossible because the set of \ref{PROGRAM_P_eps}-admissible measures has already been shown to be nonempty.

In order to verify that $v$ is convex, suppose that $\alpha\in (0,1)$ and $f_1,f_2\in C_0(-\infty,K]$.
From \eqref{EQ_Lagrange_function} it is apparent that the Lagrange function $K$ 
is concave in the second component and this yields
\begin{equation}\label{EQ_convexity_of_dual_value_function}
\begin{aligned}
v( \alpha f_1 + (1- \alpha) f_2 ) 
&=\inf_{\lambda \in \M(\Omega_\epsilon)} \sup_{\mu \in \M(-\infty,K]} \bigl(  \langle \alpha f_1 + (1- \alpha) f_2   , \mu \rangle - K( \mu, \lambda )  \bigr)\\
&\leq \sup_{\mu \in \M(-\infty,K]} \bigl(  \langle \alpha f_1 + (1- \alpha) f_2  , \mu \rangle - K( \mu, \alpha\lambda_1 + (1-\alpha)\lambda_2 )  \bigr)\\
&\leq \alpha \sup_{\mu \in \M(-\infty,K]} \bigl(  \langle  f_1  , \mu \rangle - K( \mu, \lambda_1 )  \bigr) + (1-\alpha) \sup_{\mu \in \M(-\infty,K]} \bigl(  \langle  f_2 , \mu \rangle - K( \mu, \lambda_2 )  \bigr) 
\end{aligned}
\end{equation}
for any choice of $\lambda_0,\lambda_1 \in \M(\Omega_\epsilon)$. 
Minimising with respect to $\lambda_0, \lambda_1$ proves that $v$ is indeed a convex function. 
\end{proof}
Hence strong duality holds if we can show that $v^{**}(0)=v(0)$ is true. 
By virtue of Lemma \ref{LEMMA_dual_value_function_properties} and
the Fenchel-Moreau biconjugate theorem, cf.\ \cite[Theorem 5]{rockafellar.74}, we obtain
\[v^{**} (0) =  \lsc(v)(0) = \sup_{O \in \mathscr{U}(0)} \inf_{f \in O \setminus \{0\}} v(f),\]
where $\lsc(v)$ denotes the semi-continuous hull of the mapping $v$, cf.\ \cite[Equation 3.7]{rockafellar.74}
and $\mathscr{U}(0)$ the set containing all $\mathscr{T}_{\mathrm{uc}}$-open neighbourhoods of $0$. 
Put differently, in order to verify  strong duality it suffices to 
show that the mapping $v$ is continuous at the origin with respect to the topology of uniform convergence.
We  use the following adaptation of \cite[Theorem 5.42]{aliprantis.border.06} to locally convex spaces:
\begin{lemma}\label{lemma_stetigkeit_konvexer_funktionen} 
Let $V$ be a locally convex space,  $f: V \to (-\infty, \infty]$ a convex function and $x_0 \in V$.
If there exists an open neighbourhood $O$ of $x_0$ such that $\sup_{x\in O} f(x) < \infty$, then $f$ is continuous at $x_0$.
\end{lemma}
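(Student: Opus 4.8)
The plan is to establish the classical fact that a convex function on a topological vector space which is bounded above on some open set is automatically continuous at each point of that set; only the linear and topological structure is used, so the local convexity of $V$ plays no special role, and the argument is that of \cite[Theorem 5.42]{aliprantis.border.06} transcribed to this setting.

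First I would normalise. Since $f(x_0) \le \sup_{x \in O} f(x) =: M < \infty$ and $f$ never takes the value $-\infty$, the number $f(x_0)$ is finite, so after replacing $f$ by $x \mapsto f(x + x_0) - f(x_0)$ we may assume $x_0 = 0$ and $f(0) = 0$; enlarging $M$ if necessary we may also assume $M \ge 1$. Because the balanced neighbourhoods of the origin form a base in any topological vector space, I would pick a balanced open neighbourhood $W \subseteq O - x_0$ of $0$, so that $f \le M$ on $W$ and $W = -W$.

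The heart of the argument is a two-sided bound on the contracted neighbourhood $N_\varepsilon := \frac{\varepsilon}{M} W$, which is again an open neighbourhood of $0$, for every $\varepsilon \in (0, M]$. For the upper estimate, any $x \in N_\varepsilon$ has the form $x = \frac{\varepsilon}{M} w + \bigl(1 - \frac{\varepsilon}{M}\bigr) 0$ with $w \in W$, so convexity and $f(0) = 0$ give $f(x) \le \frac{\varepsilon}{M} f(w) \le \varepsilon$. For the lower estimate, balancedness yields $-x \in N_\varepsilon$ as well, and writing $0 = \frac12 x + \frac12 (-x)$, convexity together with $f(0) = 0$ forces $0 \le \frac12 f(x) + \frac12 f(-x)$, hence $f(x) \ge - f(-x) \ge -\varepsilon$. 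Thus $|f(x)| \le \varepsilon$ on $N_\varepsilon$, and since $\varepsilon > 0$ was arbitrary (the case $\varepsilon > M$ being trivial, as then $W$ itself already works), $f$ is continuous at $0$, that is, at $x_0$ after undoing the translation. I do not expect a genuine obstacle here: the only points requiring care are checking that $f(x_0)$ is finite, so that the normalisation is legitimate, and insisting on a balanced rather than merely convex neighbourhood inside $O$, since this is exactly what makes the reflection $-x \in N_\varepsilon$, and hence the lower bound, available.
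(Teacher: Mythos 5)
The paper does not supply its own proof of this lemma: it is stated with a bare reference to \cite[Theorem 5.42]{aliprantis.border.06}, so there is no proof in the text to compare against. Your argument is a correct and complete reconstruction of the standard textbook proof of that result (translate to $x_0=0$ with $f(0)=0$, take a balanced open $W$ inside the translated $O$ on which $f\leq M$, then on $\frac{\varepsilon}{M}W$ get $f\leq\varepsilon$ by convex combination with $0$, and $f\geq-\varepsilon$ by writing $0=\frac12 x+\frac12(-x)$ and using balancedness). You also correctly observe that only the topological vector space structure is used, so local convexity is not genuinely needed; the paper phrases the lemma for locally convex $V$ simply because that is the setting in which it is applied. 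The only cosmetic point is that after the translation the upper bound on the new $f$ is $M-f(x_0)$ rather than $M$, but since you immediately say ``enlarging $M$ if necessary,'' this is just a matter of relabelling and does not affect the argument.
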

The set $O:=\left\{ \Vert f \Vert_\infty < 1 \right\} $ is a $\mathscr{T}_{\mathrm{uc}}$-open neighbourhood of $0$.
For any $f \in O$ we have
\begin{align*}
v(f) &= \inf_{\lambda \in \M^+(\Omega_\epsilon)} \sup_{\mu \in \M^+(-\infty,K]}
\bigl(  \langle f  , \mu \rangle - \Vert \mu \Vert - \langle \widetilde{g}, \lambda \rangle + \langle \T \mu , \lambda \rangle  \bigr)\\
&\leq \sup_{\mu \in \M^+(-\infty,K]} \bigl( \Vert\mu\Vert \Vert f \Vert_\infty   -  \Vert\mu\Vert  \bigr) = 0.
\end{align*}
Lemma \ref{lemma_stetigkeit_konvexer_funktionen} warrants that the mapping $v$ is indeed continuous at $0$ and therefore \[p_\epsilon=-v^{**}(0)=-v(0)= d_\epsilon.\]

Next, we verify that the optimisers $\lambda_\epsilon$ and $\mu_\epsilon$ satisfy the \emph{complementary slackness} property. 
Using the strong duality we obtain
\begin{equation}\label{COMPLEMENTARY_SLACKNESS_Peps_Deps}
	0 	\leq \langle \T \mu_\epsilon - \widetilde{g} , \lambda_\epsilon \rangle
	= \langle \T \mu_\epsilon , \lambda_\epsilon \rangle - p_\epsilon
	= \langle  \mu_\epsilon , \T^* \lambda_\epsilon \rangle - d_\epsilon
	= \langle \mu_\epsilon,  \T^* \lambda_\epsilon  - 1 ,  \rangle 
	\leq  0.
\end{equation}
In other words, $\T \mu_\epsilon = \widetilde{g}$ holds $\lambda_\epsilon$-a.e.\ on $\Omega_\epsilon$ and 
$\T^* \lambda_\epsilon  = 1$ holds $\mu_\epsilon$-a.e.\ on $(-\infty,K]$. 
Moreover, the structure of the dual problem \ref{PROGRAM_D_eps} implies that we can always choose a
\ref{PROGRAM_D_eps}-optimal element which assigns no mass to the zeros of the function $\widetilde{g}$,
i.e.\ $\lambda_\epsilon(\{(t,x) \in \Omega_\epsilon : \widetilde{g}(x) = 0\} ) = 0$.
From now on we will only consider dual maximisers with this property. 

Let us summarise the findings from above:
\begin{lemma}
For any $\epsilon \in (0,T/2)$ the linear programmes \ref{PROGRAM_P_eps}, \ref{PROGRAM_D_eps} have solutions $\mu_\epsilon, \lambda_\epsilon$ and  
their optimal values $p_\epsilon, d_\epsilon$ coincide.  The total mass of both optimisers is bounded by a constant $\rho \in (0,\infty) $ that does not depend on $\epsilon$.
Moreover, no mass of the measure $\lambda_\epsilon$ is located on the zero set of the function $\widetilde{g}$. The equation $\T \mu_\epsilon = \widetilde{g}$ holds $\lambda_\epsilon$-a.e.\ on $\Omega_\epsilon$ and 
$\T^* \lambda_\epsilon  = 1$ holds $\mu_\epsilon$-a.e.\ on $(-\infty,K]$.
\end{lemma}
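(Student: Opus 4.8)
The lemma only records the conclusions reached in the preceding pages, so my proof would be a matter of reorganising those arguments into four blocks. First I would establish attainment by replacing each truncated programme with an optimisation over a weak-$*$ compact set. For \ref{PROGRAM_P_eps}, the explicit Gaussian competitor $\widetilde\mu$ with density $y\mapsto 2\Vert\widetilde g\Vert_\infty\,\nno(x_0+\widetilde r T,\sigma^2 T,y)1_{(-\infty,K)}(y)$ is admissible by \eqref{EQ_Masse_mu_epsilon_abschaetzen} and has total mass at most $2\Vert\widetilde g\Vert_\infty$, so the optimal value $p_\epsilon$ is attained on the bounded set $C^\epsilon_p$ of \eqref{EQ_SET_C_eps_p}. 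This set is $\sigma(\M,C_0)$-compact, being the intersection of a norm ball (compact by Alaoglu), the $\sigma(\M,C_0)$-closed cone $\M^+(-\infty,K]$, and the $\sigma(\M,C_0)$-closed preimage $\T^{-1}(\widetilde g+C^+(\Omega_\epsilon))$, closedness of the latter coming from continuity of $\T$; since $\mu\mapsto\Vert\mu\Vert$ is $\sigma(\M,C_0)$-lower semicontinuous, it attains its minimum at some $\mu_\epsilon$. Dually, integrating the constraint $\T^*\lambda\le 1$ over $(-\infty,K]$ and using the bound $\int_{-\infty}^K\nno(x+\widetilde r(T-t),\sigma^2(T-t),y)\,dy\ge\Phi(0)=1/2$ on $\Omega_\epsilon$ forces $\Vert\lambda\Vert\le 2$ for every admissible $\lambda$; the feasible set $C^\epsilon_d$ is $\sigma(\M,C)$-compact by the same pattern, $\lambda\mapsto\langle\widetilde g,\lambda\rangle$ is $\sigma(\M,C)$-continuous, and so $d_\epsilon$ is attained at some $\lambda_\epsilon$.

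Next I would prove strong duality $p_\epsilon=d_\epsilon$ by passing through the Lagrangian $K$ of \eqref{EQ_Lagrange_function} and the dual value function $v(f)=\inf_\lambda K_\lambda^*(f)$ on $C_0(-\infty,K]$. Lemma \ref{LEMMA_dual_value_function_properties} gives that $v$ is convex, that $v$ never takes $-\infty$ (else $p_\epsilon=\infty$, contradicting primal admissibility), and that $v(0)=-d_\epsilon$ while $v^{**}(0)=-p_\epsilon$, the latter two from the minimax identities \eqref{EQ_MINIMAX_1}, \eqref{EQ_MINIMAX_2}. By the Fenchel--Moreau theorem \cite{rockafellar.74}, $v^{**}$ is the lower semicontinuous hull of $v$, so $v^{**}(0)=v(0)$ once $v$ is continuous at $0$. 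On the $\mathscr{T}_{\mathrm{uc}}$-neighbourhood $\{\Vert f\Vert_\infty<1\}$ of the origin one has $v(f)\le\sup_{\mu\ge 0}(\Vert\mu\Vert\,\Vert f\Vert_\infty-\Vert\mu\Vert)=0$ (take $\lambda=0$ in the infimum defining $v$), so $v$ is locally bounded above and Lemma \ref{lemma_stetigkeit_konvexer_funktionen} gives continuity; hence $p_\epsilon=d_\epsilon$.

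The three remaining claims are short. The uniform bound holds with $\rho:=2(\Vert\widetilde g\Vert_\infty\vee 1)$, since $\Vert\mu_\epsilon\Vert=p_\epsilon\le 2\Vert\widetilde g\Vert_\infty$ and $\Vert\lambda_\epsilon\Vert\le 2$, neither bound depending on $\epsilon$. For the support statement, restricting $\lambda_\epsilon$ to $\{\widetilde g\ne 0\}$ leaves $\langle\widetilde g,\lambda_\epsilon\rangle$ unchanged while only lowering $\T^*\lambda_\epsilon$ pointwise, hence preserves dual admissibility and optimality, so one may take the maximiser to charge no point of the zero set of $\widetilde g$. Finally, for complementary slackness I would combine $p_\epsilon=d_\epsilon$, the adjointness relation \eqref{EQ_Adjoint_equation_of_operator_T}, and the signs of the two feasibility constraints into the sandwich $0\le\langle\T\mu_\epsilon-\widetilde g,\lambda_\epsilon\rangle=\langle\mu_\epsilon,\T^*\lambda_\epsilon-1\rangle\le 0$ of \eqref{COMPLEMENTARY_SLACKNESS_Peps_Deps}; since $\T\mu_\epsilon-\widetilde g\ge 0$ and $\lambda_\epsilon\ge 0$, the vanishing of the first pairing yields $\T\mu_\epsilon=\widetilde g$ $\lambda_\epsilon$-a.e.\ on $\Omega_\epsilon$, and since $1-\T^*\lambda_\epsilon\ge 0$ and $\mu_\epsilon\ge 0$, the vanishing of the second yields $\T^*\lambda_\epsilon=1$ $\mu_\epsilon$-a.e.\ on $(-\infty,K]$.

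The only step with genuine content is the strong-duality argument, and the hard part there is purely functional-analytic: one must equip $C_0(-\infty,K]$ with the \emph{norm} topology, so that the origin has a neighbourhood on which $v$ is bounded above and the local-boundedness criterion applies, while keeping the measure spaces under their weak-$*$ topologies, and one needs the weak continuity of $\T$ and $\T^*$ established earlier. Those are exactly the ingredients that also underpin the compactness arguments for attainment and, subsequently, the passage to the limit $\epsilon\downarrow 0$.
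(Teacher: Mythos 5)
Your proposal is correct and follows the paper's own argument step for step: the same admissible Gaussian competitor and compactness via Alaoglu for primal and dual attainment, the same uniform mass bounds $2\Vert\widetilde g\Vert_\infty$ and $2$, the same Lagrangian/dual-value-function/Fenchel--Moreau machinery with local boundedness at the origin for strong duality, the same restriction of $\lambda_\epsilon$ to $\{\widetilde g\neq0\}$, and the same sandwich $0\le\langle\T\mu_\epsilon-\widetilde g,\lambda_\epsilon\rangle=\langle\mu_\epsilon,\T^*\lambda_\epsilon-1\rangle\le0$ for complementary slackness. Your explicit choice $\rho=2(\Vert\widetilde g\Vert_\infty\vee 1)$ and the remark about mixing the norm topology on $C_0(-\infty,K]$ with weak-$*$ topologies on the measure spaces make the argument's logic, which the paper leaves implicit, somewhat more transparent, but the route is the same.
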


We now turn our attention to programme \ref{PROGRAM_P_0_functional_analytic} and 
the associated dual \ref{PROGRAM_D_0_functional_analytic} from Subsection \ref{SubSection_DUALITY}.
Lemma \ref{Theorem_summary1} is proved in two steps. First, we show that the primal
optimisers $(\mu_\epsilon)_{\epsilon>0}$ cluster at some \ref{PROGRAM_P_0_functional_analytic}-optimal measure $\mu_\epsilon$
and that the family $(\lambda_\epsilon)_{\epsilon>0}$ contains a \ref{PROGRAM_D_0_functional_analytic}-admissible accumulation point $\lambda_0$.
Subsequently we show that the measure $\lambda_0$ is \ref{PROGRAM_D_0_functional_analytic}-optimal.
The other assertions of Lemma \ref{Theorem_summary1} are verified on the way.

\paragraph{Step 1} Let $p_0$ and $d_0$ denote the optimal values of \ref{PROGRAM_P_0_functional_analytic} and \ref{PROGRAM_D_0_functional_analytic}.
The weak duality $0 \leq d_0 \leq p_0$ follows literally from the same calculation as in \eqref{EQ_Weak_duality_Deps_Peps}.
Recall that for any $\epsilon > 0$ the mass of the optimisers $\mu_\epsilon \in \M^+(-\infty,K]$ and  $\lambda_\epsilon \in \M^+(\Omega)$ 
is bounded by some constant $\rho>0$, which does not depend on $\epsilon$.
General theory tells us that the vague topology is metrisable on the total variation unit balls in both spaces.
Alaoglu's theorem warrants that they are vaguely compact sets.	
Hence we can find a sequence $ \epsilon_n \downarrow 0$ and measures  
$\mu_0 \in \M^+(-\infty,K], \lambda_0 \in \M^+(\Omega)$ with $\Vert \mu_0 \Vert \vee \Vert \lambda_0 \Vert \leq \rho$
such that $\mu_{\epsilon_n}$ converges vaguely to $\mu_0$ and $\lambda_{\epsilon_n}$ converges vaguely to $\lambda_0$.
For any  $(t,x) \in \Omega$ the mapping $y \mapsto   \kappa(t,x,y) $ is continuous on $(-\infty,K]$ and vanishes at infinity, 
see \eqref{EQ_integrand_representation}.
By vague convergence we conclude that
\begin{align*}
\T\mu_0(t,x)  
=  \int   \kappa(t,x,y)  d\mu_0(y) 
=  \lim_{n \to \infty}  \int   \kappa(t,x,y)  d\mu_{\epsilon_n}(y)
\geq \widetilde{g}(x), \quad (t,x) \in \Omega.
\end{align*}
This ensures that $\mu_0$ is indeed \ref{PROGRAM_P_0_functional_analytic}-admissible.  
Next we verify that the measure $\lambda_0$ is \ref{PROGRAM_D_0_functional_analytic}-admissible.
Obviously, for any $\delta\in (0,T/4)$ we have $\emptyset \not= \Omega_{2\delta} \subset \Omega_{\delta} \subset \Omega$.
By Urysohn's lemma, cf.\ \cite[Theorem 4.2]{lang.93}, there exists
a continuous function $\phi^\delta : \Omega \to [0,1]$ such that $\phi^\delta(t,x) = 1$ for
all $(t,x) \in \Omega_{2\delta}$ and $\phi^\delta(t,x) = 0$ for all $(t,x) \in \cl(\Omega \setminus \Omega_{\delta})$.
For any $y \in (-\infty,K]$ the continuous mapping 
$ \Omega \ni (x,t) \mapsto \kappa(t,x,y)  \phi^{\delta}(t,x)$
vanishes at infinity. By vague convergence of the sequence $\lambda_{\epsilon_n} \to \lambda_0$ we obtain 
\begin{align*}
\int  \kappa(t,x,y)  \lambda_0(d(t,x))
&=    \lim_{\delta \downarrow 0} \int \kappa(t,x,y) 1_{\Omega_{2\delta}}(t,x)  \lambda_0(d(t,x))\\
&\leq \lim_{\delta \downarrow 0} \int \kappa(t,x,y) \phi^{\delta}(t,x)  \lambda_0(d(t,x))\\
&=    \lim_{\delta \downarrow 0} \lim_{n \to \infty} \int  \kappa(t,x,y) \phi^{\delta}(t,x)  \lambda_{\epsilon_n}(d(t,x))\\
&\leq  \limsup_{\delta \downarrow 0} \limsup_{n \to \infty} \int \kappa(t,x,y)  \lambda_{\epsilon_n}(d(t,x)) \leq 1.
\end{align*}
In other words, the measure $\lambda_0$ is dual admissible.

Next, we establish the strong duality $p_0 = d_0$ by putting together several of the previous results. 
The vague convergence of the measures $\mu_{\epsilon_n}$ to $\mu_0$ implies that $\Vert \mu_0\Vert \leq \liminf_{n\to\infty} \Vert \mu_{\epsilon_n}\Vert$ is true.
Recalling that strong duality holds in the \ref{PROGRAM_P_eps}-\ref{PROGRAM_D_eps}-setting yields
\begin{equation}\label{EQ_strong_duality_p0_d0}
d_0 \leq p_0 \leq \Vert \mu_0\Vert \leq \liminf_{n\to\infty} \Vert \mu_{\epsilon_n}\Vert 
= \liminf_{n\to\infty} p_{\epsilon_n} = \liminf_{n\to\infty} d_{\epsilon_n} \leq d_0.
\end{equation}
The last inequality follows from the fact that all \ref{PROGRAM_D_eps}-admissible elements are \ref{PROGRAM_D_0_functional_analytic}-admissible.
Along the way we have shown that the \ref{PROGRAM_P_0_functional_analytic}-optimal value is attained by the measure $\mu_0$. 

We prove by contradiction that any \ref{PROGRAM_P_0_functional_analytic}-admissible element assigns mass to any open subset of $(-\infty,K)$.
Otherwise there is some \ref{PROGRAM_P_0_functional_analytic}-admissible measure $\mu$ 
and a bounded, open interval $I:= (c - \nu, c + \nu) \subset (-\infty,K)$ such that $\mu(I)=0$. 
Obviously we have  $0< \delta := \inf_{x\in I} \widetilde{g}(x)$. This yields
\begin{equation}\label{EQ_always_put_mass_on_open_sets}
\delta < \widetilde{g}(c) 
\leq \T\mu(t,c)
= \int    1_{\{ \vert y-c \vert \geq \nu \}} \kappa(t,c,y)   \mu(dy)
\end{equation}
for all $t\in (0,T)$. 
In view of \eqref{EQ_integrand_representation}, the right-hand side of \eqref{EQ_always_put_mass_on_open_sets}
converges to $0$ as $t \uparrow T$. This contradiction proves the claim.

\paragraph{Step 2} We show that the \ref{PROGRAM_D_0_functional_analytic}-optimal value  is attained by $\lambda_0$ if some additional requirement is met. 
Recall that the measure $\lambda_0$ is \ref{PROGRAM_D_0_functional_analytic}-admissible and that the sequence $\lambda_{\epsilon_n}$ 
converges to $\lambda_0$ with respect to the vague topology on $\M(\Omega)$. Due to the lack of compactness, we cannot
directly conclude that $\lambda_{\epsilon_n}$ converges weakly to $\lambda_0$.
Observe that the functional $ \M(\Omega) \ni \lambda  \mapsto \langle \widetilde{g}, \lambda \rangle$ is weakly but not vaguely continuous. 

First we prove that the sequence $\lambda_{\epsilon_n}$ converges weakly in $\M(\cl\,\Omega)$,
where $\cl\,\Omega = [0,T] \times (-\infty,K]$. It is sufficient to show
that the family $\{\lambda_{\epsilon_n} : n \in \nn \}$ is tight. 
For any $\epsilon > 0$ we define by $K_\epsilon := [0,T] \times [-1/\epsilon,K]$ a compact subset of $\cl\,\Omega$. 
The mass of $\lambda_{\epsilon_n}$ is concentrated on $\Omega_{\epsilon_n} \subset K_{\epsilon_n}$. 
Let us assume by contradiction that the family of measures is not tight. 
Then there exists a constant $\delta >0 $ such that for any $n \in \nn$ there is some integer $M_n \geq n$ with
$\lambda_{\epsilon_{M_n}} ( \Omega \setminus K_{\epsilon_n}) > \delta$.
Pick a sufficiently small constant $C\in (-\infty,K)$ with
\[ \int_{-\infty}^C   \nno\bigl(x_0 +\widetilde{r} T, \sigma^2 T, y \bigr)   dy  \leq \frac{\delta}{2}.\]
Due to the fact that all measures $\lambda_{\epsilon_n}$ are \ref{PROGRAM_D_0_functional_analytic}-admissible, we have
\[ \int  \nno\bigl(x +\widetilde{r} (T-t), \sigma^2 (T-t), y \bigr)    \lambda_{\epsilon_{M_n}}(d(t,x)) \leq  \nno\bigl(x_0 +\widetilde{r} T, \sigma^2 T, y \bigr) \]
for any $y\in (-\infty,K]$. Integrating this inequality over the set $(-\infty,C)$ yields
\[ \int \int_{-\infty}^C    \nno\bigl(x +\widetilde{r} (T-t), \sigma^2 (T-t), y \bigr)  dy  \lambda_{\epsilon_{M_n}}(d(t,x)) \leq  \frac{\delta}{2}.\]
Due to the positivity of measure and integrand, we conclude that
\begin{align*}
\frac{\delta}{2} 
&\geq \int_{\Omega \setminus K_{\epsilon_{n} }}     \int_{-\infty}^C    \nno\bigl(x +\widetilde{r} (T-t), \sigma^2 (T-t), y \bigr) dy   \lambda_{\epsilon_{M_n}}(d(t,x))\\
&\geq  \lambda_{\epsilon_{M_n}}\left(\Omega \setminus K_{\epsilon_{n}} \right)  
\inf_{(t,x) \in \Omega \setminus K_{\epsilon_{n} }} \int_{-\infty}^C   \nno\bigl(x +\widetilde{r} (T-t), \sigma^2 (T-t), y \bigr) dy \\
&\geq  \delta 
\inf_{ x  < -1/\epsilon_n } \inf_{t \in [0,T]} \int_{-\infty}^C    \nno\bigl(x +\widetilde{r} t, \sigma^2 t, y \bigr) dy, \quad n\in\nn.
\end{align*}
Taking the limit $n \to \infty$ yields
\[ \frac{\delta}{2}  \geq \delta \lim_{n \to \infty}  \inf_{ x  < -1/\epsilon_n } \inf_{t \in [0,T]}  \int_{-\infty}^C   \nno\bigl(x +\widetilde{r} t, \sigma^2 t, y \bigr)  dy  = \delta\]
as $\epsilon_{n} \to 0$.
This is impossible and consequently the family $\{\lambda_{\epsilon_n} : n \in \nn \}$ must be tight. 
Hence  the sequence $\lambda_{\epsilon_n}$ converges weakly in $\M(\cl\,\Omega)$ to some measure $\overline{\lambda}_0$ with $\overline{\lambda}_0 \vert_\Omega = \lambda_0$.

In order to assure that the measure $\lambda_0$ is \ref{PROGRAM_D_0_functional_analytic}-optimal, it suffices to show that $\overline{\lambda}_0$ assigns no mass to the borders $M_1 := \{0\} \times (-\infty,K)$ and $ M_2 :=  \{T\} \times (-\infty,K)$. Indeed, in this case we have
\begin{align*}
\int_\Omega \widetilde{g}(x)   \lambda_0(d(t,x))
= \int_{\cl\,\Omega} \widetilde{g}(x)   d\overline{\lambda}_0(t,x)
= \lim_{n\to\infty}  \int \widetilde{g}(x)   \lambda_{\epsilon_n}(d(t,x))
= \lim_{n\to\infty}  d_{\epsilon_n} = d_0.
\end{align*}
The second equality follows from the weak convergence of the sequence $\lambda_{\epsilon_n}$ in the space $\M(\cl\,\Omega)$ and the boundedness of the continuous function $\widetilde{g}$.
The last equality has already been established in \eqref{EQ_strong_duality_p0_d0}.

Assume by contradiction that $\overline{\lambda}_0$ assigns mass to the set $M_1$.
In this case there is some  $\alpha < K$ with $\overline{\lambda}_0 \left( \{0\} \times [\alpha,K) \right) > 0$.
Due to weak convergence in the space $\M\!\left( [0,T/2] \times [\alpha,K] \right)$ we have
\begin{align*}
\int_{\{0\}  \times [\alpha,K) } \kappa(t,x,y)    \overline{\lambda}_0(d(t,x))
&\leq  \int_{ [0,T/2] \times [\alpha,K] }  \kappa(t,x,y)   \overline{\lambda}_0(d(t,x))\\
&=  \lim_{n\to\infty} \int_{ [0,T/2] \times [\alpha,K] } \kappa(t,x,y)    \lambda_{\epsilon_n}(d(t,x)) \leq  1
\end{align*}
for any $ y \in (-\infty,K]$. Fatou's lemma, Lemma \ref{LEMMA_norm_and_quotient_of_normal_pdf}(3), and $K < x_0$ now yield the following contradiction
\begin{align*}
1 &\geq \liminf_{y \to -\infty} \int_{\{0\} \times [\alpha,K) } \frac{ \nno\!\left(x +\widetilde{r} T, \sigma^2 T, y \right)}{  \nno\!\left(x_0 +\widetilde{r} T, \sigma^2 T, y \right) }  \overline{\lambda}_0(d(t,x))\\
&\geq  \int_{\{0\}  \times [\alpha,K) } \liminf_{y \to -\infty} 
\exp\!\left( y \frac{x-x_0}{\sigma^2 T}   \right)
\exp\!\left( \frac{x_0^2 - x^2 +  2 \widetilde{r}T(x_0-x)}{2\sigma^2 T} \right)
  \overline{\lambda}_0(d(t,x))= \infty
\end{align*}
Hence the assumption was wrong and therefore $\overline{\lambda}_0(M_1)  = 0$.

Next we turn our attention to the set $M_2$. 
For any $(t,x)\in [0,T]\times\rr $ we define by
\begin{equation}\label{DEF_lsc_continuation_of_T_mu}
V(t,x):= \liminf_{\substack{(t',x') \to (t,x)\\(t',x') \in (0,T)\times \rr} } \T\mu_0 (t',x')
\end{equation}
the lower semi-continuous extension of the function $ \T\mu_0$ to the set $[0,T]\times\rr$.
We  show that imposing the \emph{additional assumption}
\begin{equation}\label{EQ_ADDITIONAL_ASSSUMPTION_on_V}
V(T,x) > \widetilde{g}(x) \quad \forall x \in (-\infty,K)
\end{equation}
warrants that the measure $\overline{\lambda}_0$ assigns no mass to the set $M_2$.
The mapping $V$ is lower semi-continuous and bounded from below and hence attains its minimum on any compact subset of $[0,T]\times\rr$.
Moreover, assumption \eqref{EQ_ADDITIONAL_ASSSUMPTION_on_V} ensures that the minimal value of the function $V - \widetilde{g}$  is strictly positive
on any set $\{T\} \times [a,b] \subset  M_2$ with $a < b < K$.
By lower semi-continuity there is some $n_0 \in\nn$ and $\delta > 0$ such that
\begin{equation}\label{EQUATION_delta_bound_for_difference}
 V(t,x) - \widetilde{g}(x) \geq \delta  
\end{equation}
for any $(t,x)\in [T-1/n_0, T] \times [a,b]$. 
Assume by contradiction that the measure $\overline{\lambda}_0$ assigns mass to $M_2$.
We can choose some strip $\{T\} \times (a,b) \subset M_2$ and a constant $\rho>0$ such that 
$\overline{\lambda}_0(Q_m) \geq 2 \rho$ holds for any $m\in\nn$, where $Q_m := (T-1/m,T] \times (a,b)$. 
The measures $\lambda_{\epsilon_n}$ converge weakly in $\M(\cl\,\Omega)$ to $\overline{\lambda}_0$.
Owing to \cite[Theorem 13.16]{klenke.14},
we can pass to a subsequence (again denoted by $\epsilon_n$) such that 
$\lambda_{\epsilon_n}(\inner Q_n) \geq \rho$ for all $n\in\nn$. The strong duality in the \ref{PROGRAM_D_eps}-\ref{PROGRAM_P_eps}-setting yields
\begin{align*}
\langle \T\mu_0 - \widetilde{g} , \lambda_{\epsilon_n} \rangle 
&= \langle \mu_0, \T^*\lambda_{\epsilon_n} \rangle - p_{\epsilon_n} \\
&= \langle \mu_0, \T^*\lambda_{\epsilon_n} - 1\rangle + \Vert \mu_0 \Vert- \Vert \mu_{\epsilon_n} \Vert   \\
&\leq \Vert \mu_0 \Vert- \Vert \mu_{\epsilon_n} \Vert.
\end{align*}
Moreover, \eqref{EQUATION_delta_bound_for_difference} implies
\begin{align*}
\langle \T\mu_0 - \widetilde{g} , \lambda_{\epsilon_n} \rangle 
&\geq \int_{\inner Q_n} V(t,x) - \widetilde{g}(x)   \lambda_{\epsilon_n}(d(t,x))\\
&\geq \delta \lambda_{\epsilon_n}(\inner Q_n)\\
&\geq \delta \rho > 0,\quad n\geq n_0.
\end{align*}
However, we already know from \eqref{EQ_strong_duality_p0_d0} that
$\Vert \mu_0 \Vert- \Vert \mu_{\epsilon_n} \Vert \to 0 $ as $n\to \infty$. 
This yields a contradiction, which finally shows that $\overline{\lambda}_0 (M_2) = 0$.

Last but not least, we observe that literally the same calculation as in \eqref{COMPLEMENTARY_SLACKNESS_Peps_Deps} yields the complementary slackness property 
for $\mu_0$ and $\lambda_0$ in the case of primal and dual attainment. 
This means that the equation
\begin{equation}\label{EQ_CS_lambda}
\int  \kappa(t,x,y)   d\mu_0(y) = \widetilde{g}(x)
\end{equation}
holds $\lambda_0$-a.e.\ on $\Omega$ and 
\begin{equation}\label{EQ_CS_mu}
\int_\Omega  \kappa(t,x,y)   \lambda_0(d(t,x)) = 1
\end{equation}
holds $\mu_0$-a.e.\ on $(-\infty,K]$. Let us summarise our results from above.
\begin{lemma}\label{Theorem_summary}  
\begin{enumerate}
\item For any $\epsilon \in (0,T/2)$ the linear programmes \ref{PROGRAM_P_eps} and \ref{PROGRAM_D_eps} have solutions $\mu_\epsilon$ and $\lambda_\epsilon$.  
The optimal values $p_\epsilon$ and $d_\epsilon$ of the latter programmes coincide.  The total mass of the optimisers is bounded by some constant $\rho \in (0,\infty) $ which
does not depend on $\epsilon$. Moreover, the measure $\lambda_\epsilon$ assigns no mass to the zero set of the function $\widetilde{g}$. 
The equation $\T \mu_\epsilon = \widetilde{g}$ holds $\lambda_\epsilon$-a.e.\ on $\Omega_\epsilon$ and $\T^* \lambda_\epsilon  = 1$ holds $\mu_\epsilon$-a.e.\ on $(-\infty,K]$.
\item There exists a sequence $\epsilon_n \downarrow 0$ such that $\mu_{\epsilon_n}$ converges vaguely in $\M(-\infty,K]$ to some \ref{PROGRAM_P_0_functional_analytic}-admissible measure $\mu_0$ and 
$\lambda_{\epsilon_n}$ converges vaguely in $\M(\Omega)$ to some \ref{PROGRAM_D_0_functional_analytic}-admissible measure $\lambda_0$. The optimal value of \ref{PROGRAM_P_0_functional_analytic} is obtained by $\mu_0$  and coincides
with the optimal value of \ref{PROGRAM_D_0_functional_analytic}. The measure $\mu_0$ assigns mass to any open subset of $(-\infty, K)$ and $\Vert\mu_0 \Vert \vee \Vert\lambda_0 \Vert \leq \rho$.
\item Let $V$ be defined as in \eqref{DEF_lsc_continuation_of_T_mu}. If $V(T,x) > \widetilde{g}(x)$ for any $x \in (-\infty,K)$,
the optimal value of the programme \ref{PROGRAM_D_0_functional_analytic} is obtained by $\lambda_0$. In this case the complementary slackness 
equations \eqref{EQ_CS_lambda} and \eqref{EQ_CS_mu} hold.
\end{enumerate}
\end{lemma}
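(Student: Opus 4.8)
The plan is to assemble the statement from the pieces already developed in this subsection; no genuinely new argument is needed. Assertion~1 is what the study of the relaxed programmes \ref{PROGRAM_P_eps} and \ref{PROGRAM_D_eps} delivered: primal attainment of $p_\epsilon$ holds because, after bounding the mass of the explicit admissible competitor $\widetilde\mu$ with Lebesgue density proportional to $\nno(x_0+\widetilde r T,\sigma^2 T,\cdot)1_{(-\infty,K)}$, the feasible set $C^\epsilon_p$ is $\sigma(\M,C_0)$-compact (the intersection of two $\sigma(\M,C_0)$-closed cones with an Alaoglu-compact ball) and $\mu\mapsto\Vert\mu\Vert$ is $\sigma(\M,C_0)$-lower semi-continuous; dual attainment of $d_\epsilon$ follows symmetrically once the constraint $\T^*\lambda\le 1$ is seen to force $\Vert\lambda\Vert\le 2$. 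Strong duality $p_\epsilon=d_\epsilon$ reduces, via Fenchel--Moreau applied to the dual value function $v$, to continuity of $v$ at the origin, which Lemma~\ref{LEMMA_dual_value_function_properties} together with Lemma~\ref{lemma_stetigkeit_konvexer_funktionen} provides since $v\le 0$ on the $\Vert\cdot\Vert_\infty$-unit ball. Complementary slackness for $(\mu_\epsilon,\lambda_\epsilon)$ is then the sandwich in \eqref{COMPLEMENTARY_SLACKNESS_Peps_Deps}, and the structure of \ref{PROGRAM_D_eps} lets us pick $\lambda_\epsilon$ charging no zero of $\widetilde g$; the uniform mass bound $\rho$ comes from the constants $2\Vert\widetilde g\Vert_\infty$ and $2$ above.

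For Assertion~2 I would pass to the limit along a sequence $\epsilon_n\downarrow 0$. The total-variation unit balls of $\M(-\infty,K]$ and $\M(\Omega)$ are vaguely compact and vaguely metrisable, so there are vague limits $\mu_0,\lambda_0$ with $\Vert\mu_0\Vert\vee\Vert\lambda_0\Vert\le\rho$; since $y\mapsto\kappa(t,x,y)\in C_0(-\infty,K]$ for each $(t,x)\in\Omega$, vague convergence gives $\T\mu_0\ge\widetilde g$, i.e.\ \ref{PROGRAM_P_0_functional_analytic}-admissibility of $\mu_0$, and a Urysohn cut-off of $\kappa(\cdot,\cdot,y)$ against the $\Omega_{2\delta}$'s upgrades the dual constraint in the limit to $\T'\lambda_0\le 1$. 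The identity $p_0=d_0$ and optimality of $\mu_0$ are then the chain $d_0\le p_0\le\Vert\mu_0\Vert\le\liminf_n\Vert\mu_{\epsilon_n}\Vert=\liminf_n p_{\epsilon_n}=\liminf_n d_{\epsilon_n}\le d_0$ from \eqref{EQ_strong_duality_p0_d0}, using lower semi-continuity of the norm under vague convergence and the fact that every \ref{PROGRAM_D_eps}-admissible measure is \ref{PROGRAM_D_0_functional_analytic}-admissible. That any \ref{PROGRAM_P_0_functional_analytic}-admissible measure charges every open subinterval of $(-\infty,K)$ is a one-line contradiction: if $\mu(I)=0$ with $I\ni c$, then $\T\mu(t,c)=\int 1_{\{|y-c|\ge\nu\}}\kappa(t,c,y)\,\mu(dy)\to 0$ as $t\uparrow T$, contradicting $\T\mu(t,c)\ge\widetilde g(c)>0$.

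Assertion~3 is where the extra hypothesis $V(T,\cdot)>\widetilde g$ enters and where the real work lies. Because $\Omega$ is not compact, vague convergence of $\lambda_{\epsilon_n}$ does not control $\langle\widetilde g,\lambda_{\epsilon_n}\rangle$, so I would first prove tightness of $\{\lambda_{\epsilon_n}\}$ in $\M(\cl\,\Omega)$: integrating the Gaussian-domination inequality defining \ref{PROGRAM_D_0_functional_analytic}-admissibility over a far-left half-line $(-\infty,C)$ rules out escape of mass as $x\to-\infty$, whence $\lambda_{\epsilon_n}\to\overline{\lambda}_0$ weakly with $\overline{\lambda}_0|_\Omega=\lambda_0$. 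It then remains to show $\overline{\lambda}_0$ puts no mass on the time-slices $M_1=\{0\}\times(-\infty,K)$ and $M_2=\{T\}\times(-\infty,K)$: mass on $M_1$ is excluded by Fatou together with the $y\to-\infty$ asymptotics of $\kappa(0,x,y)$ (Lemma~\ref{LEMMA_norm_and_quotient_of_normal_pdf}) and $K<x_0$; mass on $M_2$ is excluded by combining the lower semi-continuous extension \eqref{DEF_lsc_continuation_of_T_mu} and the hypothesis \eqref{EQ_ADDITIONAL_ASSSUMPTION_on_V}, which force $V-\widetilde g\ge\delta$ on some $[T-1/n_0,T]\times[a,b]$, with the $\epsilon$-level strong duality, since then $\langle\T\mu_0-\widetilde g,\lambda_{\epsilon_n}\rangle\ge\delta\rho>0$ contradicts $\langle\T\mu_0-\widetilde g,\lambda_{\epsilon_n}\rangle\le\Vert\mu_0\Vert-\Vert\mu_{\epsilon_n}\Vert\to 0$. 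With $\overline{\lambda}_0(M_1)=\overline{\lambda}_0(M_2)=0$, weak convergence gives $\langle\widetilde g,\lambda_0\rangle=\lim_n d_{\epsilon_n}=d_0$, so $\lambda_0$ is \ref{PROGRAM_D_0_functional_analytic}-optimal, and \eqref{EQ_CS_lambda}, \eqref{EQ_CS_mu} follow exactly as in \eqref{COMPLEMENTARY_SLACKNESS_Peps_Deps}. The tightness estimate and the two boundary-mass exclusions, in that order, are the main obstacles; everything else is bookkeeping.
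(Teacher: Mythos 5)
Your proposal faithfully reconstructs the paper's argument: the same four-step compactness/lower-semicontinuity treatment of the $\epsilon$-programmes, the same Fenchel--Moreau strong duality via continuity of the dual value function at the origin, the same vague-limit passage to $\epsilon\to 0$, and the same tightness-plus-boundary-mass-exclusion argument under the additional hypothesis $V(T,\cdot)>\widetilde g$. This is essentially the paper's own proof, correctly summarised.
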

Lemma \ref{Theorem_summary1} is nothing but a slight reformulation of statements 2 and 3.

\subsection{Proof of Theorem \ref{t:THEOREM_Hauptsatz}}\label{section_european_representation}
We use the notation from the preceding sections. In particular, see Section \ref{SubSection_DUALITY} for the definition of the operator $\T$ and the optimisation problem
\ref{PROGRAM_P_0_functional_analytic}. Let $\mu^*$ be a cheapest dominating European option in the sense of Theorem \ref{t:Theorem_summary0}.
In view of \eqref{EQ_TRANSFORM_BEIBEL_LERCHE} we have
\begin{align}\label{eq:Veu_in_T_umschreiben}
v_{\eu, \mu^*}(\theta, x)
&= e^{-r\theta}\int   \frac{ \nno\!\left(x +\widehat{r} \theta, \sigma^2 \theta, y \right)}{ \nno\!\left(x_0 +\widehat{r} T, \sigma^2 T, y \right)}    \mu^*(dy)\\
&= e^{(-2r/\sigma^2) x} \int   \frac{ \nno\!\left(x +\widetilde{r} \theta, \sigma^2 \theta, y \right)}{ \nno\!\left(x_0 +\widetilde{r} T, \sigma^2 T, y \right)}   e^{(2r / \sigma^2)x_0 - rT}   \mu^*(dy)\nonumber\\
&= e^{(-2r/\sigma^2) x}  \T\mu_0(T-\theta,x)\nonumber
\end{align}
for any $(\theta,x)\in (0,T)\times\rr$. Here we denote by $\mu_0 =  e^{(2r / \sigma^2)x_0 - rT}  \mu^*$ the 
corresponding \ref{PROGRAM_P_0_functional_analytic}-optimal measure from Lemma \ref{Theorem_summary1}.\\

\paragraph{Step 1: Analyticity of the European value function}
First, we show that the first assumption of Theorem \ref{t:THEOREM_Hauptsatz} ensures the analyticity of the function
$v_{\eu,\mu^*}$ on the open $\cc^2$-domain 
\[E:= \left\{ \theta \in \cc  :  \sqrt{ (\Re \theta - (T+2\delta)/2)^2 + (\Im \theta)^2 } < (T+2\delta)/2  \right\} \times \cc. \]
It suffices to verify that the function
\begin{equation}\label{EQ_Function_for_partial_analyticity}
e^{r\theta}v_{\eu, \mu^*}(\theta, x) = \int
 \frac{ \nno\!\left(x +\widehat{r} \theta, \sigma^2 \theta, y \right)}{ \nno\!\left(x_1 +\widehat{r} (T+2\delta), \sigma^2 (T+2\delta), y \right)}   \mu^{**}(dy)
\end{equation}
is analytic on $E$, where 
\[\frac{ d\mu^{**}}{ d\mu^*}(y) := \frac{ \nno\!\left(x_1 +\widehat{r} (T+2\delta), \sigma^2 (T+2\delta), y \right)}{ \nno\!\left(x_0 +\widehat{r} T, \sigma^2 T, y \right)} ,\quad y\in\rr.\]
In view of assumption 1 we have $\Vert \mu^{**} \Vert = v_{\eu, \mu^*}(T+2\delta, x_1)e^{r(T+2\delta)}< \infty$.
Due to Hartogs' theorem it is enough to show that the function from \eqref{EQ_Function_for_partial_analyticity} is partially analytic, cf.\ \cite[Paragraph 2.4]{krantz.92}.
Lemma \ref{LEMMA_norm_and_quotient_of_normal_pdf}(2) implies that 
\begin{align}
\MoveEqLeft{\left\vert \frac{\nno(x+\widehat{r}\theta, \sigma^2 \theta,y)}{\nno(x_1 + \widehat{r}(T+2\delta), \sigma^2 (T+2\delta),y)} \right\vert = \left\vert h_1(\theta,x)\right\vert \left\vert \exp\biggl( -\frac{(y - A(\theta,x))^2}{2B(\theta,x)} \biggr)\right\vert}\nonumber\\
&=\left\vert h_2(\theta,x)\right\vert \exp\!\left( - \frac{\Re B(\theta,x)}{2\vert B(\theta,x)\vert^2} \biggl( y -  \Re A(\theta,x) - \frac{ \Im A(\theta,x) \Im B(\theta,x) }{\Re B(\theta,x)}  \biggr)^2 \right) \nonumber
\end{align}
for any $(\theta,x)\in E$ and $y\in\rr$,
where $h_1, h_2$ denote certain functions which are continuous on $E$.
The quantities $A(\theta,x)$ and $B(\theta,x)$ are defined as in \eqref{e:AB}.
For any $(\theta,x)\in E$ we have 
\[\Re B(\theta,x)
=  \Re \frac{\sigma^2\theta(T+2\delta)}{T+2\delta-\theta} 
= \frac{\sigma^2 (T+2\delta)}{ |T+2\delta - \theta|^2}    \left( (T+2\delta) \Re \theta - |\theta|^2  \right) > 0 \]
and therefore the integrand occurring in \eqref{EQ_Function_for_partial_analyticity} satisfies the inequality  
\[\sup_{y\in\rr}\left\vert \frac{\nno(x+\widehat{r}\theta, \sigma^2 \theta,y)}{\nno(x_1 + \widehat{r}(T+2\delta), \sigma^2 (T+2\delta),y)} \right\vert \leq \left\vert h_2(\theta,x)\right\vert.\]
The quantity on the right-hand side is bounded on every compact subset of $E$.
Hence we can use a standard argument involving the theorems of Morera and Fubini in order to prove partial analyticity.
For a detailed exposition of the technique, we refer the reader to the proof of Lemma \ref{LEMMA_analyticity_sol_heat_eaqtion}, which is to be found in \cite[Section 5.1]{lenga.17}.
In view of Hartogs' theorem we conclude that the mapping $v_{\eu,\mu^*}$ is indeed analytic on $E$.

\paragraph{Step 2: Analyticity of the curve}
We show that the curve $\theta \mapsto \cx(\theta)$ is analytic on an open complex domain containing the interval $(0,T]$. 
In view of step 1 and the assumptions imposed on the American payoff $g$,  the function 
\begin{equation}\label{EQ_def_func_Psi_from_step_2}
 \Psi(\theta,x):= v_{\eu,\mu^*}(\theta,x) - g(x)
\end{equation}
is analytic on the open $\cc^2$-domain $D' \times D$, where
\begin{align*}
D'&:= \left\{ \theta \in \cc : \sqrt{ (\Re \theta - (T+2\delta)/2)^2 + (\Im \theta)^2 } < (T+2\delta)/2  \right\} 
\end{align*}
and $D$ denotes the domain of analyticity of $g$.
The set $D'$ is simply connected and $(0,T+2\delta) \times (-\infty,K)$ is a subset of $D' \times D$ .
The continuity of $\Psi$  and the uniqueness of the minima warrant that the curve $x$ is continuous on the interval $(0, T + \delta/2)$.
Indeed, assume by contradiction that $\cx$ is discontinuous at $\theta_0 \in (0, T + \delta/2)$. 
Then there is a sequence $\theta_n \to \theta_0$ and some $x_\infty\leq K,\epsilon >0$ such that 
 $\cx(\theta_n) \to x_\infty$ as $n \to \infty$ and $\vert x_\infty - \cx(\theta_0)\vert > \epsilon$. 
Hence there exists a constant $T_{\max} >0$ with
$\Psi(\theta_0,\cx(\theta_0)) + T_{\max} < \Psi(\theta_0,x_\infty)$. Consequently, we can choose two disjoint balls $B((\theta_0,x_\infty),r)$
and $B((\theta_0,\cx(\theta_0)),r)$ of radius $ r \in (0, \epsilon /2)$ with
$\Psi(\theta,x) + {T_{\max}}/{2} < \Psi(\widetilde\theta,\widetilde x)$
for any $(\theta,x) \in B_r{(\theta_0,\cx(\theta_0))} $ and $(\widetilde\theta,\widetilde x) \in B((\theta_0,x_\infty),r)$.
This yields a contradiction because $(\theta_n, \cx(\theta_n))$ is contained in $B((\theta_0,x_\infty),r)$ for any sufficiently large integer $n$.
Hence the curve $\cx$ is continuous.

For any $\theta\in (0,T+\delta)$ we have 
$D_2\Psi(\theta,\cx(\theta)) = 0$ and $D_{22} \Psi (\theta,\cx(\theta)) \geq 0$ due to the necessary first and second order conditions for minimality. 
Applying Kolmogorov's backward equation in the version of Lemma \ref{l:kolmogorov}
we obtain
\begin{equation}\label{EQ_express_Psi_xx_in_therms_of_H}
\begin{aligned}
D_{22} \Psi  &= D_{22} v_{\eu,\mu^*} - g'' \\
	&= \frac{2}{\sigma^2}D_1v_{\eu,\mu^*} + \left(1 - \frac{2r}{\sigma^2} \right)D_2 v_{\eu,\mu^*} + \frac{2r}{\sigma^2} v_{\eu,\mu^*} - g''\\
	&= \frac{2}{\sigma^2}D_1 v_{\eu,\mu^*} + \left(1 - \frac{2r}{\sigma^2} \right)D_2\Psi + \frac{2r}{\sigma^2} \left( v_{\eu,\mu^*} - g\right)- c\\
	&= H + \left(1 - \frac{2r}{\sigma^2} \right)D_2 \Psi 
\end{aligned}
\end{equation}
on $(0,T+2\delta) \times \rr$, where $c$ and $H$ are defined as in \eqref{eq:function_c} and \eqref{EQ_function_H_of_verification_theorem}, respectively. 
Assumption 3 warrants that $H$ and therefore $ D_{22} \Psi $ are strictly positive on the set $ \Gamma:= \{ (\theta , \cx(\theta)) : \theta \in (0,T] \}$.
Therefore Theorem \ref{THEOREM_Implicit_Function} is applicable to the function $D_2\Psi$ at any point of $\Gamma$.
We obtain that for any $(\widetilde{\theta}, \widetilde{x}) \in \Gamma$ there exist open neighbourhoods $\widetilde{\theta} \in U_{\widetilde{\theta}}, \widetilde{x}\in U_{\widetilde{x}}$ and 
an analytic curve $\chi_{\widetilde{\theta}} : U_{\widetilde{\theta}} \to U_{\widetilde{x}}$ with
$\chi_{\widetilde{\theta}}(\theta) = \cx(\theta)$
for any $\theta \in  U_{\widetilde{\theta}}\cap (0,T]$.
The identity theorem implies that any two curves $\chi_{\widetilde{\theta}_1}, \chi_{\widetilde{\theta}_2}$  coincide on $U_{\widetilde{\theta}_1} \cap U_{\widetilde{\theta}_2}$.
Since the mapping $\theta \mapsto \cx(\theta)$ is continuous, there exists
an analytic function $\chi$ such that  
$\chi \vert_{U_\theta} = \chi_\theta$ 
for any $ \theta \in (0,T]$. In particular, we have $\chi(\theta) = \cx(\theta)$ for any $ \theta \in (0,T]$. 
This proves that $x$ is indeed analytic on some complex domain containing the interval $(0,T]$.

\paragraph{Step 3: Proof of statement 2}
We verify that $ v_{\eu, \mu^*}(\theta,\cx(\theta)) = g(\cx(\theta))$ for any $\theta \in [0,T]$.
Since the measure $\mu^*$ assigns no mass to the set $(K,\infty)$, we have $v_{\eu, \mu^*}(0,x) = 0$ for any $x>K$.
Lower semi-continuity even implies  $v_{\eu, \mu^*}(0,K) = 0$ and therefore
\[v_{\eu, \mu^*}(0,x(0)) = v_{\eu, \mu^*}(0,K) = 0 = g(K) =  g(\cx(0)).\]
In view of \eqref{eq:Veu_in_T_umschreiben} we have 
\[ e^{(2r/\sigma^2)x}\left(  v_{\eu, \mu^*}(T-t,x) - g(x) \right) = \T\mu_0(t,x) - \widetilde{g}(x) \]
for any $(t,x) \in [0,T)\times\rr$, where $\widetilde{g}$ is defined as in \eqref{EQ_DEFINITION_of_tilde_g}. 
Assumption 2 implies that $ v_{\eu, \mu^*}(0,x) - g(x) > 0 $ for any $x <K$.
Consequently, Lemma \ref{Theorem_summary} warrants strong duality, primal and dual attainment as well as complementary slackness.
In view of \eqref{EQ_CS_lambda}, the 
dual maximiser $\lambda_0$ assigns no mass to the complement of the set $\{ (t,\cx(T-t)) :  0 < t < T \}$. 
We claim that there exists a sequence  $ \theta_n \uparrow T$ with $\theta_n \in (0,T)$ and
\begin{equation}\label{eq:sequence_with_desired_property}
v_{\eu, \mu^*}(\theta_n,\cx(\theta_n)) = g(\cx(\theta_n)), \quad n\in\nn.
\end{equation}
Assume to the contrary that this is false.
Then there is some $\epsilon \in (0,T)$ with $v_{\eu, \mu^*}(\theta,\cx(\theta)) > g(\cx(\theta))$ for all $\theta \in (T-\epsilon, T)$.
Equation \eqref{EQ_CS_lambda} tells us that the measure $\lambda_0$ is concentrated on the set $\Gamma_\epsilon := \{  (t,\cx(T-t))  : \epsilon < t < T \}$.
From Lemma \ref{Theorem_summary} we already know that the primal minimiser $\mu_0$ assigns mass to any open subset of $(-\infty,K)$. 
By \eqref{EQ_CS_mu} we can find a sequence $y_n \downarrow - \infty$ 
with $\max_{n\in\nn} y_n < \min_{\theta\in [0,T]} \cx(\theta) + \widetilde{r}(T-\epsilon) =: z$ and
\begin{align*}
\nno\bigl(x_0 + \widetilde{r} T, \sigma^2 T, y_n \bigr)  
&= \int_{\Gamma_\epsilon} \nno\bigl(x + \widetilde{r} (T-t), \sigma^2 (T-t), y_n \bigr)   \lambda_0(d(t,x)), \quad n\in\nn.
\end{align*}
In view of $\widetilde{r}< 0$ we have 
\begin{align*}
\nno\bigl(x_0 + \widetilde{r} T, \sigma^2 T, y_n \bigr)  
&\leq \int_{\Gamma_\epsilon} \nno\bigl( z, \sigma^2 (T-t), y_n \bigr)   \lambda_0(d(t,x)) \\
&\leq \nno\bigl( z, \sigma^2 (T-\epsilon), y_n \bigr)  \lambda_0\bigl(\Gamma_\epsilon\bigr), \quad n\in\nn.
\end{align*}
This yields the contradiction
\begin{align*}
1 &\leq  \lambda_0\!\left(\Gamma\right)  \lim_{n \to\infty}  
\frac{ \nno\bigl( z, \sigma^2 (T-\epsilon), y_n \bigr) }{\nno\bigl(x_0 + \widetilde{r} T, \sigma^2 T, y_n \bigr)  } = 0.
\end{align*}
Consequently a sequence with the desired property \eqref{eq:sequence_with_desired_property} exists.

In view of steps 1 and 2, the mapping $\theta \mapsto v_{\eu, \mu^*}(\theta,\cx(\theta)) - g(\cx(\theta))$ is 
analytic on some open complex domain
containing the interval $(0,T]$. Equation \eqref{eq:sequence_with_desired_property} and the identity theorem finally yield that 
$v_{\eu, \mu^*}(\theta,\cx(\theta)) = g(\cx(\theta))$ for any $\theta\in(0,T]$.

\paragraph{Step 4: Proof of statement 4}
We verify that $\mu^*$ is the unique measure representing our American payoff on the set $\widetilde{C}_{T,x_0}$ as defined in \eqref{EQ_set_C_tilde_from_verification_theorem}.
Moreover, we show that $\widetilde{C}_{T,x_0}$ is a connected subset of $C_{T,x_0}$.
For any $T_0 \in [0,T]$ the process 
$V_t^{(T_0)}:= e^{-rt} v_{\eu, \mu^*}({T_0}-t, X_t)$ is a martingale on the interval $[0,{T_0})$.
Indeed, for $0 \leq u < t + u < {T_0}$ the Markov property of the process $X$ yields 
\begin{equation}\label{EQ_martingale_property_of_discounted_measure_EU_value_process}
\begin{aligned}
\E_x\bigl( V_{t+u}^{(T_0)} \big| \mathscr{F}_u \bigr)
&=  e^{-r(t+u)}  \E_{X_u}\!\left(  v_{\eu, \mu^*}({T_0}-t-u, X_{t}) \right)\\ 
&=  e^{-r{T_0}} \int    \frac{ \E_{X_u}\!\left(  \nno\!\left(  X_t + \widehat{r}({T_0}-t-u) , \sigma^2 ({T_0}-t-u), y \right) \right)}
{ \nno\!\left(x_0 + \widehat{r} T, \sigma^2 T, y \right)} \mu^*(dy) \\ 
&=   e^{-r{T_0}} \int    \frac{ \nno\!\left(  X_u + \widehat{r}({T_0}-u) , \sigma^2 ({T_0}-u), y \right) }{ \nno\!\left(x_0 + \widehat{r} T, \sigma^2 T, y \right)}  \mu^*(dy) \\
& = e^{-ru}   v_{\eu, \mu^*}({T_0}-u, X_{u}) =  V_u^{(T_0)}. 
\end{aligned}
\end{equation}
The third equality follows from the convolution property of the normal distribution.

The martingale condition may fail to hold up to $T_0$. Nevertheless, Fatou's lemma yields the supermartingale property. Indeed, for any $u \in [0,{T_0}]$ we have 
\begin{align*}
\E_x \bigl( V_{T_0}^{(T_0)} \big| \mathscr{F}_u \bigr) 
&=  \E_{x}\!\left(   e^{-r{T_0}} \liminf_{ t \uparrow {T_0} } v_{\eu, \mu^*}({T_0}-t, X_t)\middle\vert \mathscr{F}_u \right) \\ 
&\leq  \liminf_{t \uparrow {T_0}  } \E_{x} \!\left(   e^{-rt}  v_{\eu, \mu^*}({T_0}-t, X_t)\middle\vert \mathscr{F}_u  \right)  = V_u^{(T_0)}.
\end{align*}

Due to superreplication, we have  $ e^{-rt} g(X_t) \leq V_t^{(T)} $ for any $t \in [0,T]$ and consequently the optional sampling theorem yields
\[ \E_x \!\left( e^{-r\tau} g(X_\tau)  \middle\vert \mathscr{F}_t \right)
\leq \E_x \bigl( V_\tau^T \big\vert \mathscr{F}_t \bigr) \leq  V_t^{(T)}\]
for any $[t,T]$-valued stopping time $ \tau$.
Maximising the left-hand side over all such stopping times shows that $v_{\am,g}(\theta,x) \leq v_{\eu,\mu^*}(\theta,x) $ for any $(\theta,x)\in [0,T] \times \rr$.

Now we verify that the value functions $v_{\am,g}$ and $v_{\eu,\mu^*}$ coincide on the set $\widetilde C_{T,x_0}$.
To this end let $\tau_\theta$ be defined as in \eqref{EQ_optimal_stopping_time_from_Hauptsatz}.
Assumption 4 warrants that the measure $\mu^*$ has no atom at $K$, i.e.\ $\mu^*(\{K\})=0$. Indeed, assuming $\mu^*(\{K\})>0$ would imply that
\[\liminf_{\theta\to 0} v_{\eu, \mu^*}(\theta ,K)
\geq T_{\max}  \liminf_{\theta\to 0} e^{-r\theta} \nno\bigl(\widehat{r} \theta, \sigma^2 \theta, 0 \bigr) = \infty, \]
where $T_{\max}$ denotes some positive constant. 
Owing to the geometric properties of the curve $\cx$, we have 
\[ \E_{x}\!\left( \nno\bigl(X_{\tau_\theta}  +\widehat{r} (\theta - \tau_\theta), \sigma^2 (\theta - \tau_\theta), y \bigr)  1_{\{ \tau_\theta = \theta\}} \right) 
\leq  \E_{x}\!\left( 1_{\{y\}}(X_{\theta})  1_{\{ X_{\theta} \geq K \}} \right) 
= 0 \]
for any $y < K$. Hence monotone convergence yields
\begin{align}\label{e:sechs}
v_{\am,g}(\theta,x) 
&\geq \E_{x} \!\left( e^{-r \tau_\theta}  g(X_{\tau_\theta}) \right)  \\
&\geq \E_{x} \!\left( e^{-r \tau_\theta}  g(X_{\tau_\theta})  1_{\{ \tau_\theta < \theta\}}   \right)  \nonumber \\
&=  \E_{x}\!\left( e^{-r \tau_\theta}  v_{\eu, \mu^*}(\theta - \tau_\theta, X_{\tau_\theta} )  1_{\{ \tau_\theta < \theta\}   } \right) \nonumber\\
&= \lim_{x' \uparrow K} \E_{x}\!\left( e^{-r\theta}
\int_{-\infty}^{x'}   \frac{ \nno\!\left(X_{\tau_\theta}  +\widehat{r} (\theta - \tau_\theta), \sigma^2 (\theta - \tau_\theta), y \right)}{ \nno\!\left(x_0 +\widehat{r} T, \sigma^2 T, y \right)}    \mu^*(dy)
 1_{\{ \tau_\theta < \theta\}}  \right) \nonumber\\
&= \lim_{x' \uparrow K}  e^{-r\theta} \int_{-\infty}^{x'}  
 \frac{ \E_{x}\!\left( \nno\!\left(X_{\tau_\theta}  +\widehat{r} (\theta - \tau_\theta), \sigma^2 (\theta - \tau_\theta), y \right)  \right)}{ \nno\!\left(x_0 +\widehat{r} T, \sigma^2 T, y \right)}    \mu^*(dy) \nonumber\\
&=  v_{\eu, \mu^*}(\theta ,x )\nonumber\\
&\geq v_{\am,g}(\theta,x) \label{e:sechs2}
\end{align}
for any $(\theta, x) \in \widetilde C_{T,x_0}$.
Summing up, we have shown that $v_{\am,g}(\theta,x) = v_{\eu, \mu^*}(\theta, x) > g(x)$ holds for any $(\theta,x)\in \widetilde C_{T,x_0}$.
Moreover, this directly implies that $\widetilde C_{T,x_0}$ is a connected subset of $C_{T,x_0}$.

Finally we verify that the representing measure $\mu^*$ is unique. Assume that there is another measure $\nu$ such that
$v_{\eu, \mu^*}(\theta,x) = v_{\am,g}(\theta,x) = v_{\eu, \nu}(\theta,x)$ for any $(\theta,x) \in  \widetilde C_{T,x_0}$.
Recall that the value functions $v_{\eu, \mu^*}, v_{\eu, \nu}$ are analytic on a $\cc^2$-domain containing the set $(0,T)\times\rr$.
The set $\widetilde C_{T,x_0}$ contains some open ball.
By applying the identity theorem in each variable, we conclude that the mappings $v_{\eu, \mu^*}$ and $v_{\eu, \nu}(\theta,x)$ 
coincide on the set $(0,T)\times\rr$.
Equation \eqref{eq:Veu_in_T_umschreiben} implies that $\T\mu^* = \T\nu$ holds on $(0,T)\times\rr$. 
By Lemma \ref{LEMMA_injectivity_of_operator_T} the operator $\T$ is injective on the Borel measures and therefore $\mu^* = \nu$.

\paragraph{Step 5: Proof of statement 5}
By statement 2 we have 
\[C_{T,x_0}=\bigl(C_{T,x_0}\cap\{(\theta,x)\in(0,T]\times\rr:\cx(\theta)<x\}\bigr)
\cup\bigl(C_{T,x_0}\cap\{(\theta,x)\in(0,T]\times\rr:\cx(\theta)>x\}\bigr).\]
Since $C_{T,x_0}$ is connected and the first set in the union is nonempty, the second set must be empty.
Therefore $C_{T,x_0}\subset \widetilde C_{T,x_0}$ and hence $C_{T,x_0}=\widetilde C_{T,x_0}$ by step 4.

\paragraph{Step 6: Proof of statement 3}
For $x \geq K$ we have $v_{\eu, \mu^*}(0,x) =  g(x) =  0$.  
For any $x \in [\min_{\theta\in (0,T]} \cx(\theta),K)$ there is a maturity $\theta(x) \in (0,T]$
such that $(\theta(x),x)$ is located on the curve, i.e.\ $\cx(\theta(x)) = x$. Due to superreplication and assertion 2, we have
\begin{equation}\label{e:fuer6}
 g(x) \leq \inf_{\theta\in[0,T]} v_{\eu, \mu^*}(\theta,x)  \leq v_{\eu, \mu^*}(\theta(x),x) = g(x).
\end{equation}

\paragraph{Step 7: Proof of statement 6}
This follows from (\ref{e:sechs}, \ref{e:sechs2}).

\paragraph{Step 8: Monotonicity of the curve}
We show by contradiction that $\cx$ is strictly increasing. 
First assume that it is not increasing. Since $\cx$ is continuous,
there are $0 < \theta_0 < \theta_1 < \theta_2 < T$ and $x_0 < K$ such that
$\cx(\theta_0) = \cx(\theta_2) = x_0$ and  $\cx(\theta_1) < x_0$. 
From step 4 we know that $\widetilde C_{T,x_0}$ is a connected subset of $C_{T,x_0}$.
In particular, $(\theta_1,x_0)$ is located within the continuation set.
In view of assertion 2 we  conclude that
$g(x_0) =  v_{\am, g}(\theta_0,x_0) < v_{\am, g}(\theta_1,x_0) \leq v_{\am, g}(\theta_2,x_0) = g(x_0)$. 
This is impossible and hence the mapping $\theta\mapsto \cx(\theta)$ must be increasing.

Now assume that there are some $0 <  \theta_0 < \theta_1 \leq T$ such that $x$ is constant on the interval $(\theta_0, \theta_1)$.
Since the curve $\cx$ is analytic, the identity theorem implies that $\cx$ is constant on $(0,T]$. 
In view of the second assumption we find that $K > \cx(\theta) = \liminf_{\theta' \to 0} \cx(\theta') = K$ for any $\theta \in (0,T]$,
which is impossible.
Therefore $\theta\mapsto \cx(\theta)$ is strictly increasing. 
\qed

\section{Representability of the American put}\label{s:SEC_Example_am_put}
\begin{figure}
\centering
\includegraphics[width=.92\textwidth]{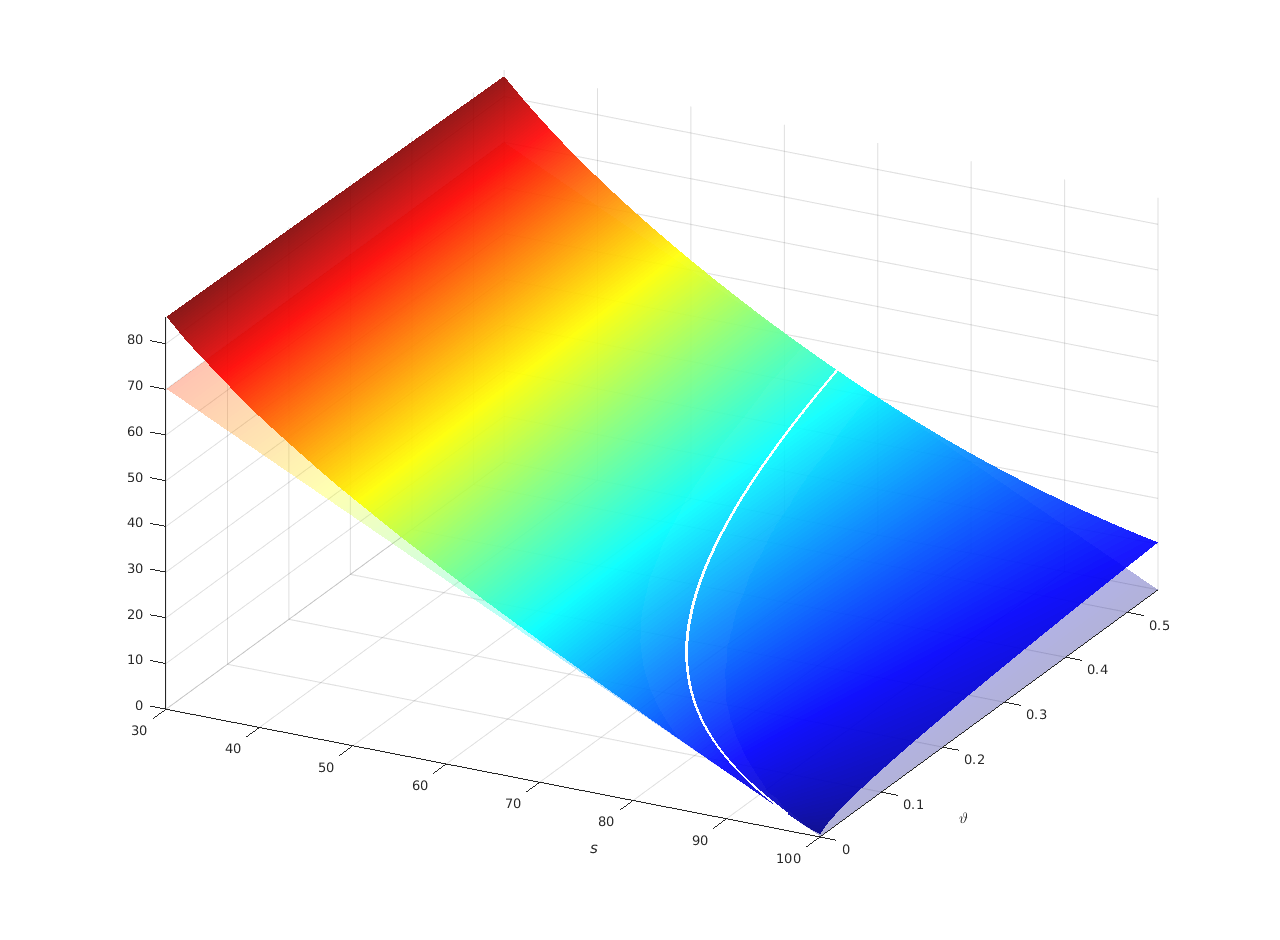}
\caption{The price surface of the CDEO associated to the American put}\label{f:FIGURE_CDEO_PUT}
\end{figure}
\begin{figure}
\centering
\includegraphics[width=.9\textwidth]{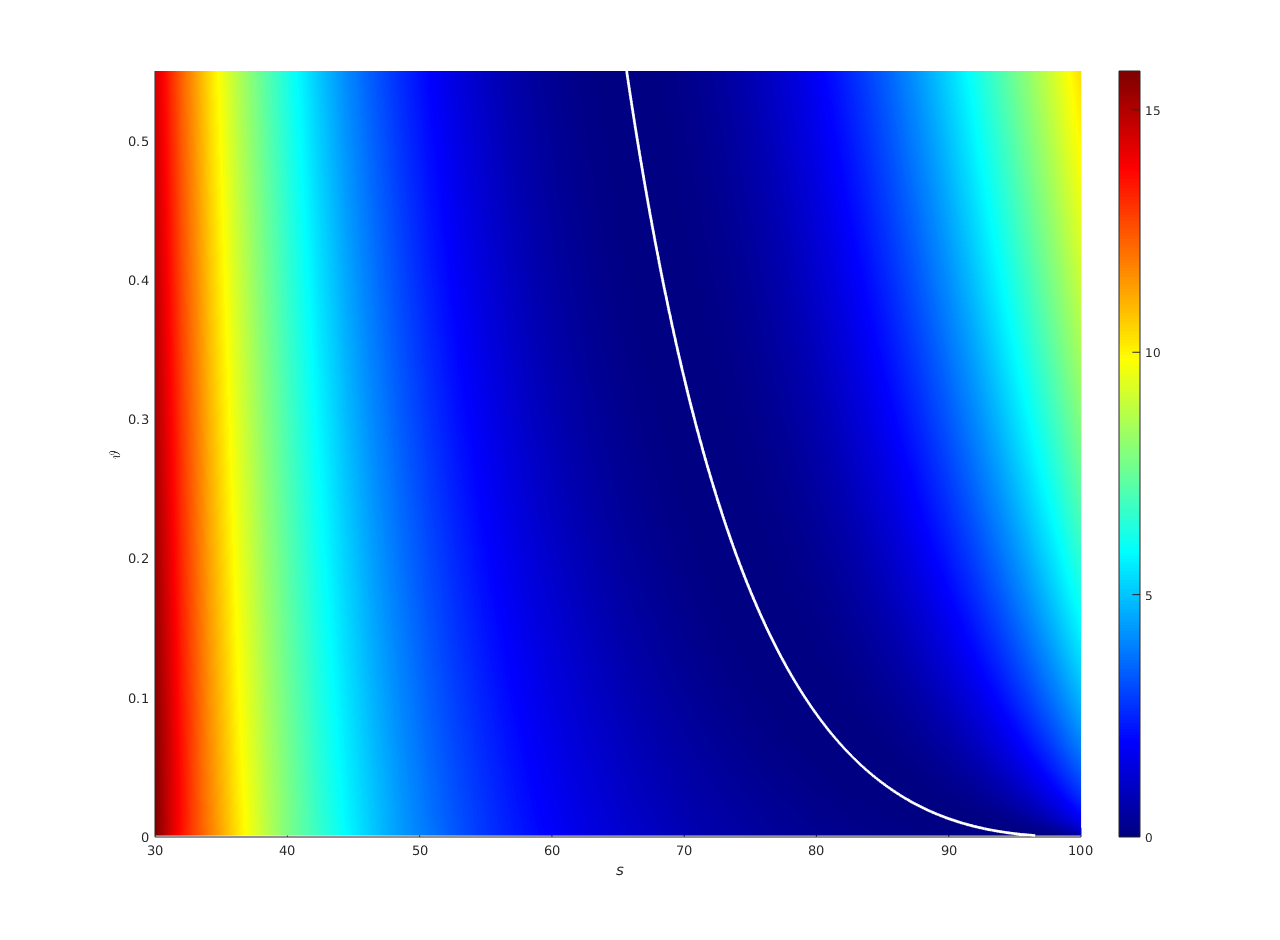}
\caption{The curve $\theta\mapsto \exp \cx(\theta)$ for the CDEO of the American put}\label{f:FIGURE_AMPUT_MIN}
\end{figure}
\begin{figure}
\centering
\includegraphics[width=0.66\textwidth]{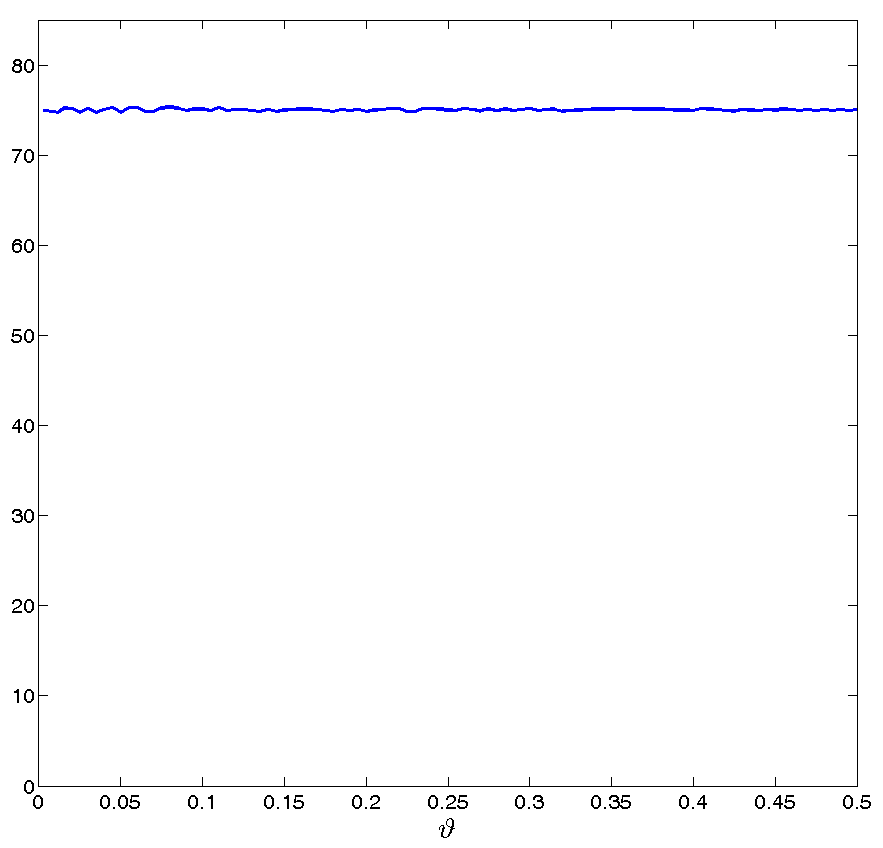}
\caption{The mapping $\theta\mapsto H(\theta,\cx(\theta))$ in Theorem \ref{t:THEOREM_Hauptsatz}(3)}\label{f:FIGURE_AM_PUT_SECOND_COND}
\end{figure}
\begin{figure}
\centering
\includegraphics[width=0.69\textwidth]{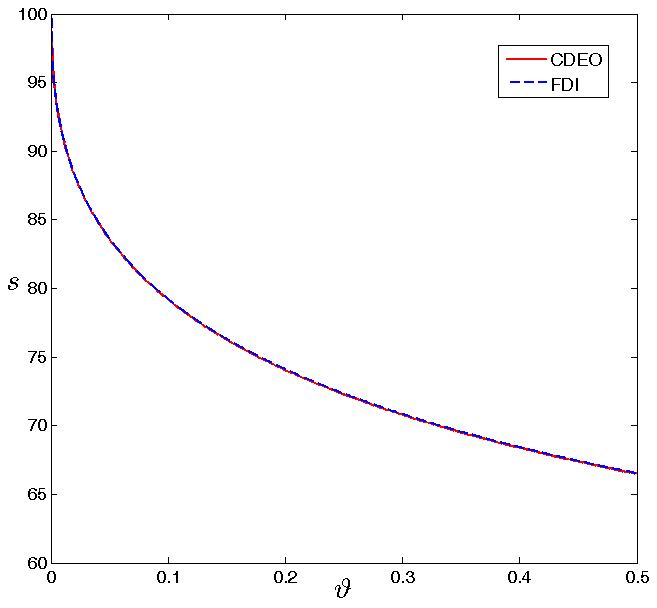}
\caption{Comparison of the CDEO minima curve and a finite difference approximation to the early exercise boundary of the American put}\label{f:FIG_EEC_approximations}
\end{figure}

While Theorem \ref{t:Theorem_summary0} warrants that the American put allows for a cheapest dominating European option in the distributional sense,
Theorem \ref{t:THEOREM_Hauptsatz} does not fully answer the question whether it is actually representable.
Numerically, the CDEO is easily obtained  by semi-infinite linear programming, cf.\ \cite{christensen.14,lenga.17}.
In this section we investigate whether the numerical approximation satisfies the qualitative
assumptions of Theorem \ref{t:THEOREM_Hauptsatz}.

In our numerical experiment we consider the put payoff $g(x) = (e^K - e^x)_+ $ with log-strike price $K=\log100$ and maturity $T=0.5$.
The parameters of the model are chosen as $r=0.06$, $x_0=\log K+0.1$, and $\sigma=0.4$.
Figure \ref{f:FIGURE_CDEO_PUT} displays the price surface of the approximate CDEO along with the put payoff plane.
The $s$-axis represents the stock price $s=e^x$
while the $\theta$-axis indicates the time to maturity of the option. 

If assumption 1 in Theorem \ref{t:THEOREM_Hauptsatz} were violated, we would observe an infinite CDEO price for $\theta>0.5$ and any $s=e^x\in(0,\infty)$.
This is obviously not supported by Figure \ref{f:FIGURE_CDEO_PUT}.
The minima of the functions $x \mapsto v_{\eu, \mu^*}(\theta, x)-g(x)$ for $\theta\in(0,T+\delta)$
are represented by the white curves in Figures \ref{f:FIGURE_CDEO_PUT} and \ref{f:FIGURE_AMPUT_MIN}, using the variable $s=e^x$ instead of $x$.
The graphs are in line with the requirements of assumption 2 in Theorem \ref{t:THEOREM_Hauptsatz}.
The colours in Figure \ref{f:FIGURE_AMPUT_MIN} stand for the level of the function
$(\theta,x) \mapsto v_{\eu, \mu^*}(\theta, x)-g(x)$ that is to be mimimised in $x$ for fixed $\theta$.

The numerical approximation of the function $\theta\to H(\theta, \cx(\theta))$ 
in assumption 3 is shown in Figure \ref{f:FIGURE_AM_PUT_SECOND_COND}.
It stays well away from 0 as required. Given that representability holds, it should in fact have the constant value $2r\sigma^{-2}e^K=75$, which explains
the particular shape in Figure \ref{f:FIGURE_AM_PUT_SECOND_COND}.
If assumption 4 in Theorem \ref{t:THEOREM_Hauptsatz} were violated, we would observe an exploding CDEO price for $\theta\to0$ and $s=e^x=e^K=100$.
The graph rather indicates a vanishing price in the limit -- as is to be expected if the value of the CDEO coincides with the American put price.

Altogether, these qualitative checks indicate that Theorem \ref{t:THEOREM_Hauptsatz} can be applied and hence the put is represented by its CDEO.
This explains not only the close agreement of numerical CDEO and American put values in \cite{christensen.14},
but also the match of the early exercise boundary from a finite difference approximation
and the curve $\theta\mapsto\exp(\cx(\theta))$ suggested by Theorem \ref{t:THEOREM_Hauptsatz}(6), see Figure \ref{f:FIG_EEC_approximations}.

How can these findings be reconciled with the negative result of \cite{jourdain.martini.02}
which states that no sufficiently regular European payoff function can represent the American put?
Using the language of \cite{jourdain.martini.02}, a candidate representing function $\phi$  should satisfy an ordinary differential
equation $\mcA\phi=m$ with an as yet unknown generalised function $m$, where
$\mcA f(x)=(\sigma^2x^2/2)f''(x)+rxf'(x)-rf(x)$.
As stated in \cite[equations (2.1, 2.2)]{jourdain.martini.02}, the general solution to this ODE is of the form
\begin{align}\label{e:jm}
\phi(x)&=ax+bx^{-\alpha}-{2\over\sigma^2}x^{-\alpha}\int_0^xy^\alpha\int_y^Kz^{-2}m(z)dzdy\nonumber\\
&=ax+bx^{-\alpha}-{2\over\sigma^2(\alpha+1)}x^{-\alpha}\int_0^K{(z\wedge x)^{\alpha+1}\over z^{2}}m(z)dz
\end{align}
if 
\begin{equation}\label{e:jmint}
 \int_0^K z^{\alpha-1}|m(z)|dz<\infty,
\end{equation}
where $\alpha:=2r/\sigma^2$, the strike is denoted as $K$, and $a,b\in\rr$ are constants.
For $\phi$ to represent the American put, we need $\phi(x)=0$ for $x\geq K$, which implies $a=0$ and $b>0$.
However, the positivity of $b$ ultimately yields that $\phi$ in \eqref{e:jm} cannot represent the American put, see \cite[Thereom 15]{jourdain.martini.02}.

The integral in \eqref{e:jm} does not make sense if \eqref{e:jmint} is violated. But $\mcA\phi=m$ may still be solved for such $m$, namely by
\begin{align*}\label{e:jm2}
\phi(x)&=ax+bx^{-\alpha}+{2\over\sigma^2}x^{-\alpha}\int_x^Ky^\alpha\int_y^Kz^{-2}m(z)dzdy\\
&=ax+bx^{-\alpha}-{2\over\sigma^2(\alpha+1)}x^{-\alpha}\int_x^K{z^{\alpha+1}-x^{\alpha+1}\over z^{2}}m(z)dz.
\end{align*}
In this case $\phi(x)=0$ for $x\geq K$  implies $a=0=b$, which means that the fateful $b$-term does not appear.
Hence the positive result of our study does not contradict the findings of \cite{jourdain.martini.02} because
the CDEO as a candidate for the representing European claim is not subject 
to the rather strict integrability condition \eqref{e:jmint}.

\section{Conclusion}
As noted in the introduction, the representability of an American option in terms of a
European payoff has several both numerically and conceptually interesting consequences.
In this paper we have made a first step towards verifying that a given American option is representable.
The results of Section \ref{s:SEC_Example_am_put} suggest in particular that
representability holds for the prime example of an American put in the Black-Scholes model,
contrary to the evidence from the analysis in \cite{jourdain.martini.02}.
This gives new hope that the original endeavour of Jourdain and Martini may ultimately
lead to a positive answer and that their concept 
of embedded American options has a broader scope than expected.

As an ambitious goal for future research it remains to fully characterise
representability of American options in the Black-Scholes model and more general
 markets driven by uni- or multivariate diffusions.
In particular, a rigorous proof for the American put is still wanting.

\section*{Acknowledgement}
The authors  thank Josef Teichmann for fruitful discussions and for bringing the
papers \cite{jourdain.martini.01,jourdain.martini.02} to their attention.

\appendix

\section{Auxiliary results}
\begin{lemma}\label{l:stochasticdominance}
Let $X\sim N(\mu_X,\sigma^2)$, $Y\sim N(\mu_Y,\sigma^2)$ be Gaussian random variables with $\mu_X\leq\mu_Y$.
Then the conditional law $P(X\in\cdot|X\geq0)$ is dominated by is counterpart $P(Y\in\cdot|Y\geq0)$ in the usual stochastic order,
i.e.\ $P(X>a|X\geq0)\leq P(Y>a|Y\geq0)$ for any $a\in\rr$ or, equivalently, $E(f(X)|X\geq0)\leq P(f(Y)|Y\geq0)$
for any increasing function $f$ such that the integrals exist.
\end{lemma}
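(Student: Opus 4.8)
The plan is to reduce everything to the tail-probability characterisation of the usual stochastic order. Recall that for random variables $U,V$ one has $\E(h(U))\le\E(h(V))$ for every nondecreasing $h$ for which the expectations exist if and only if $P(U>a)\le P(V>a)$ for all $a\in\rr$; this equivalence is classical and is exactly the content quoted in the statement. Applying it with $U$ distributed as $X$ conditioned on $\{X\ge0\}$ and $V$ distributed as $Y$ conditioned on $\{Y\ge0\}$, it suffices to establish
\[
P(X>a\mid X\ge0)\le P(Y>a\mid Y\ge0),\qquad a\in\rr .
\]
For $a\le0$ both sides are equal to $1$, so one may assume $a>0$ throughout.

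First I would invoke the monotone likelihood ratio property. The law of $X$ conditioned on $\{X\ge0\}$ has Lebesgue density proportional to $t\mapsto\varphi\!\left((t-\mu_X)/\sigma\right)1_{[0,\infty)}(t)$, and the corresponding density for $Y$ is obtained by replacing $\mu_X$ with $\mu_Y$. On $[0,\infty)$ the ratio of these two densities equals, up to a strictly positive multiplicative constant,
\[
\frac{\varphi\!\left((t-\mu_Y)/\sigma\right)}{\varphi\!\left((t-\mu_X)/\sigma\right)}
=\exp\!\left(\frac{(\mu_Y-\mu_X)\,t}{\sigma^2}+\frac{\mu_X^2-\mu_Y^2}{2\sigma^2}\right),
\]
which is nondecreasing in $t$ precisely because $\mu_Y\ge\mu_X$. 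Hence the conditional law of $Y$ dominates that of $X$ in the monotone likelihood ratio order, and since the latter order is stronger than the usual stochastic order, the displayed tail inequality follows, completing the proof.

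Alternatively, one can argue by a short direct computation: for $a>0$ one has $\{X>a\}\subset\{X\ge0\}$, so $P(X>a\mid X\ge0)=\bigl(1-\Phi((a-\mu_X)/\sigma)\bigr)/\bigl(1-\Phi(-\mu_X/\sigma)\bigr)$, and it is enough to check that the map $\mu\mapsto\bigl(1-\Phi((a-\mu)/\sigma)\bigr)/\bigl(1-\Phi(-\mu/\sigma)\bigr)$ is nondecreasing; differentiating its logarithm reduces this to the inequality between the values of the standard normal hazard rate $x\mapsto\varphi(x)/(1-\Phi(x))$ at $(a-\mu)/\sigma$ and at $-\mu/\sigma$, which holds because the hazard rate is increasing and $a>0$. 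Either way, there is no genuine obstacle: the only mildly non-trivial input is the implication ``monotone likelihood ratio order implies usual stochastic order'' (equivalently, the monotonicity of the Gaussian hazard rate), which is a standard fact that I would simply cite rather than reprove.
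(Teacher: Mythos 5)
Your proof is correct, and both of the arguments you offer are standard and sound. The paper's own proof is extremely terse: it merely asserts that the set of $a$ for which $P(X>a\mid X\geq0)\leq P(Y>a\mid Y\geq0)$ holds is a half-line $\{a\leq a_0\}$ (a single-crossing claim) and then says this ``naturally implies'' the result, leaving both the verification of the threshold structure and the step from it to the full inequality implicit. Your monotone-likelihood-ratio argument makes precise what the paper is gesturing at: the ratio of the conditional densities restricted to $[0,\infty)$ is $\exp\!\bigl((\mu_Y-\mu_X)t/\sigma^2\bigr)$ up to a constant, which is nondecreasing in $t$ exactly when $\mu_Y\geq\mu_X$, and MLR order implies usual stochastic order. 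Your alternative direct computation via the monotonicity of the Gaussian hazard rate $x\mapsto\varphi(x)/(1-\Phi(x))$ is equivalent and equally valid. In short, you reach the same conclusion as the paper by essentially the same underlying mechanism (monotone likelihood ratio / hazard-rate comparison), but you supply the details the paper omits, which is an improvement in rigor rather than a genuinely different route.
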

\begin{proof}
It is easy to verify that 
$P(X>a|X\geq0)\leq P(Y>a|Y\geq0)$
holds if and only if $a$ is below some threshold.
This naturally implies the claimed stochastic dominance.
\end{proof}

The following factorisation theorem from multivariate complex analysis gives a sufficient condition for the analytic dependence of zeros.
It is a direct consequence of the Weierstrass preparation theorem. More details can be found in \cite[Chapter 1]{chirka.89}.
\begin{theorem}\label{Theorem_holomorphic_dependence_of_zeros}
For $d\geq 2$ let $f$ be an analytic function on a domain $G = D' \times D \subset \cc^d$ with simply connected $D' \subset \cc^{d-1}$. 
Assume that the function $f(z',\cdot)$ has exactly $m$ distinct zeros in the set $D$ for any $z' \in D'$. 
Then there exist analytic functions
$\alpha_1,...,\alpha_m: D' \to D$, positive integers $k_1,...,k_m$ and an analytic function $\Phi : G \to \cc$ that does not vanish on $G$ such that
\[ f(z',z) = \prod_{l=1}^m \left( z - \alpha_l(z') \right)^{k_l} \Phi(z',z), \quad (z',z)\in G.\]
\end{theorem}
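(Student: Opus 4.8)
The plan is to deduce the global factorisation from the local Weierstrass preparation theorem, and then to glue the resulting local branches of zeros into globally defined analytic functions using the simple connectedness of $D'$. First I would fix an arbitrary $z_0'\in D'$ and denote by $w_1,\dots,w_m\in D$ the finitely many distinct zeros of $f(z_0',\cdot)$, with multiplicities $k_1,\dots,k_m\geq 1$; note that $f(z_0',\cdot)\not\equiv 0$ since it has only finitely many zeros. Pick pairwise disjoint discs $V_l\ni w_l$ with $\overline{V_l}\subset D$ and so small that $w_l$ is the only zero of $f(z_0',\cdot)$ in $\overline{V_l}$. By continuity and Rouché's theorem there is a polydisc $U'\ni z_0'$ such that for every $z'\in U'$ the function $f(z',\cdot)$ has no zero on $\partial V_l$ and exactly $k_l$ zeros, counted with multiplicity, inside $V_l$; in particular it has at least one distinct zero in each of the $m$ discs. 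Since by hypothesis $f(z',\cdot)$ has exactly $m$ distinct zeros in all of $D$, for such $z'$ there is precisely one zero inside each $V_l$ and none in $D\setminus\bigcup_l V_l$, and that single zero in $V_l$ has multiplicity exactly $k_l$. Applying the Weierstrass preparation theorem at $(z_0',w_l)$ yields, after shrinking $U'$ and $V_l$, a factorisation $f=W_l\cdot\Phi_l$ on $U'\times V_l$ with $\Phi_l$ analytic and non-vanishing and $W_l$ a Weierstrass polynomial of degree $k_l$ in $z-w_l$ whose coefficients are analytic in $z'$ and vanish at $z_0'$. By the count above $W_l(z',\cdot)$ has a single zero, of multiplicity $k_l$, so necessarily $W_l(z',z)=\bigl(z-\alpha_l(z')\bigr)^{k_l}$, where $\alpha_l(z')$ is $w_l$ minus $1/k_l$ times the subleading coefficient of $W_l$; hence $\alpha_l$ is analytic on $U'$ with $\alpha_l(z_0')=w_l$. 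Combining the $m$ local factorisations and using that no zeros of $f(z',\cdot)$ lie outside $\bigcup_l V_l$, we obtain $f(z',z)=\prod_{l=1}^m\bigl(z-\alpha_l(z')\bigr)^{k_l}\,\Phi(z',z)$ on $U'\times D$ with $\Phi$ analytic (the apparent singularities along the local zero loci are removable via the $\Phi_l$) and zero-free.

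Next I would globalise the branches $\alpha_l$. The unordered $m$-tuple $\{\alpha_1(z'),\dots,\alpha_m(z')\}$ depends only on $z'$, being the zero set of $f(z',\cdot)$, so its elementary symmetric functions are single-valued analytic functions on all of $D'$; they define a monic polynomial $Q(z',z)=\prod_{l=1}^m\bigl(z-\alpha_l(z')\bigr)$ in $z$ with analytic coefficients on $D'$ whose discriminant never vanishes, i.e.\ $Q(z',\cdot)$ has $m$ simple roots for every $z'$. Consequently $z'\mapsto(\text{ordered tuple of roots of }Q(z',\cdot))$ is an $m!$-fold covering of $D'$; since $D'$ is simply connected this covering is trivial, so it admits a global section, producing globally defined analytic functions $\alpha_1,\dots,\alpha_m\colon D'\to D$ extending the local branches. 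The multiplicities $k_l$ are locally constant, hence constant along these globally labelled branches because $D'$ is connected. I would then set $\Phi(z',z):=f(z',z)\big/\prod_{l=1}^m\bigl(z-\alpha_l(z')\bigr)^{k_l}$, which is analytic and non-vanishing on $G$ away from the zero set of $f$; the local factorisations of the first paragraph show that $\Phi$ extends analytically and without zeros across that zero set, so $\Phi$ is analytic and zero-free on all of $G$, giving the claimed identity.

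The main obstacle I anticipate is the bookkeeping in the globalisation step: one must argue that the multivalued ``function of zeros'' has no genuine branch points, so that the monodromy acts merely by permuting the $m$ labels, and only then can simple connectedness of $D'$ be invoked to trivialise it. Here the hypothesis that the \emph{number of distinct} zeros is exactly $m$ on all of $D'$ is essential, since it forbids roots from coalescing. A subsidiary point — that no zero of $f(z',\cdot)$ escapes toward $\partial D$ as $z'$ varies — is handed to us for free by the same hypothesis, so the genuine work is confined to the covering-space/monodromy argument together with the routine check that the quotient $\Phi$ is analytic and non-vanishing across the zero locus.
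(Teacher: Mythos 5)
Your proof is correct. The paper itself gives no proof of this theorem, only the remark that it is a direct consequence of the Weierstrass preparation theorem together with a pointer to Chirka's book; your argument is a sound and complete unpacking of precisely that remark, combining local Weierstrass preparation and Rouch\'e counting (where the hypothesis of a constant number of \emph{distinct} zeros forces the Weierstrass polynomial to degenerate to a single power $(z-\alpha_l(z'))^{k_l}$ and rules out zeros escaping to $\partial D$) with a covering-space trivialisation over the simply connected base $D'$ to patch the local branches into global analytic ones.
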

The following version of the analytic implicit function theorem is well suited for our purposes.
It can be obtained as a corollary of Theorem \ref{Theorem_holomorphic_dependence_of_zeros} by applying 
well-known ideas from the proof of Rouch\'e's theorem, cf.\ \cite[page 125]{conway.78}. 
\begin{theorem}\label{THEOREM_Implicit_Function}
For $d\geq 2$ let $f$ be an analytic function on a domain $G = D' \times D  \subset \cc^d$ with simply connected $D' \subset \cc^{d-1}$.
Assume that $f(z'_0,z_0)=0$ and $D_df(z'_0,z_0)\not= 0$ for some $(z'_0,z_0)\in G$.
Then there are open neighbourhoods $U(z_0') \subset D'$ and $V(z_0)\subset D$ of $z_0'$ and $z_0$ 
as well as an analytic function
$g : U(z_0') \to V(z_0) $ such that the equivalence
\[ f(z',z) = 0 \Leftrightarrow z = g(z') \]
holds for all $z' \in U(z_0')$ and $z \in V(z_0)$.
\end{theorem}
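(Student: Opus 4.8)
The plan is to deduce Theorem \ref{THEOREM_Implicit_Function} from the factorisation Theorem \ref{Theorem_holomorphic_dependence_of_zeros}, the only missing ingredient being a Rouch\'e-type argument that localises the zero and shows it is unique and simple.

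First I would use $D_df(z_0',z_0)\neq 0$ to conclude that $f(z_0',\cdot)$ is not identically zero on $D$ and that $z_0$ is an isolated, in fact simple, zero of $f(z_0',\cdot)$. Hence I can pick $r>0$ with $\overline{B(z_0,r)}\subset D$ such that $z_0$ is the only zero of $f(z_0',\cdot)$ in $\overline{B(z_0,r)}$; in particular $f(z_0',z)\neq 0$ for $|z-z_0|=r$. Since $f$ is continuous and the circle $\{|z-z_0|=r\}$ is compact, there is $\delta>0$ with $f(z',z)\neq 0$ whenever $|z-z_0|=r$ and $\|z'-z_0'\|<\delta$. Shrinking $\delta$ I may assume $U(z_0'):=\{z'\in\cc^{d-1}:\|z'-z_0'\|<\delta\}\subset D'$ is an open ball, hence simply connected, and I set $V(z_0):=B(z_0,r)$.

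Next I would invoke the argument principle in the last variable: the zero count
\[
N(z') := \frac{1}{2\pi i}\oint_{|z-z_0|=r}\frac{D_df(z',z)}{f(z',z)}\,dz
\]
is well defined for $z'\in U(z_0')$ (the integrand is continuous and $f$ has no zero on the contour), it depends continuously on $z'$, and it is integer-valued, hence constant on the connected set $U(z_0')$. At $z'=z_0'$ it equals $1$ because $z_0$ is a simple zero. Therefore, for every $z'\in U(z_0')$, the function $f(z',\cdot)$ has exactly one zero in $V(z_0)$ counted with multiplicity; in particular exactly one distinct zero, and that zero has multiplicity one.

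Finally I would apply Theorem \ref{Theorem_holomorphic_dependence_of_zeros} on the domain $U(z_0')\times V(z_0)$ with $m=1$: there are an analytic function $g:U(z_0')\to V(z_0)$, a positive integer $k_1$, and an analytic nowhere-vanishing $\Phi:U(z_0')\times V(z_0)\to\cc$ with $f(z',z)=(z-g(z'))^{k_1}\Phi(z',z)$; the multiplicity computed above forces $k_1=1$. Since $\Phi$ has no zeros, for $z'\in U(z_0')$ and $z\in V(z_0)$ we get $f(z',z)=0\iff z=g(z')$, and evaluating at $z'=z_0'$ gives $g(z_0')=z_0$, which is the assertion. The one point requiring a little care is the verification that the ``exactly $m$ distinct zeros'' hypothesis of Theorem \ref{Theorem_holomorphic_dependence_of_zeros} holds uniformly over $U(z_0')$; this is precisely what the argument-principle count above supplies, and it is the only real step, the remainder being bookkeeping.
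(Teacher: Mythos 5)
Your proof is correct and follows essentially the same route as the paper's: the argument principle in the last variable shows the zero count in a small disc is locally constant and equal to $1$, and Theorem \ref{Theorem_holomorphic_dependence_of_zeros} then furnishes the analytic curve $g$. The only cosmetic difference is that where you assert the winding-number integral ``depends continuously on $z'$,'' the paper spells out the uniform estimate justifying that claim, while you are somewhat more explicit about deducing the full two-sided equivalence from the nowhere-vanishing factor $\Phi$.
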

\begin{proof} 
Due to $D_df(z'_0,z_0)\not= 0$ there is some $\epsilon_1 > 0$
such that the ball $B(z_0,\epsilon_1)$ is contained in $D$ and $f(z_0',z)\not= 0$ holds for any $z \in B(z_0,\epsilon_1) \setminus \{z_0\}$. 
Moreover, there are constants $c, \epsilon_2 > 0$ such that $B(z_0',\epsilon_2)$ is contained in $D'$ and 
$ \vert f(z',z) \vert > c$ holds for any $z' \in B(z_0',\epsilon_2)$ and any $z \in \cc$ with $\vert z - z_0 \vert = \epsilon_1 $. 
By the choice of $\epsilon_1$ and the argument principle we have
\begin{align*}
\frac{1}{2\pi i}\oint_{\vert z - z_0 \vert = \epsilon_1} \frac{D_df(z_0',z)}{f(z_0',z)} dz = 1.
\end{align*}
The triangle inequality for line integrals yields 
\begin{align*}
\MoveEqLeft{\sup_{\vert z' - z_0' \vert < \epsilon_2 / n} \left\vert  1 - \frac{1}{2\pi i}\oint_{\vert z - z_0 \vert = \epsilon_1} \frac{D_df(z',z)}{f(z',z)} dz   \right\vert}\\
&= \frac{1}{2\pi } \sup_{\vert z' - z_0' \vert < \epsilon_2 / n} \left\vert  \oint_{\vert z - z_0 \vert = \epsilon_1} \frac{D_df(z_0',z)f(z',z)  - D_df(z',z) f(z_0',z)}{f(z_0',z)f(z',z)} dz   \right\vert\\ 
&\leq \alpha \sup_{ \substack {\vert z' - z_0' \vert < \epsilon_2 / n  \\ \vert z - z_0 \vert = \epsilon_1 }}  \left\vert  D_df(z_0',z) f(z',z) -  D_df(z',z) f(z_0',z)  \right\vert ,
\quad n\in\nn
\end{align*}
for  some  $\alpha\in(0,\infty)$ which does not depend on $n$. 
Since $f$ and its derivatives are continuous, we conclude that the right-hand side of this inequality
converges to $0$ as $n$ tends to infinity. Moreover,
$\frac{1}{2\pi i}\oint_{\vert z - z_0 \vert = \epsilon_1} \frac{D_df(z',z)}{f(z',z)} dz $ 
is integer-valued. Consequently there is some $n_0 \in \nn$ such that 
\[ \frac{1}{2\pi i}\oint_{\vert z - z_0 \vert = \epsilon_1} \frac{D_df(z',z)}{f(z',z)} dz    = 1 \]
for any $z' \in B(z_0',\epsilon_2 / n_0)$. Put differently, for any $z' \in B(z_0',\epsilon_2 / n_0)$ the mapping $z \mapsto f(z',z)$ has exactly one 
zero within the set $B(z_0,\epsilon_1)$. By Theorem \ref{Theorem_holomorphic_dependence_of_zeros} there exists an analytic
function $g : B(z_0',\epsilon_2 / n_0) \to B(z_0,\epsilon_1)$ with $ f(z',g(z')) = 0$.
\end{proof}

Proofs of the following results can be found in \cite[Section 5.1]{lenga.17}.
\begin{lemma}\label{LEMMA_norm_and_quotient_of_normal_pdf}
Set 
\begin{equation}\label{e:normpdf}
 \nno(\mu,\sigma^2, y):={1\over\sqrt{2\pi\sigma^2}}\exp\biggl(-{(x-\mu)^2\over2\sigma^2}\biggr).
\end{equation}
\begin{enumerate}
\item 
For any $y\in \rr, \mu, \sigma^2  \in \cc$ with $\Re\sigma^2 > 0$ we have
\begin{align*}
 \vert \nno(\mu,\sigma^2,y) \vert &= \frac{\exp\!\left( 
- \frac { \Re \sigma^2}{2\vert\sigma^2\vert^2}  \left( y -   \Re\mu - \frac{ \Im\mu \Im\sigma^2 }{\Re\sigma^2}  \right)^2 
  +\frac{(\Im \mu)^2}{2 \Re \sigma^2 }  \right) }{ \sqrt{2\pi\vert \sigma^2 \vert}}.
\end{align*}
\item For any $\mu, \widetilde{\mu} \in \cc$ and $\sigma, \widetilde{\sigma} \in \cc \setminus \{0\}$ with $\sigma \not= \widetilde{\sigma}$ we have
\[ \frac{\nno(\mu,\sigma^2,y)}{\nno(\widetilde{\mu}, \widetilde{\sigma}^2,y)} = \frac{ \widetilde{\sigma} }{\sigma}\exp\!\left( -\frac{(y - A)^2}{2B} \right) 
\exp\!\left( - \frac{(\mu - \widetilde{\mu} )^2}{2(\sigma^2 - \widetilde{\sigma}^2)}  \right)\]  
with
\begin{align}\label{e:AB}
&A:= \frac{\widetilde{\mu} \sigma^2 - \mu \widetilde{\sigma}^2}{\sigma^2 - \widetilde{\sigma}^2},
&B:= \frac{\widetilde{\sigma}^2 \sigma^2}{\widetilde{\sigma}^2 - \sigma^2}.
\end{align}
\item For any $\mu, \widetilde{\mu} \in \cc$ and $\sigma \in \cc \setminus \{0\}$ we have
\[ \frac{\nno(\mu,\sigma^2,y)}{\nno(\widetilde{\mu}, \sigma^2,y)} = \exp\!\left( y \frac{ \mu - \widetilde{\mu} }{\sigma^2 } \right) \exp\!\left( \frac{\widetilde{\mu}^2 - \mu^2}{2\sigma^2 }  \right).\]
\end{enumerate}
\end{lemma}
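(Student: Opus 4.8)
The plan is to establish all three identities by direct algebraic computation, since each is simply a statement about the Gaussian kernel $\nno$ of \eqref{e:normpdf} evaluated at complex parameters.

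For part~(1), the first step is to separate the modulus of the normalising constant from that of the exponential. Because $\Re\sigma^2>0$, the number $\sigma^2$ lies in the open right half-plane, so its principal square root is unambiguous and $\bigl|1/\sqrt{2\pi\sigma^2}\bigr|=1/\sqrt{2\pi|\sigma^2|}$. Using $|e^w|=e^{\Re w}$, it then remains to compute $\Re\bigl((y-\mu)^2/(2\sigma^2)\bigr)$ for real $y$. I would write $y-\mu=(y-\Re\mu)-i\,\Im\mu$ and $1/\sigma^2=\overline{\sigma^2}/|\sigma^2|^2$, multiply out, and take the real part, which produces a quadratic in $y$. Completing the square in $y$ and simplifying the remainder with $(\Im\sigma^2)^2-|\sigma^2|^2=-(\Re\sigma^2)^2$ collapses the leftover term to $(\Im\mu)^2/(2\Re\sigma^2)$, giving exactly the asserted expression.

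For part~(2), the ratio of the two normalising constants is $\sqrt{\widetilde{\sigma}^2/\sigma^2}=\widetilde{\sigma}/\sigma$ under the branch convention implicit in the statement, so what remains is to verify that $-(y-\mu)^2/(2\sigma^2)+(y-\widetilde{\mu})^2/(2\widetilde{\sigma}^2)$ equals $-(y-A)^2/(2B)-(\mu-\widetilde{\mu})^2/\bigl(2(\sigma^2-\widetilde{\sigma}^2)\bigr)$. Both sides are quadratic polynomials in $y$, so I would match the coefficients of $y^2$, of $y$, and the constant term: with $A,B$ as in \eqref{e:AB} the quadratic and linear coefficients coincide at once, and the constant term follows from a brief elimination. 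An alternative way to see the shape of $A$ and $B$ is to read the ratio as the classical product-of-Gaussians identity with the second variance parameter replaced by $-\widetilde{\sigma}^2$, so that $1/B=1/\sigma^2-1/\widetilde{\sigma}^2$, $A=B\bigl(\mu/\sigma^2-\widetilde{\mu}/\widetilde{\sigma}^2\bigr)$, and the cross term is $\exp\bigl(-(\mu-\widetilde{\mu})^2/(2(\sigma^2-\widetilde{\sigma}^2))\bigr)$. Part~(3) is the degenerate case $\sigma=\widetilde{\sigma}$ of the same manipulation: the prefactors cancel and the difference of exponents telescopes through $(y-\widetilde{\mu})^2-(y-\mu)^2=(\mu-\widetilde{\mu})(2y-\mu-\widetilde{\mu})$, which yields $y(\mu-\widetilde{\mu})/\sigma^2+(\widetilde{\mu}^2-\mu^2)/(2\sigma^2)$. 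I do not anticipate any genuine obstacle; the only points demanding attention are the consistent choice of branch of the complex square root in the prefactors and careful sign bookkeeping when separating real and imaginary parts in part~(1).
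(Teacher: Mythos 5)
Your proposal is correct, and it is essentially the only natural way to verify these identities. The paper itself does not give a proof; it defers to \cite[Section 5.1]{lenga.17}, so a line-by-line comparison is not possible here, but the elementary computations you outline are the standard ones. Two small remarks on your write-up. First, in part~(1) the key simplification goes through as you describe: with $p=\Re\sigma^2$, $q=\Im\sigma^2$, $a=\Re\mu$, $b=\Im\mu$, completing the square in $y-a$ leaves the remainder $\frac{p}{2(p^2+q^2)}\bigl(\frac{b^2q^2}{p^2}+b^2\bigr)=\frac{b^2}{2p}=\frac{(\Im\mu)^2}{2\Re\sigma^2}$, and your hint about $(\Im\sigma^2)^2-|\sigma^2|^2=-(\Re\sigma^2)^2$ is one convenient route to this; spelling out the completion of the square explicitly would make the step airtight. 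Second, you rightly flag the branch issue in writing the prefactor of part~(2) as $\widetilde\sigma/\sigma$: strictly speaking $\sqrt{2\pi\widetilde\sigma^2}/\sqrt{2\pi\sigma^2}$ equals $\pm\widetilde\sigma/\sigma$ depending on how $\sigma,\widetilde\sigma$ are chosen relative to the principal square roots of $\sigma^2,\widetilde\sigma^2$; the lemma's statement implicitly fixes this convention, and in the paper's applications $\sigma,\widetilde\sigma$ are positive reals, so no ambiguity arises. Your reduction of part~(3) via the difference of squares is exactly right.
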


\begin{lemma}\label{LEMMA_analyticity_sol_heat_eaqtion}
For $\widehat{r}\in\rr, \sigma>0$ and any measure $\mu \in \M^+(\rr)$ we define the generalised European value function
\[V(\theta,x):= \int \nno\bigl(x +\widehat{r} \theta, \sigma^2 \theta, y \bigr)  \mu(dy),\]
where $\phi$ is defined as in \eqref{e:normpdf}.
Suppose there exists some $(T,x_0)\in (0,\infty)\times\rr$ with $V(T,x_0)<\infty$.
Then the mapping $V$ is analytic on the open $\cc^2$-domain
\[\left\{ \theta \in \cc :  \sqrt{ (\Re \theta - T/2)^2 + (\Im \theta)^2 } < T/2 \right\} \times \cc.\]
\end{lemma}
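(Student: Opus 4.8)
The plan is to reduce joint analyticity to separate analyticity via Hartogs' theorem, and to obtain the latter by a standard Morera--Fubini argument once a uniform-in-$y$ domination of the integrand is available. First I would record the geometric fact that the disk $\{\theta\in\cc:\ \sqrt{(\Re\theta-T/2)^2+(\Im\theta)^2}<T/2\}$ coincides with $\{\theta\in\cc:\ \Re\theta>0,\ \Re(1/\theta)>1/T\}$: the defining inequality is equivalent to $T\,\Re\theta>|\theta|^2$, i.e.\ to $\Re\theta/|\theta|^2>1/T$, and this forces $\Re\theta>0$. Hence on this disk $\Re(\sigma^2\theta)=\sigma^2\Re\theta>0$, so $\sigma^2\theta$ avoids $(-\infty,0]$ and, with the principal branch of the square root, $(\theta,x)\mapsto\nno(x+\widehat r\theta,\sigma^2\theta,y)=(2\pi\sigma^2\theta)^{-1/2}\exp\!\bigl(-(x+\widehat r\theta-y)^2/(2\sigma^2\theta)\bigr)$ is, for each fixed $y\in\rr$, analytic in $(\theta,x)$ on the whole $\cc^2$-domain.

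Next I would establish the domination. Applying Lemma \ref{LEMMA_norm_and_quotient_of_normal_pdf}(2) with $(\mu,\sigma^2)=(x+\widehat r\theta,\sigma^2\theta)$ and $(\widetilde\mu,\widetilde\sigma^2)=(x_0+\widehat rT,\sigma^2T)$, and taking moduli with the help of Lemma \ref{LEMMA_norm_and_quotient_of_normal_pdf}(1), one obtains
$$\left|\frac{\nno(x+\widehat r\theta,\sigma^2\theta,y)}{\nno(x_0+\widehat rT,\sigma^2T,y)}\right| = C(\theta,x)\exp\!\left(-\frac{\Re B}{2|B|^2}\bigl(y-a(\theta,x)\bigr)^2\right),$$
where $1/B=\frac1{\sigma^2\theta}-\frac1{\sigma^2T}$, so $\Re(1/B)=\sigma^{-2}\bigl(\Re\theta/|\theta|^2-1/T\bigr)>0$ on the disk, whence $\Re B/|B|^2=\Re(1/B)>0$, and $C(\theta,x),a(\theta,x)$ are continuous in $(\theta,x)$ on the domain. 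Therefore $\sup_{y\in\rr}|\nno(x+\widehat r\theta,\sigma^2\theta,y)/\nno(x_0+\widehat rT,\sigma^2T,y)|\le C(\theta,x)$, a quantity bounded on every compact subset of the domain. Since $V(T,x_0)=\int\nno(x_0+\widehat rT,\sigma^2T,y)\,\mu(dy)<\infty$, this yields $|V(\theta,x)|\le C(\theta,x)\,V(T,x_0)$, so $V$ is finite and locally bounded on the domain; combining the domination with the continuity of the integrand and of $C$, dominated convergence shows $V$ is continuous there.

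With domination and continuity in hand I would finish by Morera plus Hartogs. Fix $x\in\cc$ and a closed triangle $\Delta$ inside the disk; then Fubini's theorem — legitimate because $\partial\Delta$ is compact and the integrand is dominated on it by $\bigl(\sup_{\theta\in\partial\Delta}C(\theta,x)\bigr)\nno(x_0+\widehat rT,\sigma^2T,\cdot)\in L^1(\mu)$ — gives $\oint_{\partial\Delta}V(\theta,x)\,d\theta=\int\bigl(\oint_{\partial\Delta}\nno(x+\widehat r\theta,\sigma^2\theta,y)\,d\theta\bigr)\mu(dy)=0$ by Goursat's theorem. Morera's theorem then makes $\theta\mapsto V(\theta,x)$ analytic on the disk; the same argument with $\theta$ and $x$ interchanged (the integrand being entire in $x$) makes $x\mapsto V(\theta,x)$ entire for each fixed $\theta$ in the disk. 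Thus $V$ is separately analytic and locally bounded on the $\cc^2$-domain, and Hartogs' theorem yields joint analyticity. The main obstacle is precisely the uniform-in-$y$ domination: everything hinges on identifying the disk $\{|\theta-T/2|<T/2\}$ with $\{\Re(1/\theta)>1/T\}$, which is exactly the condition making the complex Gaussian integrand decay in $|y|$ at least as fast as the reference density $\nno(x_0+\widehat rT,\sigma^2T,\cdot)$; the remaining ingredients are routine. A detailed version of the Morera--Fubini step is given in \cite[Section 5.1]{lenga.17}.
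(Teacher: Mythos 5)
Your proof is correct and follows essentially the same route the paper itself uses: the paper defers the detailed proof to \cite[Section 5.1]{lenga.17}, but it spells out exactly this Hartogs--plus--Morera--Fubini scheme (with the same reduction via Lemma \ref{LEMMA_norm_and_quotient_of_normal_pdf}(2) and the observation $\Re B>0$ on the disk) when establishing analyticity in Step 1 of the proof of Theorem \ref{t:THEOREM_Hauptsatz}. Your identification of the disk with $\{\Re(1/\theta)>1/T\}$ is just a compact restatement of the paper's direct computation that $\Re B(\theta,x)>0$; both give the needed uniform-in-$y$ domination by $\nno(x_0+\widehat rT,\sigma^2 T,\cdot)\in L^1(\mu)$.
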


\begin{lemma}\label{l:kolmogorov}
Let $\mu\in\M^+(\rr)$, $T>0$, $\sigma^2>0$, $\widehat r=r-\sigma^2/2$, $\mathscr A=(\sigma^2/2) D_{22}+\hat r D_2-r$.
Then 
\[\Psi(\theta,x):=\int{\nno(x+\hat r\theta,\sigma^2\theta,y)\over\nno(x+\hat rT,\sigma^2T,y)}\mu(dy)\]
satisfies $(D_1-\mathscr A)\Psi=0$ on $(0,T)\times\rr$.
\end{lemma}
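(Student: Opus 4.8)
The plan is to recognise $\Psi$ as the generalised European value function $v_{\eu,\mu}$ of \eqref{e:mu_preis_op} (up to the normalising factors appearing in its definition) and to establish the equation either by differentiating the Gaussian kernel directly or, more slickly, via the martingale property of the discounted value process.

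For the direct route I would fix $y\in\rr$ and note that $(\theta,x)\mapsto\nno(x+\widehat r\theta,\sigma^2\theta,y)$ is precisely the transition density of the diffusion $dX_t=\widehat r\,dt+\sigma\,dW_t$. A one-line computation with Gaussian derivatives, performed on the explicit formula \eqref{e:normpdf}, shows that $u(\theta,x):=\nno(x+\widehat r\theta,\sigma^2\theta,y)$ solves Kolmogorov's backward equation $D_1u=\widehat r\,D_2u+\tfrac{\sigma^2}{2}D_{22}u$ on $(0,\infty)\times\rr$. Multiplying by the discount factor $e^{-r\theta}$ turns this into $(D_1-\mathscr A)(e^{-r\theta}u)=0$, and since the denominator normalising $\Psi$ is independent of $(\theta,x)$, the whole integrand $k(\cdot,\cdot,y)$ of $\Psi$ satisfies $(D_1-\mathscr A)k(\cdot,\cdot,y)=0$ on $(0,T)\times\rr$ for every $y$. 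It then suffices to interchange $\int\cdot\,\mu(dy)$ with the partial derivatives $D_1,D_2,D_{22}$. I would justify this exactly as in the proof of Lemma \ref{LEMMA_analyticity_sol_heat_eaqtion}: the explicit Gaussian formulae of Lemma \ref{LEMMA_norm_and_quotient_of_normal_pdf}(1)--(2) show that $k$ and each of $D_1k,D_2k,D_{22}k$ is, on a fixed compact neighbourhood of any point of $(0,T)\times\rr$, bounded by a single finite constant uniformly in $y\in\rr$; since $\Vert\mu\Vert<\infty$, dominated convergence legitimises all three interchanges, and $(D_1-\mathscr A)\Psi=0$ follows by integrating the pointwise identity against $\mu$.

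Alternatively one can avoid the interchange altogether: by Lemma \ref{LEMMA_analyticity_sol_heat_eaqtion} (whose hypothesis holds automatically here, $\nno$ being bounded and $\Vert\mu\Vert<\infty$) the function $v_{\eu,\mu}$ is real-analytic, hence $C^\infty$, on $(0,T)\times\rr$, and the discounted value process $\bigl(e^{-rt}v_{\eu,\mu}(T_0-t,X_t)\bigr)_{t\in[0,T_0)}$ is a martingale for every $T_0\in(0,T]$ by the measure-valued analogue of Proposition \ref{p:PROPOSITION_1}(2), which rests only on the Chapman--Kolmogorov property of the Gaussian semigroup; applying Itô's formula, the drift coefficient of this martingale equals $-e^{-rt}\bigl((D_1-\mathscr A)v_{\eu,\mu}\bigr)(T_0-t,X_t)$ and must therefore vanish, which by continuity forces $(D_1-\mathscr A)v_{\eu,\mu}\equiv0$ on $(0,T_0)\times\rr$; letting $T_0\uparrow T$ finishes the argument. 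In either route the only non-routine ingredient --- and the one I would write out in detail --- is the uniform-in-$y$ Gaussian domination underpinning the differentiation under the integral, equivalently the smoothness of $v_{\eu,\mu}$ on the open domain required for Itô's formula; both are furnished by Lemmas \ref{LEMMA_norm_and_quotient_of_normal_pdf} and \ref{LEMMA_analyticity_sol_heat_eaqtion}, and everything else is elementary Gaussian calculus.
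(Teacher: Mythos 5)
Your direct route is the same one the paper uses (kernel satisfies the PDE, then differentiate under the integral with a uniform Gaussian domination), and the It\^o alternative is also sound, but there is a real loose end that you and the paper both gloss over and that you in fact half-notice. The undiscounted kernel $u(\theta,x)=\nno(x+\widehat r\theta,\sigma^2\theta,y)$ solves only $D_1u=\frac{\sigma^2}{2}D_{22}u+\widehat rD_2u$, so $(D_1-\mathscr A)u=ru\neq0$: the $-r$ in $\mathscr A$ is killed only after multiplying by $e^{-r\theta}$, exactly as you say. But the $\Psi$ in the statement of Lemma \ref{l:kolmogorov} carries no $e^{-r\theta}$, and its denominator is $\nno(x+\widehat rT,\sigma^2T,y)$, which depends on $x$, not on a fixed $x_0$. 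So your sentence ``the whole integrand $k(\cdot,\cdot,y)$ of $\Psi$ satisfies $(D_1-\mathscr A)k=0$'' does not follow from the line preceding it: it is true for $e^{-r\theta}u(\theta,x)/\nno(x_0+\widehat rT,\sigma^2T,y)$, which is the integrand of $v_{\eu,\mu}$ from \eqref{e:mu_preis_op} and \eqref{eq:Veu_in_T_umschreiben}, but it is false for the literal integrand of the $\Psi$ as printed. (A direct computation with $w=y-x$ shows $(D_1-\mathscr A)$ of the printed integrand produces leftover terms proportional to $(y-x)^2$ that do not integrate away against a generic $\mu$.) What Lemma \ref{l:kolmogorov} is actually invoked for in step~2 of the proof of Theorem \ref{t:THEOREM_Hauptsatz} is $v_{\eu,\mu^*}$, which has both the discount factor and $x_0$ in the denominator, so the lemma's $\Psi$ should be read with those two corrections. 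You should say this explicitly instead of silently assuming the denominator is constant; note that the paper's own one-line proof makes the same slip when it asserts that $\psi=\nno(x+\widehat r\theta,\sigma^2\theta,y)$ alone satisfies $(D_1-\mathscr A)\psi=0$. The domination argument and the interchange of integration and differentiation, as well as the It\^o route, are otherwise fine once the statement is read in its corrected form.
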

\begin{proof}
$\psi(\theta,x)=\nno(x+\hat r\theta,\sigma^2\theta,y)$ satisfies $(D_1-\mathscr A)\psi=0$ for fixed $y$.
The claim follows from interchanging differentiation and integration.
\end{proof}

\bibliographystyle{plain}
\bibliography{literatur}

\end{document}